\documentclass[a4paper]{scrartcl}
\usepackage{amssymb}
\usepackage{amsmath}
\usepackage{amsthm}
\usepackage{latexsym}
\usepackage{enumerate}
\usepackage{graphicx,color}
\usepackage{dsfont}
\usepackage{float}
\usepackage[textwidth=17mm]{todonotes}
\usepackage{placeins}
\usepackage[title]{appendix}
\usepackage{etoolbox}
\patchcmd{\appendices}{\quad}{: }{}{}
\usepackage[normalem]{ulem}
\usepackage{lmodern}

\DeclareFontFamily{OT1}{pzc}{}
\DeclareFontShape{OT1}{pzc}{m}{it}{<-> s * [1.10] pzcmi7t}{}
\DeclareMathAlphabet{\mathpzc}{OT1}{pzc}{m}{it}

\usepackage{xcolor}
\usepackage{enumitem}
\usepackage{microtype}

\usepackage[round,authoryear]{natbib}
\bibliographystyle{plainnat}
\newcommand*{\doi}[1]{\href{http://dx.doi.org/#1}{doi: #1}}
\setlength\bibsep{3\itemsep}

\usepackage[autostyle=true]{csquotes} 

\oddsidemargin=0in
\evensidemargin=0in
\textwidth=6.4in
\headheight=0pt
\headsep=0pt
\topmargin=0in
\textheight=9in

\makeatletter
\newcommand*\bigcdot{\mathpalette\bigcdot@{.5}}
\newcommand*\bigcdot@[2]{\mathbin{\vcenter{\hbox{\scalebox{#2}{$\m@th#1\bullet$}}}}}
\makeatother

\numberwithin{equation}{section}

\numberwithin{figure}{section}

\theoremstyle{plain}
\newtheorem{theorem}{Theorem}[section]
\newtheorem{proposition}[theorem]{Proposition}

\newtheorem{lemma}[theorem]{Lemma}

\theoremstyle{definition}

\newtheorem{assumption}[theorem]{Assumption}

\newtheorem{remark}[theorem]{Remark}

\usepackage{tikz}
\usetikzlibrary{arrows,positioning,calc}
\usetikzlibrary{decorations.pathreplacing}
\tikzset{
    >=stealth',
    punkt/.style={
           rectangle,
           rounded corners,
           draw=black, thick,
           text width=5em,
           minimum height=2em,
           text centered},
    punktm/.style={
           rectangle,
           rounded corners,
           draw=black, thick,
           text width=5.5em,
           minimum height=2em,
           text centered},
    punktl/.style={
           rectangle,
           rounded corners,
           draw=black, thick,
           text width=7em,
           minimum height=2em,
           text centered},
    punktll/.style={
           rectangle,
           rounded corners,
           draw=black, thick,
           text width=7.5em,
           minimum height=2em,
           text centered},           
    pil/.style={
           ->,
           shorten <=4pt,
       shorten >=4pt
    },
    pildotted/.style={
           ->,
           shorten <=4pt,
           shorten >=4pt,
  dotted,
  },
    punktf/.style={
           rectangle,
           text width=4.0em,
           minimum height=1.5em,
           text centered},
    punktfTop/.style={
           rectangle,
           text width=4.0em,
           minimum height=1.5em,
           text centered,
           append after command={
               [thick,shorten >=0.2bp, shorten <=0.2bp]
               (\tikzlastnode.north west)edge(\tikzlastnode.north east)
}
    },
    punktfBot/.style={
           rectangle,
           text width=4.0em,
           minimum height=1.5em,
           text centered,
           append after command={
               [thick,shorten >=0.2bp, shorten <=0.2bp]
               (\tikzlastnode.south west)edge(\tikzlastnode.south east)
            }
    }
}

\usepackage{pgfplots}
\pgfplotsset{width=7cm,compat=1.18}

\newcommand{\indep}{\perp \!\!\! \perp}

\colorlet{lightred}{red!20}
\colorlet{lightgray}{gray!25}
\definecolor{platinum}{RGB}{229,228,226}
\definecolor{ash}{RGB}{178,190,181}
\definecolor{lightblue}{HTML}{b3cde0}

\newcommand\xqed[1]{%
  \leavevmode\unskip\penalty9999 \hbox{}\nobreak\hfill
  \quad\hbox{#1}}
\newcommand\demormk{\xqed{$\triangledown$}}

\newcommand\demoas{\xqed{$\diamond$}}

\makeatletter
\newcommand{\pushright}[1]{\ifmeasuring@#1\else\omit\hfill$\displaystyle#1$\fi\ignorespaces}
\newcommand{\pushleft}[1]{\ifmeasuring@#1\else\omit$\displaystyle#1$\hfill\fi\ignorespaces}
\makeatother

\usepackage{hyperref}

\makeatletter
\DeclareFontFamily{OMX}{MnSymbolE}{}
\DeclareSymbolFont{MnLargeSymbols}{OMX}{MnSymbolE}{m}{n}
\SetSymbolFont{MnLargeSymbols}{bold}{OMX}{MnSymbolE}{b}{n}
\DeclareFontShape{OMX}{MnSymbolE}{m}{n}{
    <-6>  MnSymbolE5
   <6-7>  MnSymbolE6
   <7-8>  MnSymbolE7
   <8-9>  MnSymbolE8
   <9-10> MnSymbolE9
  <10-12> MnSymbolE10
  <12->   MnSymbolE12
}{}
\DeclareFontShape{OMX}{MnSymbolE}{b}{n}{
    <-6>  MnSymbolE-Bold5
   <6-7>  MnSymbolE-Bold6
   <7-8>  MnSymbolE-Bold7
   <8-9>  MnSymbolE-Bold8
   <9-10> MnSymbolE-Bold9
  <10-12> MnSymbolE-Bold10
  <12->   MnSymbolE-Bold12
}{}

\let\llangle\@undefined
\let\rrangle\@undefined
\DeclareMathDelimiter{\llangle}{\mathopen}%
                     {MnLargeSymbols}{'164}{MnLargeSymbols}{'164}
\DeclareMathDelimiter{\rrangle}{\mathclose}%
                     {MnLargeSymbols}{'171}{MnLargeSymbols}{'171}
\makeatother

\newlength{\leftstackrelawd}
\newlength{\leftstackrelbwd}
\def\leftstackrel#1#2{\settowidth{\leftstackrelawd}%
{${{}^{#1}}$}\settowidth{\leftstackrelbwd}{$#2$}%
\addtolength{\leftstackrelawd}{-\leftstackrelbwd}%
\leavevmode\ifthenelse{\lengthtest{\leftstackrelawd>0pt}}%
{\kern-.5\leftstackrelawd}{}\mathrel{\mathop{#2}\limits^{#1}}}

\newcommand*\diff{\mathop{}\!\mathrm{d}}

\newcommand{\stkout}[1]{\ifmmode\text{\sout{\ensuremath{#1}}}\else\sout{#1}\fi}

\DeclareMathOperator*{\argmax}{arg\,max}

\usepackage{authblk}

\definecolor{ForestGreen}{RGB}{34,139,34}

\title{A multistate approach to disability insurance reserving with information delays} 

\author[1,2,$\star$]{Oliver Lunding Sandqvist}

\affil[1]{\footnotesize PFA Pension, Sundkrogsgade 4, DK-2100 Copenhagen \O, Denmark.}
\affil[2]{\footnotesize Department of Mathematical Sciences, University of Copenhagen, Universitetsparken 5, DK-2100 Copenhagen \O, Denmark.}
\affil[$\star$]{\footnotesize Corresponding author. E-mail: \href{mailto:oliver.s@math.ku.dk}{oliver.s@math.ku.dk}.}

\date{\vspace{-8mm}}

\begin{document}

\maketitle

\begin{abstract}

Disability insurance claims are often affected by lengthy reporting delays and adjudication processes. The classic multistate life insurance modeling framework is ill-suited to handle such information delays since the cash flow and available information can no longer be based on the biometric multistate process determining the contractual payments. We propose a new individual reserving model for disability insurance schemes which describes the claim evolution in real-time. Under suitable independence assumptions between the available information and the underlying biometric multistate process, we show that these new reserves may be calculated as natural modifications of the classic reserves. For estimation of the model constituents, we employ the procedure proposed in~\citet{Buchardt.etal:2023b}. A real data application shows the practical relevance of our concepts and results.
\end{abstract}

\vspace{5mm}

\noindent \textbf{Keywords:} Multistate life insurance; Claims reserving; Incurred-but-not-reported; Reported-but-not-settled.

\vspace{5mm}

\noindent \textbf{2020 Mathematics Subject Classification:} 91G05; 60G57; 62N02; 62P05; 91G60.

\vspace{2mm}

\noindent \textbf{JEL Classification:} G22; C51; C02.

\vspace{5mm}

\section{Introduction} \label{sec:Introduction}

Reserves are fundamental to the insurance industry, and recently, reserving for disability insurance schemes has become a topic of considerable interest for Danish insurers due to new regulation, worsening risks, and heightened price competition. Disability insurance and similar insurance schemes such as workers' compensation insurance generally work by covering disabilities of an insured that occur in a prespecified \textit{coverage period} in exchange for a premium. Disabilities are covered in the sense that benefits are paid out if the insured becomes disabled with a disability that qualifies for a payout per the criteria specified in the insurance contract. The most prominent schemes pay benefits as long as the insured is disabled and is below the retirement age to compensate for lost wages. Usually, disability benefits will not be paid starting from disablement, but only once the disability has lasted a period of time called the \textit{qualifying period} or \textit{waiting period}. In fact, many disabilities will start payout even later due to reporting and adjudication delays. Reporting delays are defined as the time between the occurrence and reporting of an event. For Danish insurance companies, disabilities generally have long reporting delays compared to other insurance events such as deaths. The adjudication delay is defined as the time between when a claim is reported and when it is adjudicated. During the adjudication process, the insurance company evaluates whether the insured is eligible for disability benefits or not. This can be a lengthy process when there is a need to obtain further clinical assessments of the claimed disability. 

These characteristics situate disability insurance somewhere between traditional life and non-life insurance schemes:\ the long cash flows associated with the possibility of paying benefits from disablement until retirement are similar to the characteristics of other life insurance schemes, while information delays are features that have so far primarily been explored in the non-life part of the insurance reserving literature. In this paper, we propose a model that is tailored to accommodate both of these features. 

Our proposed model can in many ways be seen as an extension of the classic semi-Markov models that have dominated the actuarial literature on disability insurance, see for example~\citet{Janssen:1966},~\citet{Hoem:1972},~\citet{Haberman:Pitacco:1998},~\citet{Helwich:2008},~\citet{Christiansen:2012}, and~\citet{Buchardt.etal:2015}. Such models have also been used extensively in the biostatistical literature, see for example~\citet{Lagakos.etal:1978},~\citet{Andersen.etal:1993},~\citet{Dabrowska:1995},~\citet{Hougaard:2000}, and~\citet{Spitoni.etal:2012} as well as the references therein. The semi-Markov models have been popular in the disability insurance literature for several reasons. First and foremost, they allow the intensity of mortality and reactivation from a disability to depend on the duration since disablement, which is crucial in practice. In addition, the contractual payments in some cases depend on the duration since the last jump, for example due to a qualifying period, which can be handled in a semi-Markov setup. Finally, semi-Markov models, and multistate models more generally, provide a natural and parsimonious way to represent the information contained in an insurance contract and to capture the intertemporal dependencies of the cash flow. We seek to retain these attractive properties while accommodating the effects of reporting and adjudication delays. 

As noted in~\citet{Buchardt.etal:2023a}, the fundamental challenge in this endeavor is that contractual payments refer to when events occur (e.g.\ the time of death or the time of disablement) without any regard to when this information is observed by the insurer. On the other hand, the usual multistate life insurance modeling literature assumes that one can observe the process driving the contractual payments fully and in real-time. Therefore, when the information needed to determine the contractual payments at a given time is not available to the insurer at that time due to information delays, the problem falls outside the usual multistate life insurance modeling framework. 

The paper~\citet{Buchardt.etal:2023a} has established a framework intended to deal with these complications, distinguishing between and linking the so-called valid time model, which models when events occur, and the so-called transaction time model, which models what is observed by the insurer. While some relations between the models stay simple in all cases, the relation between the reserves can be almost arbitrarily complicated and hence has to be investigated in specific models. In their Example 5.8, they derive an explicit relation in a simple example, but remark: ``\textit{To capture the full picture of IBNR and RBNS reserving, one would need to explore more intricate transaction time models with both reporting delays and claim adjudications}''. Here, IBNR stands for incurred-but-not-reported while RBNS stands for reported-but-not-settled. In this paper, we do exactly this, applying the framework to derive explicit and tractable expressions for the reserves of fairly general disability insurance schemes under suitable assumptions. We also give detailed discussions on the reasonableness of the assumptions and the practical relevance of the results. The reserves are operationalized by employing the estimation procedure from~\citet{Buchardt.etal:2023b} who has studied parametric estimation of multistate models subject to reporting delays and adjudications. In addition to providing operational expressions for disability insurance reserves, a main contribution of the paper is to provide intuition for how transaction time information may affect the reserves in a realistic setting, allowing one to adjust the models when certain assumptions are not met, and serving as a basis for future work in this area. 

The way reporting delays and adjudication processes are incorporated in our model shares some similarities with parts of the non-life insurance literature on individual reserving models, especially those formulated in the recent string of papers~\citet{Crevecoeur.etal:2019},~\citet{Verbelen.etal:2022},~\citet{Crevecoeur.etal:2022a}, and~\citet{Crevecoeur.etal:2022b}. The first two papers explore estimation of the claim frequency subject to IBNR claims. Both assume an underlying Poisson process driving the claim frequency and form a thinned version by deleting claims that are unreported by the time of analysis. In the first paper, the maximum likelihood estimator is obtained by assuming piecewise constant rates, while in the second paper, one treats the deleted claims as missing under an EM-algorithm. The third paper explores reserving and estimation of RBNS claims by modeling the conditional distribution of the full claim development, consisting of all payments and auxiliary characteristics of the claim, given the historical development. The model is calibrated using (weighted) maximum likelihood estimation. The last paper explores reserving and estimation of both IBNR and RBNS, using much of the framework that had been developed in the previous papers. Their reserves do not have closed-form solutions so Monte-Carlo simulation is used. All the models are formulated in discrete time.

Comparing with our approach, a formal difference is that we formulate the models in continuous time. The effect of IBNR on claim frequency is treated in a similar manner, but additional survival probabilities appear in our multistate approach compared to the Poisson model. The primary difference regarding IBNR however stems from how the payments are treated. In the non-life insurance models, the conditional expectation of the ultimate payment given the reporting delay is computed using Monte-Carlo simulation of the full real-time development of the claim while we instead are able to use the known form of the contractual payments. For RBNS modeling,~\citet{Crevecoeur.etal:2022a} and~\citet{Crevecoeur.etal:2022b} similarly propose to model the full real-time development of a claim conditional on historical developments. Having to model the full development of a claim results in a larger number of model elements, and thus a greater risk of misspecification. This risk of bias accumulation is acknowledged in~\citet{Crevecoeur.etal:2022a} where it is suggested to re-scale each time layer of the model to ensure that the sum of the predictions equals the sum of the observed values in the training data. 

Our approach requires additional conditional independence assumptions between the observed information and the underlying biometric state process driving the contractual payments, but in return, one only needs two extra model elements in addition to what is usually modeled in the multistate approach, namely the reporting delay distribution and the adjudication probabilities. Furthermore, one obtains relatively simple closed-form expressions for the reserves, eliminating the need for Monte-Carlo simulation. The derivation of the reserves is based on stochastic analysis that falls outside of existing results and techniques, because we are led to analyze the biometric state process stopped at a random time that is not a stopping time with respect to the filtration of interest, namely the filtration generated by the biometric state process. Such complications did not arise in the simple model from Example 5.8 of~\citet{Buchardt.etal:2023a}, and the treatment of the resulting mathematical complexities is another main contribution of the paper.

It is also relevant to consider whether the reserves could be based on aggregate models (e.g., chain
ladder~\citet{Mack:1993,Mack:1999}) rather than individual reserving models given their popularity with practitioners, see e.g.~\citet{Lopez.etal:2018} and the references therein. For aggregate models to be applicable, steady-state assumptions have to hold on an aggregate level. Steady-state assumptions at a portfolio level are unsuitable for disability insurance since disability claims frequently lead to several decades of benefit payments causing the proportion of long-lasting disabilities in the portfolio to rise for many decades. Assuming that an aggregate reserving model was available, it would likely still suffer from certain robustness issues. For example, a model based on chain ladder would be slow to capture trends such as the sharp rise in mental health-related disabilities that has been observed in recent years, while it is straightforward to include a calendar time effect in the proposed reserving models. As another example, consider an IBNR reserve that arises as some transformation of the classic semi-Markov reserves for the policies that are currently in the portfolio. Then an influx of new policies would lead to an unwarranted increase in reserves; the aggregate IBNR reserve should initially remain unchanged since disabilities that occurred before entering the portfolio do not lead to disability benefits. Covariate shifts in the portfolio would also violate steady-state assumptions, while individual reserving models are robust to such shifts whenever the covariate is included in the model.

A general disadvantage of individual models is that they often lead to higher estimation risk since more elements have to be estimated. They may also lead to higher model risk since more assumptions are needed to construct the models. In this paper, we seek to accommodate the former by deriving models that do not require many new model elements. To accommodate the latter, we provide detailed discussions on how to adjust the models when central assumptions of the setup are violated. Methods for detecting deviations between the models and the realized outcomes are given in Remark~\ref{rmk:Runoff} and Theorem 5.10 of~\citet{Buchardt.etal:2023a}, making it possible to monitor the estimation and model risk. The proposed models thus possess many properties that could make them attractive for practitioners.
         
The paper is structured as follows. Section~\ref{sec:setup} describes our valid time and transaction time model for disability insurance schemes. Section~\ref{sec:reserving} concerns reserving and contains the main results. Estimation is discussed in Section~\ref{sec:estimation}. Section~\ref{sec:application} contains a real data application. Section~\ref{sec:conclusion} concludes. Lengthy proofs are deferred to Appendix~\ref{sec:StrongMarkovProof} and the straightforward extension to stochastic interest rates is given in Appendix~\ref{sec:stochasticInterest}.

\section{Setup} \label{sec:setup}

\subsection{Disability insurance in valid time} \label{sec:ValidTimeSetup}
Let $(\Omega,\mathcal{F},(\mathcal{F}_t)_{t \geq 0},\mathbb{P})$ be a filtered background probability space. The biometric state of the insured is governed by a non-explosive pure jump process $Y : \Omega \times \mathbb{R}_+ \mapsto \mathcal{J}$ on a finite state space $\mathcal{J} = \{1,2,...,J\}$ for $J \in \mathbb{N}$ with deterministic initial state $y_0$. Denote by $N$ the corresponding multivariate counting process with components $N_{jk} : \Omega \times \mathbb{R}_+ \mapsto \mathbb{N}_0$ $(j,k\in \mathcal{J}, k\neq j)$ given by
\begin{equation*}
N_{jk}(t)=\# \{ s\in (0,t] : Y_{s-}=j, \ Y_s=k \}.
\end{equation*}
For $\mathcal{A} \subseteq \mathcal{J}$, let $\tau_\mathcal{A} : \Omega \rightarrow \overline{\mathbb{R}}_+$ be the first hitting time of $\mathcal{A}$ such that $\tau_\mathcal{A} = \inf\{t \geq 0 : Y_t \in \mathcal{A}\}$. The information generated by $Y$ is represented by the filtration $\mathcal{F}^Y_t = \sigma(Y_s , s \leq t)$. We shall also need the future information $\mathcal{F}^{t,Y} = \sigma(Y_s , s \geq t)$. Let $U : \Omega \times \mathbb{R}_+ \mapsto \mathbb{R}_+$ be the duration in the current state,
$$U_t = t-\sup \{ s \in (0,t] : Y_s \neq Y_t \}.$$

A life insurance contract between the insured and the insurer is stipulated by the specification of the accumulated cash flow $B : \Omega \times \mathbb{R}_+ \mapsto \mathbb{R}$ representing the accumulated benefits less premiums. We refer to this as the valid time cash flow or the contractual payments, and assume it is on the usual semi-Markov form,
\begin{equation*}
    B(\diff t) = \sum_{j=1}^J 1_{(Y_{t-}=j)} B_{j,t-U_{t-}}(\diff t) + \sum_{\substack{j,k=1 \\ j \neq k} }^J b_{jk}(t,U_{t-}) N_{jk}(\diff t), \quad B(0) \in \mathbb{R}, 
\end{equation*}
where $B_{j,w} : \mathbb{R}_+ \mapsto \mathbb{R}$ $(j \in \mathcal{J}, w \geq 0)$ are measurable, càdlàg and of finite variation, and $b_{jk}(t,u)$ $(j,k \in \mathcal{J}, j \neq k)$ are measurable and bounded. Since $t-U_{t-}$ is piecewise constant, the above expression is well-defined. We bundle all the processes that determine the payments into $X_t=(t,Y_t,U_t)$. Note that $\mathcal{F}^X_t = \mathcal{F}^Y_t$ since $U$ is constructed from the history of $Y$. Like~\citet{Buchardt.etal:2023a}, we name the model for $X$ and $B$ the \textit{valid time model}. In this paper, we assume that the biometric state process $Y$ takes values in the state space $\mathcal{J}$ depicted in Figure~\ref{fig:Ystatespace}.

\begin{figure}[H]
	\centering
	\scalebox{0.75}{
	      \begin{tikzpicture}[node distance=4em, auto]
	\node[punkt] (g) {active};;
	\node[punktll, right=of g] (i1) {disabled $1$};
    \node[below=1em of i1] (i2) { $\vdots$ };
    \node[punktll, below=1.25em of i2] (in) { disabled $m$ };
    \node[punktl, right=of i1] (r) {reactivated};
	\node[punkt, below =3.5em of in] (dead) {dead}; 
    \node[below =0.5em of in] (dummysouth) {}; 
     \node[right=0.5em of i1] (dummyeast) {};
     \node[left=0.5em of i1] (dummywest) {}; 
     \node[anchor=north east, at=(g.north east)] {$a$};
     \node[anchor=north east, at=(i1.north east)] {$i_1$};
     \node[anchor=north east, at=(in.north east)] {$i_m$};
     \node[anchor=north east, at=(r.north east)] {$r$};
     \node[anchor=north east, at=(dead.north east)] {$d$};
     \node[right=0pt] at ($(i1.north east)+(-0.05,0.2)$) {$\mathcal{I}$};
	\path 
		($(g.south east)$) edge [pil, bend right = 40] ($(dead.west)$)
        ($(g.east)$) edge [pildotted, thick] ($(dummywest.west)$)
        ($(dummyeast.east)$) edge [pildotted, thick] ($(r.west)$)
        ($(r.south west)$) edge [pil, bend left = 40] ($(dead.east)$)
        ($(dummysouth.south)$) edge [pildotted, thick] ($(dead.north)$)
	; 
    \draw[thick, dotted] ($(i1.north west)+(-0.5,0.5)$) rectangle ($(in.south east)+(0.5,-0.5)$);
 
    \end{tikzpicture}
    }
    \caption{The state process $Y$ takes values in $\mathcal{J}=\{a,i_1,\dots,i_m,r,d\}$, being an illness-death model with $m$ disabled states $\mathcal{I}=\{i_1,\dots,i_m\}$ and a separate reactivated state. To reduce clutter, all transitions to and from $\mathcal{I}$ are illustrated as single dotted arrows. Transition between the disabled states is not possible.} 
    \label{fig:Ystatespace}
\end{figure}

\noindent Note that we have here labeled the states with letters instead of integers to stay consistent with the actuarial literature. This should cause no confusion in what follows. We assume $y_0 = a$ since only non-disabled are offered the insurance. We assume that $Y$ is a semi-Markov process with measurable transition hazards $\mu_{jk}: \mathbb{R}^2_+ \mapsto \mathbb{R}_+$ $(j,k \in \mathcal{J}, j \neq k)$ which are Lebesgue-integrable on compact subsets of $\mathbb{R}_+^2$ so that the intensity process for $N_{jk}$ is given by $\lambda_{jk}(t) = 1_{(Y_{t-} = j)}\mu_{jk}(t,U_{t-})$. That $Y$ is semi-Markov implies that $X$ is Markov. The assumption regarding the existence of transition hazards (as opposed to cumulative transition hazards) could be removed using the techniques of~\citet{Jacobsen:2006} or~\citet{Helwich:2008} and it would similarly not be difficult to allow for an uncountable number of disabled states e.g.\ $\mathcal{I}=(0,1]$ representing the degree of lost earning capacity. The choice and implications of the chosen state space are discussed in Remark~\ref{rmk:statespace}.

\begin{remark} (Valid time state space for disability insurance contract.) \label{rmk:statespace} \\
In the multistate modeling literature, one usually allows for a general finite state space. We restrict our attention to the particular state space depicted in Figure~\ref{fig:Ystatespace} because, as was noted in the introduction, the relation between the valid time and transaction time reserves can be highly model-specific. The state space is intended to be general enough to capture most common disability insurance schemes. The hierarchical structure is imposed to simplify the transaction time model construction by making it so that there is only one disability and reactivation time to keep track of, as well as making the implementation in Section~\ref{sec:application} easier since one can avoid implementing the semi-Markov Kolmogorov forward differential equations known from~\citet{Buchardt.etal:2015} and instead use Thiele's differential equations successively. 

Modeling disability insurance contracts using the model from Figure~\ref{fig:Ystatespace} implies that at most
one disability can occur, that the disability type does not change after disablement, and that a reactivation of this disability is permanent. For contracts where it is important to model temporary reactivations, one might instead prefer to use a non-hierarchical illness-death model where reactivations are modeled as jumps back into state $a$ instead of into the separate reactivated state $r$, see e.g.\ Figure 3 in~\citet{Helwich:2008} or Example 2.1 in~\citet{Christiansen:2012}.

When coverage periods are short, as is usually the case for disability insurance schemes, and the disability hazard is small, ignoring the possibility of several disabilities can be reasonable. Even if one uses the non-hierachical model, it can be complicated to allow for several distinct disabilities if the insurance contract includes a coverage period. To see this, consider the situation where the insured becomes disabled within the coverage period of a disability annuity, reactivates, and becomes disabled again outside of the coverage period. Whether the insured is qualified for disability payments for the second disability depends on whether or not it was caused by the disability event in the coverage period. 

The most natural way to capture this is to choose the disability event times to be those that lead to payout when estimating the disability hazard or to extend $X$ such that it contains information about which disability event is causing the current disablement. An alternative would be to let the payment rate in the disabled state be the average payment rate conditional on the historical development of $X$, see Remark~\ref{rmk:SimplifyingAssumptions} for more details. These approaches would all lead to models with an intricate dependence on the past coverage periods and the historical development of $X$. In Remark~\ref{rmk:SimplifyingAssumptions} we propose ways to obtain consistent reserves in situations where there may be several disabilities and/or transition between the disability types without having to use the non-hierarchical state space. \demormk
\end{remark}
\noindent In order to formulate the transaction time model in the next section, it is convenient to introduce some marked point process notation. In general, all our processes are assumed to be constructed according to the canonical approach of~\citet{Jacobsen:2006}, which among other things implies a specific regular conditional distribution used in the conditional distributions and conditional expectations. We note that $X$ takes values in a Borel-space which we denote $(E,\mathcal{E})$. Write $\nabla$ for the irrelevant mark and $\overline{E} = E \cup \{\nabla\}$. Let 
\begin{align*}
    K = \{ ( (t_n)_{n \in \mathbb{N}} , (x_n)_{n \in \mathbb{N}}  ) \in \overline{\mathbb{R}}_+^{\mathbb{N}} \times \overline{E}^{\mathbb{N}} : & \: t_1 \leq t_2 \leq \dots \uparrow \infty, t_n < t_{n+1} \: \textnormal{if} \: t_n < \infty, \\
    & \: \textnormal{and} \: x_n \in E \: \textnormal{iff} \: t_n < \infty \}
\end{align*}
denote the space of sequences of jump times and jump marks and let this be equipped with the $\sigma$-algebra $\mathcal{K}$ generated by the coordinate projections 
$$T_n^\circ ((t_k)_{k \in \mathbb{N}},(x_k)_{k \in \mathbb{N}}) = t_n, \quad X_n^\circ ((t_k)_{k \in \mathbb{N}},(x_k)_{k \in \mathbb{N}}) = x_n$$
for $n \in \mathbb{N}$. Let the stochastic process $H : \Omega \times \mathbb{R}_+ \mapsto K$ be the marked point process history of $X$. The value $H_t$ consists of the ordered sequences of jump times $(\tau_{\{j\}} \times 1_{(\tau_{\{j\}} \leq t)})_{j \in \mathcal{J}}$ and corresponding jump marks $(X_{\tau_{\{j\}}} \times  1_{(\tau_{\{j\}} \leq t)})_{j \in \mathcal{J}}$ followed by a sequence of $\infty$ and $\nabla$ respectively. Note that this representation of the jump times and jump marks only holds when the model is hierarchical. Since $X$ is a piecewise deterministic process, there exists a measurable function $f : K \times \mathbb{R}_+ \mapsto E$ with the property that $X_t = f_{H_t}(t)$.

\subsection{Disability insurance in transaction time} \label{subsec:disabilityrealtime}

As was pointed out in~\citet{Buchardt.etal:2023a}, it may sometimes be unreasonable to assume that the insurer has observed $\mathcal{F}^X_t$ at time $t$, since there can be reporting and processing delays for disability claims. In this case, we also cannot assume that $B(t)$ has been paid out at time $t$. Consequently, we introduce a stochastic process $\mathcal{Z} : \Omega \times \mathbb{R}_+ \mapsto S$, where $(S,\mathcal{S})$ is a Borel-space, which generates the insurer's available information $\mathcal{F}_t^\mathcal{Z}$. We furthermore introduce the stochastic process $\mathcal{B} : \Omega \times \mathbb{R}_+ \mapsto \mathbb{R}$ modeling the accumulated observed payments which by construction will be $\mathcal{F}^\mathcal{Z}$-adapted, measurable, càdlàg and of finite variation. We refer to it as the transaction time cash flow. The model for $\mathcal{Z}$ and $\mathcal{B}$, which we now specify, is called the \textit{transaction time model}.

\subsubsection*{Information}
As a first coordinate of $\mathcal{Z}$ we define the right-continuous pure jump process $Z^{(1)} : \Omega \times \mathbb{R}_+ \mapsto \mathcal{J}^{(1)}$ taking values in the state space $\mathcal{J}^{(1)}=\{1,2,3,4,5\}$ depicted in Figure~\ref{figure:Zstatespace}.
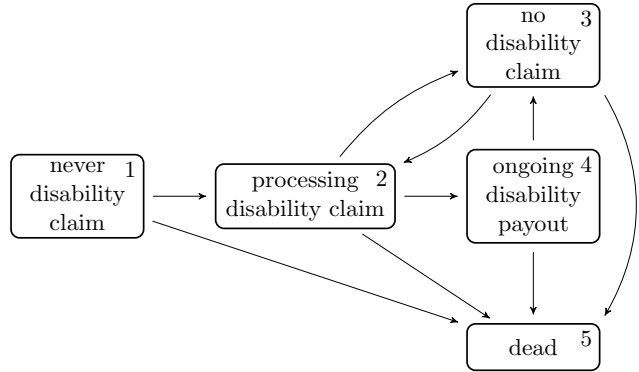
\begin{figure}[H] 
	\centering
	\scalebox{0.8}{
    \begin{tikzpicture}[node distance=2em and 0em]
    \node[punkt] at (-5.75, -2.5)    (0)   {never disability claim};
    \node[anchor=north east, at=(0.north east)]{$1$};  
    
    \node[punktl] at (-2, -2.5)    (1)   {processing disability claim}; \node[anchor=north east, at=(1.north east)]{$2$};    
    
    \node[punkt] at (1.75, 0)      (2)   {no disability claim};
    \node[anchor=north east, at=(2.north east)]{$3$};   
    
    \node[punkt] at (1.75, -2.5)    (3)   {ongoing disability payout};
    \node[anchor=north east, at=(3.north east)]{$4$};   
    
    \node[punkt] at (1.75, -5) (4)   {dead};
    \node[anchor=north east, at=(4.north east)]{$5$};   

    \path
        (0)     edge[pil]            node [above] {} (1)
        (1)     edge[pil, bend left = 15]            node [above] {} (2)
        (2)     edge[pil, bend left = 15]            node [above] {} (1)
        (1)     edge[pil]            node [above] {} (3)
        (3)     edge[pil]            node [above] {} (2)

        (0)     edge[pil]            node [above] {} (4)
        (1)     edge[pil]            node [above] {} (4)
        (2.south east)     edge[pil, bend left = 30]            node [above] {} (4.north east)
        (3)     edge[pil]            node [above] {} (4)
        ; 
\end{tikzpicture}}
\caption{State space $\mathcal{J}^{(1)}$ for the process $Z^{(1)}$.}
\label{figure:Zstatespace}
\end{figure}
\noindent The process $Z^{(1)}$ represents the state of the claim settlement and it holds that $Z_0^{(1)}=1$. We introduce another coordinate of $\mathcal{Z}$ denoted $Z^{(2)} : \Omega \times \mathbb{R}_+ \mapsto \mathbb{R}_+$ which represents the time of the disability event as reported by the insured in connection with a claim. We require $t \mapsto Z^{(2)}_t$ to be increasing and piecewise constant, and that its value can only increase upon a jump of $Z^{(1)}$ into state $2$. Furthermore, we require $Z^{(2)}_t \leq t$ and that $Z^{(2)}_t$ stays constant after a nonzero amount of disability benefits have been awarded. How disability benefits are awarded is formalized later in this section. The interpretation is that when a disability claim is reported, the insurer also reports at which past time the disability occurred. Furthermore, different disability claims are allowed, but only until one of the claims is awarded, and the claims must furthermore always be reported in the same order as their chronological ordering. We let $Z^{(2)}_0=0$ as a convention. We also introduce a coordinate $Z^{(3)} : \Omega \times \mathbb{R}_{+} \mapsto \mathcal{I}$ which represents
the disability type that is reported in connection with a claim, assume that it only changes
when $Z^{(2)}$ changes, and set $Z^{(3)}_0=i_1$ as a convention. Finally, denote the counting processes related to $Z^{(1)}$ by $N^{(1)}_{jk} : \Omega \times \mathbb{R}_+ \mapsto \mathbb{N}_0$ $(j,k \in \mathcal{J}^{(1)}, j \neq k)$ and denote by $T_{\{j\}}= \inf\{s \in [0,\infty) : Z^{(1)}_s = j\}$ the first hitting time of state $j$ by $Z^{(1)}$.

In addition to observing $Z=(Z^{(1)},Z^{(2)} ,Z^{(3)})$, the insurer observes what is being awarded to the insured; for example, whether a jump from state $2$ to state $3$ of $Z^{(1)}$ was accompanied by a payout of disability benefits in the form of \textit{backpay} or not. The term backpay refers to a payout of overdue payments that have been delayed by reporting and processing delays, and such payments appear in the transaction time cash flow $\mathcal{B}$ constructed later in this section.  

Knowing what is awarded to the insured however contains more information than simply knowing the realized payments since awarding disability benefits may not immediately lead to the commencement of payments if the adjudication is completed before the qualifying period ends. What is awarded to the insured is encoded in the bi-temporal stochastic process $H : \Omega \times \mathbb{R}^2_+ \mapsto K$, where $H^t_s$ is interpreted as the value of $H_s$ based on the information available at time $t$. We sometimes refer to $t$ as the observational time and $s$ as the historical time. We also introduce $X : \Omega \times \mathbb{R}^2_+ \mapsto E$ given by $X^t_s = (s,Y^t_s,U^t_s) = f_{H^t_s}(s)$, which is interpreted as the value of $X_s$ based on the available information at time $t$. Similarly, introduce the bi-temporal counting processes $N_{jk} : \Omega \times \mathbb{R}^2_+ \mapsto \mathbb{N}_0$ which we denote $N^t_{jk}(s)$ with analogous interpretation. In total, we let $\mathcal{Z}_t = (Z_t,H^t_t)$. 

To specify a model for $H^t_s$, we introduce an auxiliary stochastic process $G : \Omega \times \mathbb{R}_+ \mapsto \mathbb{R}_+$, where $G_t$ marks the beginning of the period where the insured would be eligible for additional disability when standing at time $t$, and which is given by
\begin{align*}
    \diff G_t &= \diff Z^{(2)}_{t} + 1_{(Z^{(1)}_t = 4)} \diff t +\sum_{k \in \{3,4,5\} } \delta^{2k}_t \diff N_{2k}^{(1)}(t), \hspace{0.5cm} G_0=0,
\end{align*}
for stochastic processes $\delta^{2k} : \Omega \times \mathbb{R}_+ \mapsto \mathbb{R}_+$ with $k \in \{3,4,5\}$ satisfying $\delta_t^{24} = t-G_{t-}$ and $\delta^{2k}_t \in [0,t-G_{t-})$ when $k \in \{3,5\}$. The fact that $Z^{(2)}_t$ was required to be constant after a non-zero amount of disability claims have been awarded corresponds to saying that it stays constant after time $t$ if $G_t > Z^{(2)}_t$. The interpretation of the specification of $G$ is that the insured is disabled if they have an ongoing disability payout (in other words:\ the insurance company is not able to retract disability benefits that have been paid out), which is captured by $G$ increasing with a rate of $1$ in state $4$ as well as $G$ jumping to the value $t$ if a jump from state $2$ to state $4$ occurs at time $t$. If the insured was eligible for additional disability benefits but is no longer disabled at the time of payout, this is captured by an increase in $G$ upon a jump from state $2$ to state $3$ or to state $5$. 

From $G$, we can create other processes of interest such as $W : \Omega \times \mathbb{R}_+ \mapsto \mathbb{R}_+$, being the number of time units the insured has been eligible for disability benefits using the current information, which is given by
$$W_t = G_t - Z^{(2)}_t.$$
We use $W$ instead of writing expressions in terms of $G$ whenever it eases interpretation. Some immediate properties are that $G_t$ and $W_t$ are increasing and $W_t \leq G_t \leq t$. 

 We now specify $H^t_s$. The bi-temporal process $H^t_s$ contains $(T_{\{5\}},d)$ on $(T_{\{5\}} \leq s, T_{\{5\}} \leq t)$, $(Z^{(2)}_t, Z^{(3)}_t)$ on $(Z^{(2)}_t \leq s, 0 < W_t)$, and $(G_t,r)$ on $(G_t \leq s, 0 < W_t, Z^{(1)}_t \neq 4, G_t \neq T_{\{5\}})$. Thus, events enter $H^t_s$ when the corresponding jump time exceeds $s$ and some $t$-related criterion is satisfied: Death has to have occurred before time $t$ for the death event to be included, the insured has to have been deemed eligible for disability benefits for a nonzero amount of time by time $t$ for the disability event to be included, and if it additionally holds that the payout of disability benefits has been stopped at time $t$ and this wasn't due to death then the reactivation event is also included. As detailed later in this section, the transaction time cash flow $\mathcal{B}$ is constructed such that it is always in accordance with $H^t_s$. Therefore, $H^t_s$ determines the payments and its specification is consequently essential to how the proposed transaction time model works.  
 
 To complete the modeling setup for the observations, we need to specify how $\mathcal{Z}$ is related to $X$. Note that $t \mapsto H^t_s$ is constant between jumps of $Z^{(1)}$, and assume $T_{\{5\}}$ is finite almost surely and that $H^t_s = H_s$ for $t \geq T_{\{5\}}$. In other words, $Z^{(1)}$ is the driver of new information arriving and the valid time process $X_s$ is the limit of the transaction time process $X^t_s$ when the observation time $t$ tends to infinity. It follows that the \textit{basic bi-temporal structure assumptions} introduced in~\citet{Buchardt.etal:2023a} are satisfied, and we may hence use their results. Note that with this specification of the transaction time model, once a disability claim has triggered some benefits, everything that happens afterward relates to this disability and no other disabilities can be reported. This is similar in spirit to how it was assumed that at most one disability could occur in the valid time model depicted in Figure~\ref{fig:Ystatespace}.

\begin{remark}(Granularity of the information.) \\
The above transaction time model presumes that the actuary has access to relatively granular information. It might be that the actuary does not receive information about reported claims, but is only notified about payouts. In that case, one may work with alternative transaction time models such as the $Z^{(1)}$ model depicted in Figure~\ref{figure:AlternativeZstatespace}.
\begin{figure}[H] 
	\centering
	\scalebox{0.8}{
    \begin{tikzpicture}[node distance=2em and 0em]
    \node[punktm] at (0, -2.5)    (0)   {no event}; \node[anchor=north east, at=(0.north east)]{$2$};    
    \node[punktm] at (4, -2.5)    (1)   {event 1};   
    \node[anchor=north east, at=(1.north east)]{$3$};  
    \node[punktm] at (8, -2.5)    (2)   {event 2};   
    \node[anchor=north east, at=(2.north east)]{$4$};
    \node[] at (11, -2.5)    (3)   {...};   
    \node[anchor=north east, at=(3.north east)]{};
    \node[punktm, below = 15mm of 0] (4)   {dead};
    \node[anchor=north east, at=(4.north east)]{$1$};   

    \path
        (0)     edge[pil]            node [above] {} (1)
        (1)     edge[pil]            node [above] {} (2)
        (2)     edge[pil]            node [above] {} (3)
        (0)     edge[pil]            node [above] {} (4) 
        (1)     edge[pil]            node [above] {} (4) 
        (2)     edge[pil]            node [above] {} (4) 
        ; 
\end{tikzpicture}}
\caption{Coarser state space for the process $Z^{(1)}$.}
\label{figure:AlternativeZstatespace}
\end{figure}
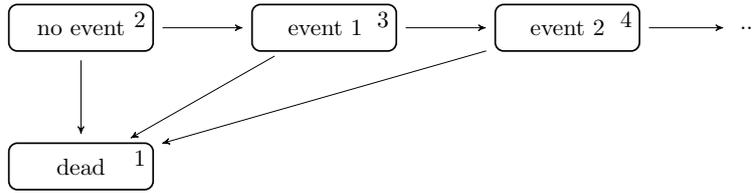
\noindent Here the different events could represent starting running payments, stopping running payments, or awarding backpay. It would be natural to let $Z^{(2)} : \Omega \times \mathbb{R}_+ \mapsto \{0,1\}$ be an indicator of whether there is running payments in the current state, and allow for a stochastic amount (including zero) of disability benefits to be awarded when jumping from one state to the next. The methods presented in this paper for the granular case can be adapted to handle this coarser case as well. Our methods do not apply if individual data is not available. \demormk
\end{remark}

\subsubsection*{Payments}
We now specify how the transaction time cash flow $\mathcal{B}$ is related to the valid time cash flow $B$. 
We first introduce the time value of money. A detailed treatment may be found in~\citet{Norberg:1990}. Let $\kappa : \mathbb{R}_+ \mapsto \mathbb{R}_+$ be some deterministic strictly positive càdlàg \textit{accumulation function} with initial value $\kappa(0)=1$. A common choice is $\kappa(t)=\exp\left(\int_{(0,t]} r(v) \diff v \right)$ for some deterministic integrable function $r : \mathbb{R}_+ \mapsto \mathbb{R}$ called the force of interest. The corresponding \textit{discount function} is $t \mapsto 1/\kappa(t)$. Introduce the auxiliary stochastic process $\mathcal{B}' : \Omega \times \mathbb{R}_+ \mapsto \mathbb{R}$ satisfying 
\begin{align*}
    \mathcal{B}'(t) =  \sum_{j=1}^J \int_{[0,t]} \frac{\kappa(t)}{\kappa(s)} 1_{(Y^t_{s-}=j)} B_{j,s-U^t_{s-}}(\diff s) +  \sum_{\substack{j,k=1 \\ j \neq k} }^J \int_{[0,t]} \frac{\kappa(t)}{\kappa(s)} b_{jk}(s,U^t_{s-}) N^t_{jk}(\diff s)
\end{align*}
which is the time $t$ value of the payments generated by $(X^t_s)_{s \leq t}$. We then specify the transaction time cash flow $\mathcal{B} : \Omega \times \mathbb{R}_+ \mapsto \mathbb{R}$ such that
\begin{align} \label{eq:cashflow}
    \int_{[0,t]} \frac{\kappa(t)}{\kappa(s)} \mathcal{B}(\diff s) = \mathcal{B}'(t).
\end{align} 
The payments $\mathcal{B}$ are constructed such that the accumulated payments in real-time are always congruent with the most recent marked point process history.  The way discounting is incorporated is closely connected to the notion of no arbitrage; the insurance company should not be able to keep the additional interest that they would earn by delaying the payout of benefits, so payments are accumulated with interest from the relevant historical time to the current observational time.

An explicit construction of a $\mathcal{B}$ satisfying Equation~\eqref{eq:cashflow} and further discussions are given in~\citet{Buchardt.etal:2023a}. That their definition of $\mathcal{B}$ satisfies Equation~\eqref{eq:cashflow} follows by using their Proposition 5.3 upon dividing with $\kappa(t)$ on both sides and taking the difference between evaluating in $0$ and $t$ to see
\begin{align*}
    \int_{(0,t]} \frac{1}{\kappa(s)} \mathcal{B}(\diff s) = \frac{1}{\kappa(t)}\mathcal{B}'(t)-\mathcal{B}(0)
\end{align*}
and then isolating $\mathcal{B}'(t)$. 

In Appendix~\ref{sec:stochasticInterest}, we extend the results of Section~\ref{sec:reserving} to stochastic interest rates that are independent of the valid and transaction time models and furthermore provide a way to validate the non-financial parts of the model. While these results are important for practical applications, they are relatively straightforward to derive and are somewhat orthogonal to the rest of the paper. Consequently, they have been deferred to the appendix.

\section{Reserving} \label{sec:reserving}

We now introduce the valid time and transaction time reserves, which are the focal point of the remainder of the paper. The present value in the valid time model $P : \Omega \times \mathbb{R}_+ \mapsto \mathbb{R}$ and the present value in the transaction time model $\mathcal{P} : \Omega \times \mathbb{R}_+ \mapsto \mathbb{R}$ are defined as 
\begin{equation*}
    P(t) = \int_{(t,\infty)} \frac{\kappa(t)}{\kappa(s)} B(\diff s), \quad \mathcal{P}(t) = \int_{(t,\infty)} \frac{\kappa(t)}{\kappa(s)} \mathcal{B}(\diff s),
\end{equation*}
respectively. We assume $P(t)$ and $\mathcal{P}(t)$ are integrable for any $t$. The corresponding valid time reserve $V : \Omega \times \mathbb{R}_+ \mapsto \mathbb{R}$ and transaction time reserve $\mathcal{V} : \Omega \times \mathbb{R}_+ \mapsto \mathbb{R}$ are then defined as 
\begin{equation*}
    V(t) = \mathbb{E}[P(t) \mid \mathcal{F}^X_t], \quad \mathcal{V}(t) = \mathbb{E}[\mathcal{P}(t) \mid \mathcal{F}_t^\mathcal{Z}].
\end{equation*}
Since $P(t)$ is $\mathcal{F}^{t,X}$-measurable and $X$ is Markov, we get $V(t) = \mathbb{E}[P(t) \mid X_t]$ almost surely. We also introduce the state-wise reserves $V_j : \mathbb{R}_+^2 \mapsto \mathbb{R}$ given by $V_j(t,u) = \mathbb{E}[P(t) \mid X_t = (t,j,u)]$ $(t \geq 0, j \in \mathcal{J}, u \geq 0)$, which are measurable functions satisfying $V_{Y_t}(t,U_t)=V(t)$ almost surely. The choice of $V_j(t,u)$ is not unique, but we follow the convention from the literature, which is to choose the version where the transition probabilities satisfy the Chapmann-Kolmogorov equations surely, confer with~\citet{Jacobsen:2006} p.\ 158-159 for the construction of such a version.  In most applications, the specific choice of the state-wise reserves will not matter, since they will always be evaluated in $X_t$. This is not the case in our application, so we make this choice explicit. The choice is also important if one is interested in path properties of $V(t)$, see e.g.~\citet{Christiansen:Furrer:2021}. Our choice of state-wise reserves agrees with that of~\citet{Christiansen:Furrer:2021}, confer with the proof of Theorem 5.10 in~\citet{Buchardt.etal:2023a}.

For $K \in \mathcal{A}$ we write 
\begin{align*}
    V_j(t,u ; K) = \mathbb{E}[P(t) \mid X_t = (t,j,u), K]
\end{align*}
for the reserve when we additionally condition on the event $K$. It is defined as $V_j(t,u ; K) = \mathbb{E}[1_{K} P(t) \mid X_t = (t,j,u)]/\mathbb{P}(K \mid X_t = (t,j,u))$ with the convention $0/0=0$. This equals $\mathbb{E}[P(t) \mid X_t = (t,j,u), 1_K]$ on the event $K$.

\subsubsection*{Categorization}
Using the setup introduced in Section~\ref{subsec:disabilityrealtime}, we can categorize the reserve at different points in time according to the usual claims reserving terminology known from e.g.~\citet{Norberg:1993}. 
\begin{itemize}
    \item On $(\textnormal{Settled}_t)=(Z^{(1)}_t = 5)$, the claim and reserve are classified as \textit{settled}, because even if some payments may remain in the dead state, the total claim size is known exactly. 
    \item On $(\textnormal{RBNS}_t)=(Z^{(1)}_t \in \{2,3,4\})$, the claim and reserve are \textit{reported-but-not-settled}.  
    \item On $(\textnormal{CBNR}_t) = (Z^{(1)}_t = 1)$, the claim and reserve are \textit{covered-but-not-reported}.
\end{itemize}
In the latter case, we will have an IBNR contribution for policies where the disability has already occurred, and a \textit{covered-but-not-incurred} (CBNI) contribution for policies where no disability has occurred yet. We therefore also introduce the events $(\textnormal{IBNR}_t) = (\textnormal{CBNR}_t, \tau_{\mathcal{I}} \leq t)$ and $(\textnormal{CBNI}_t) = (\textnormal{CBNR}_t, \tau_{\mathcal{I}} > t)$ which are however not known from the information available at time $t$. We also define $(\textnormal{RBNSr}_t) = (\textnormal{RBNS}_t,W_t>0)$ and $(\textnormal{RBNSi}_t) = (\textnormal{RBNS}_t,W_t=0)$, being RBNS where benefits have and have not been awarded respectively. In the former case, the time of disability is known and the time of reactivation is not fully determined, while in the latter case, both are not fully determined. The latter case is sometimes referred to as reported-but-not-paid in the literature, see e.g.~\citet{Bettonville.etal:2021}, and the former could consequently be called paid-but-not-settled, but these terms are not used in the current paper.

\subsubsection*{Independence}

For the reserves to be tractable, we need some restriction on the conditional distribution of $X$ given $\mathcal{F}^\mathcal{Z}_t$. Conditional independence criteria provide a natural way to impose such restrictions. We find it desirable to assume that the transaction time information only affects the distribution of future values $X$ by affecting the probability that a certain valid time outcome was the true realization, and thus provides no additional information if the true valid time outcome was known. This leads to tractable reserves and is often a reasonable assumption, and even if it is not, the violation can often be remedied by extending the valid time model, see the discussion in Remark~\ref{remark:Violation}. When the insured is dead all is known and so no independence assumption is needed. Formally, we thus assume:
\begin{assumption} (Influence of transaction time information.) \label{assumption:Independence} \\
On $(\textnormal{CBNR}_t)$
\begin{align*}
    \sigma((X_s)_{s \geq G_t}) \indep \mathcal{F}^\mathcal{Z}_t \mid 1_{(\tau_{\{d\}} \leq t)},  X_{\tau_{\mathcal{I}}} 1_{(\tau_{\mathcal{I}} \leq t)}.
\end{align*}
On $(\textnormal{RBNS}_t)$
\begin{align*}
    \sigma((X_s)_{s \geq G_t}) \indep \mathcal{F}^\mathcal{Z}_t \mid 1_{(\tau_{\{d\}} \leq t)},X_{G_t}. 
    \\[-7mm]\tag*{\demoas}
\end{align*}
\end{assumption}
\noindent Here $X_{G_t}$ is understood as the composite stochastic variable $\omega \mapsto (X_{G_t(\omega)})(\omega)$. Note that on both events, it holds that $\tau_{\{d\}} > t$ and one could thus have replaced $1_{(\tau_{\{d\}} \leq t)}$ by the event $(\tau_{\{d\}} > t)$ in the conditioning. This assumption states that the distribution of the valid time behavior of two subjects after time $G_t$ is exchangeable whenever the values of the variables in the conditioning agree for these subjects no matter the rest of the transaction time information. The effect of this assumption is that the transaction time information $\mathcal{F}^\mathcal{Z}_t$ only affects the distribution of the variables entering in the conditioning and not the rest of the valid time process. 

As Assumption~\ref{assumption:Independence} stands, there is still some transaction time information remaining via $X_{G_t}$ in the second case; for example $(X_{G_t} = (G_t,i,0))$ for $i \in \mathcal{I}$ implies that the disability starting at time $G_t$ is not awarded at time $t$. What is needed to remove this final piece of transaction time information is a strong Markov-type property at the random time $G_t$. This situation is non-standard since the random time where the process is stopped is not a stopping time, see however~\citet{Yackel:1968} where a random time change with a non-stopping time is used to obtain a Markov process from a semi-Markov process. The phenomenon is nevertheless similar to how left-truncated processes are usually studied conditional on some event having occurred prior, where the event is measurable with respect to an enlarged filtration stopped at the left-truncation time, but might not be measurable with respect to the self-exciting filtration, see for example Section III.3 of~\citet{Andersen.etal:1993}. To obtain the strong Markov property at $G_t$, we impose Assumption~\ref{assumption:IndependenceG}.
\begin{assumption} (Conditional independence of stopped valid time process.) \label{assumption:IndependenceG} \\
$\forall v,t \geq 0$:
$$\mathcal{F}^{v,X} \indep \mathcal{F}^X_v \vee \sigma(X_{G_t}) \mid 1_{(\tau_{\{d\}} \leq t)}, X_v$$
on $(G_t \leq v, \textnormal{RBNS}_t)$. \demoas
\end{assumption}
\noindent This is analogous to how the Markov property allows one to discard $(X_u)_{u \leq v}$ in the conditional distribution of $(X_s)_{s \geq v}$ when also conditioning on $X_v$. Here we however need to keep the knowledge that death has not occurred. Under this assumption, we get the following strong Markov property.
\begin{lemma} (Strong Markov property at $G_t$.) \label{lemma:StrongMarkov} \\
Under Assumption~\ref{assumption:IndependenceG},
    \begin{align*}
        \mathbb{P}((X_s)_{s \geq G_t} \in \hspace{-0.025cm} \cdot \mid (X_s)_{s \leq G_t}, 1_{(\tau_{\{d\}} \leq t)} ) =  \left.\mathbb{P}((X_s)_{s \geq x_1} \in \hspace{-0.025cm} \cdot \mid X_{x_1}=x, 1_{(\tau_{\{d\}} \leq t)})\right\vert_{x = X_{G_t}}
    \end{align*}
    on $(\textnormal{RBNS}_t)$ where $x=(x_1,x_2,x_3)$.
\end{lemma}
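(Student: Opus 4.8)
The plan is to derive the stated identity by a ``conditioning–unfolding'' argument that reduces the claimed strong Markov property at the random time $G_t$ to the ordinary Markov-type property granted by Assumption~\ref{assumption:IndependenceG}, exploiting the fact that $G_t$ is measurable with respect to the valid time information $\mathcal{F}^X_{G_t}$ together with the event $(\tau_{\{d\}} \le t)$. The first step is to restrict attention to the event $(\textnormal{RBNS}_t)$ and to note that on this event $\tau_{\{d\}} > t \ge G_t$, so that the process up to time $G_t$ determines $G_t$ and hence one may legitimately substitute $g = G_t$ in the right-hand side. I would make this precise by recalling that, in the canonical construction of \citet{Jacobsen:2006}, conditional distributions are given by fixed regular conditional probabilities, so the map $g \mapsto \mathbb{P}((X_s)_{s \ge g} \in \cdot \mid (X_s)_{s \le g}, 1_{(\tau_{\{d\}} \le t)})$ is a well-defined kernel that can be evaluated at the random argument $G_t$.

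The core step is to verify that the right-hand side kernel, evaluated at $g = G_t$, serves as a version of the conditional distribution of $(X_s)_{s \ge G_t}$ given $\sigma((X_s)_{s \le G_t}) \vee \sigma(1_{(\tau_{\{d\}} \le t)})$. For this I would take an arbitrary bounded measurable test functional $\Phi$ of $(X_s)_{s \ge G_t}$ and an arbitrary bounded $\mathcal{G}$-measurable random variable, where $\mathcal{G} = \sigma((X_s)_{s \le G_t}) \vee \sigma(1_{(\tau_{\{d\}} \le t)})$, and check the defining integral identity. The standard device is to decompose over the (countably many, by the hierarchical structure) possible configurations of the jump history up to $G_t$, or more cleanly to condition first on $\mathcal{F}^X_v \vee \sigma(1_{(\tau_{\{d\}} \le t)})$ for a fixed deterministic $v$ on the event $(G_t \le v)$ and then apply Assumption~\ref{assumption:IndependenceG}, which tells us that on $(G_t \le v, \textnormal{RBNS}_t)$ the future $\mathcal{F}^{v,X}$ is conditionally independent of $\mathcal{F}^X_v \vee \sigma(X_{G_t})$ given $1_{(\tau_{\{d\}} \le t)}$ and $X_v$. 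Since $\sigma((X_s)_{s\le G_t}) \subseteq \mathcal{F}^X_v \vee \sigma(X_{G_t})$ on $(G_t \le v)$ and $(X_s)_{s\ge G_t}$ splits as the piece on $[G_t,v]$ (which is $\mathcal{F}^X_v \vee \sigma(X_{G_t})$-measurable) together with $\mathcal{F}^{v,X}$, one obtains that conditioning on $(X_s)_{s \le G_t}$ and $1_{(\tau_{\{d\}}\le t)}$ is, as far as the law of $(X_s)_{s\ge G_t}$ is concerned, the same as conditioning on $X_{G_t}$ and $1_{(\tau_{\{d\}}\le t)}$ and then running the ordinary Markov dynamics of $X$ forward from $G_t$; this last object is exactly the right-hand side with $g = G_t$. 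Letting $v \uparrow \infty$ along a deterministic sequence and using that $G_t < \infty$ on $(\textnormal{RBNS}_t)$ (indeed $G_t \le t$) removes the auxiliary cutoff.

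I expect the main obstacle to be the measure-theoretic bookkeeping around evaluating a regular conditional probability kernel at the random time $G_t$ and justifying the interchange of the substitution $g = G_t$ with the conditional expectation — in particular, confirming that $G_t$ is $\sigma((X_s)_{s \le G_t})$-measurable in the appropriate sense on $(\textnormal{RBNS}_t)$ and that the ``freezing lemma'' for regular conditional distributions applies here despite $G_t$ not being a stopping time. This is precisely the non-standard point flagged in the text before the lemma, and it is where the hierarchical state space of Figure~\ref{fig:Ystatespace} and the canonical construction do the real work: the hierarchy guarantees that $G_t$, which is built from $Z^{(2)}_t$, the sojourn in state $4$, and the $\delta^{2k}$ marks, coincides on $(\textnormal{RBNS}_t)$ with a function of the valid time jump times $\tau_{\{i\}}, \tau_{\{r\}}$ and hence is $\mathcal{F}^X_{G_t}$-measurable there. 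I would handle this by invoking the standard section/freezing results for piecewise deterministic Markov processes (as in \citet{Jacobsen:2006}), and defer the fully detailed verification — which is somewhat lengthy — to Appendix~\ref{sec:StrongMarkovProof}, as the paper's structure already anticipates.
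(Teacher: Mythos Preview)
Your proposal contains a genuine error at a key point. You claim that on $(\textnormal{RBNS}_t)$ the random time $G_t$ ``coincides with a function of the valid time jump times $\tau_{\{i\}}, \tau_{\{r\}}$ and hence is $\mathcal{F}^X_{G_t}$-measurable there.'' This is false: $G_t$ is built from $Z^{(2)}_t$, the sojourn of $Z^{(1)}$ in state~4, and the marks $\delta^{2k}$, all of which are transaction time quantities. For instance, on $(\textnormal{RBNSr}_t)$ with $Z^{(1)}_t \in \{2,3\}$, $G_t$ is the time the insurer stopped paying benefits, which need not equal $\tau_{\{r\}}$; indeed Theorem~\ref{thm:RBNSrReserve} explicitly allows $X_{G_t} = (G_t,i,W_t)$, meaning $\tau_{\{r\}} > G_t$. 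The whole point of the lemma --- and the reason it is flagged as non-standard in the text --- is precisely that $G_t$ is \emph{not} determined by the valid time path, so no ordinary freezing argument applies. (It is true that $G_t$ is $\sigma(X_{G_t})$-measurable, but only for the trivial reason that $X_{G_t}$ carries $G_t$ as its first coordinate; this does not make $G_t$ an $\mathcal{F}^X$-stopping time.)

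This misconception undermines your main device. Fixing a single deterministic $v$ and applying Assumption~\ref{assumption:IndependenceG} gives you control over the law of $(X_s)_{s \geq v}$ given $\mathcal{F}^X_v \vee \sigma(X_{G_t})$, but you need the law of $(X_s)_{s \geq G_t}$ given the \emph{smaller} $\sigma$-algebra $\sigma((X_s)_{s \leq G_t})$. Your splitting into $[G_t,v]$ and $[v,\infty)$ leaves the first piece to be integrated out with respect to the conditional law of $(X_s)_{G_t \leq s \leq v}$ given $(X_s)_{s \leq G_t}$ --- which is exactly the unknown object you are trying to identify, so the argument is circular. Letting $v \to \infty$ does not help, since $G_t \leq t$ already on $(\textnormal{RBNS}_t)$.

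The paper's proof avoids this by discretizing $G_t$ itself to a dyadic grid $G_t(M)$. On each event $(G_t(M) = t_{Mm})$ the conditioning on $(X_s)_{s \leq G_t(M)}$ becomes conditioning on $\mathcal{F}^X_{t_{Mm}} \vee \sigma(X_{G_t})$, and Assumption~\ref{assumption:IndependenceG} with $v = t_{Mm}$ applies directly. One then passes to the limit $M \to \infty$ using right-continuity, combined with an induction over the number of jumps after $G_t$ and a supporting lemma establishing the analogous property at genuine $\mathcal{F}^X$-stopping times. The discretization of the random time --- not of an auxiliary cutoff --- is the missing idea.
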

\noindent The proof of Lemma~\ref{lemma:StrongMarkov} is long and is hence deferred to the appendix. Lemma~\ref{lemma:StrongMarkov} states that there is no extra knowledge gained about the distribution of $X$ knowing that a path $(X_s)_{s \leq x_1}$ came from a transaction time realization with $x=X_{G_t}$ compared with just conditioning on $(X_s)_{s \leq x_1}$ when knowledge about survival until time $t$ is retained.

\subsection{CBNR reserve} \label{subsec:CBNR}

We first consider being on $(\textnormal{CBNR}_t)$ and calculate the transaction time reserve. This is only of interest if $\mathbb{P}(\textnormal{CBNR}_t)>0$, so we assume that this is the case. Introduce the \textit{IBNR-factor} $I : \mathbb{R}^2_+ \times \mathcal{I} \mapsto [0,1]$ defined as 
$$I_i(s,t) = \mathbb{P}(\textnormal{CBNR}_t \mid \tau_{\mathcal{I}} = s, Y_{\tau_{\mathcal{I}}}=i).$$ 
Note that this probability can also be expressed as the probability that the delay between the disability event and the first disability claim is larger than $t-s$. Introduce also the transition probabilities 
$$p_{jk}(s,t,u,z) = \mathbb{P}(Y_t=k, U_t \leq z \mid Y_s = j, U_s = u).$$  
The reserve for the CBNR case is given in Theorem~\ref{thm:CBNRReserve}.
\begin{theorem} (CBNR reserve.) \label{thm:CBNRReserve} \\
On $(\textnormal{CBNR}_t)$, we have
\begin{align*}
    \mathcal{V}(t) &= V_a(t,t) \times \mathbb{P}(\tau_{\mathcal{I}} > t \mid \textnormal{CBNR}_t) \\
    & \quad + \sum_{i \in \mathcal{I}} \int_{(0,t]} \bigg( \frac{\kappa(t)}{\kappa(s)} V_i(s,0;(\tau_{\{d\}} > t))+\frac{\kappa(t)}{\kappa(s)}b_{ai}(s,s)-\int_{(s,t]} \frac{\kappa(t)}{\kappa(v)} B_{a,0}(\diff v) \bigg) \\ & \qquad \qquad \; \: \times \frac{p_{aa}(0,s,0,\infty)}{\mathbb{P}(\textnormal{CBNR}_t)} I_i(s,t) \mu_{ai}(s,s) \diff s.
\end{align*}
Furthermore,
\begin{align*}
    \mathbb{P}(\tau_{\mathcal{I}} > t \mid \textnormal{CBNR}_t) &= 1-  \int_{(0,t] \times \mathcal{I}} \frac{I_i(s,t)}{\mathbb{P}(\textnormal{CBNR}_t)} \: (\tau_{\mathcal{I}},Y_{\tau_{\mathcal{I}}})(\mathbb{P})(\diff s, \diff i)
\end{align*}
and
\begin{align*}
    \mathbb{P}(\textnormal{CBNR}_t) = \frac{\int_{(t,\infty) \times \mathcal{I}} I_i(s,t) \: (\tau_{\mathcal{I}},Y_{\tau_{\mathcal{I}}})(\mathbb{P})(\diff s, \diff i) }{1-\mathbb{P}(\tau_{\mathcal{I}} = \infty \mid \tau_{\mathcal{I}} > t, \tau_{\{d\}} > t)}+\int_{(0,t] \times \mathcal{I}} I_i(s,t) \: (\tau_{\mathcal{I}},Y_{\tau_{\mathcal{I}}})(\mathbb{P})(\diff s, \diff i).
\end{align*}
\end{theorem}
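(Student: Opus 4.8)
The plan is to decompose the transaction time reserve on $(\textnormal{CBNR}_t)$ according to whether or not the disability has already occurred by time $t$, i.e.\ along the partition $(\textnormal{CBNI}_t) \cup (\textnormal{IBNR}_t) = (\textnormal{CBNR}_t)$. On $(\textnormal{CBNR}_t)$ we have $G_t = Z^{(2)}_t = 0$ (no claim has been reported, so nothing has been awarded), hence $\mathcal{Z}_t$ carries the information $\mathcal{F}^{\mathcal{Z}}_t$, and Assumption \ref{assumption:Independence} in its CBNR form gives $\sigma((X_s)_{s\geq 0}) \indep \mathcal{F}^{\mathcal{Z}}_t \mid 1_{(\tau_{\{d\}}\leq t)}, \tau_{\{i\}}1_{(\tau_{\{i\}}\leq t)}$. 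Since on $(\textnormal{CBNR}_t)$ we automatically have $\tau_{\{d\}} > t$ and also $\tau_{\{i\}} \le \tau_{\{d\}}$, the conditioning variables reduce to $\tau_{\{i\}}1_{(\tau_{\{i\}}\leq t)}$ together with $(\tau_{\{d\}}>t)$. So first I would write
$$\mathcal{V}(t) = \mathbb{E}[\mathcal{P}(t) \mid \mathcal{F}^{\mathcal{Z}}_t] = \mathbb{E}[P(t) \mid 1_{(\tau_{\{d\}}\le t)}, \tau_{\{i\}}1_{(\tau_{\{i\}}\le t)}, \textnormal{CBNR}_t]$$
on $(\textnormal{CBNR}_t)$, using that $\mathcal{P}(t)=P(t)$ here because by time $t$ nothing has been reported or paid but the contractual payments $B$ on $(0,t]$ in the active state are recoverable (the valid time state must be $a$ on $[0,t]$ when $\tau_{\{i\}}>t$, and when $\tau_{\{i\}}\le t$ the disability is IBNR so $\mathcal{B}$ reflects the true past) — this identity needs a short argument from the construction of $\mathcal{B}$ via \eqref{eq:cashflow}, noting $\mathcal{P}(t)=\int_{(t,\infty)}\tfrac{\kappa(t)}{\kappa(s)}\mathcal{B}(\diff s)$ and $\mathcal{B}(\diff s) = B(\diff s)$ for $s>t$ once the claim surfaces.

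Next I would split into the two cases. On $(\textnormal{CBNI}_t) = (\textnormal{CBNR}_t, \tau_{\{i\}}>t)$, the conditioning reduces to knowing $Y_s = a$ for all $s\le t$, i.e.\ $X_t = (t,a,t)$; by the Markov property of $X$ and the version of the state-wise reserves chosen in the paper, $\mathbb{E}[P(t)\mid X_t=(t,a,t)] = V_a(t,t)$. On $(\textnormal{IBNR}_t)$, we know $\tau_{\{i\}} = s$ for some $s \le t$ and $\tau_{\{d\}}>t$. Here I would first decompose $P(t)$ by conditioning on the disability time: write $P(t)$ on $(\tau_{\{i\}}=s)$ as the discounted future payments, and separate the part generated on $(t,\infty)$ from the bridging term. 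Concretely, on $(\tau_{\{i\}}=s, \tau_{\{d\}}>t)$ one has $Y$ in state $a$ on $[0,s)$ and in state $i$ on $[s,t]$ (it cannot have reactivated or died because $\tau_{\{d\}}>t$ and — by the hierarchical state space and $\tau_{\{i\}}\le\tau_{\{r\}}$ — reactivation before $t$ would change nothing about the argument, but actually a reactivation could occur; I need to be careful here and instead condition on $X_s=(s,i,0)$ and apply the Markov property at time $s$). The cleaner route: by the Markov property at the disability time $s$,
$$\mathbb{E}[P(t)\mid \tau_{\{i\}}=s, \tau_{\{d\}}>t, \textnormal{CBNR}_t] = \frac{\kappa(t)}{\kappa(s)}\Big(V_i(s,0;(\tau_{\{d\}}>t)) + b_{ai}(s,s)\Big) - \int_{(s,t]}\frac{\kappa(t)}{\kappa(v)}B_{a,0}(\diff v),$$
where the term $b_{ai}(s,s)$ is the lump-sum disability benefit at transition, the integral subtracts the active-state premiums/benefits paid on $(s,t]$ that are included in $V$ "at time $s$" but fall before $t$, and the discounting $\kappa(t)/\kappa(s)$ moves everything from the time-$s$ valuation to time $t$. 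Note $V_i(s,0;(\tau_{\{d\}}>t))$ appears rather than $V_i(s,0)$ precisely because we retain the knowledge $\tau_{\{d\}}>t$ in the conditioning, per Lemma \ref{lemma:StrongMarkov}'s spirit (though here we are at a genuine hitting time $\tau_{\{i\}}$, so only the elementary Markov property is needed).

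Finally I would assemble the mixture. The conditional law of $\tau_{\{i\}}$ given $(\textnormal{CBNR}_t)$ has an atom at values $>t$ of total mass $\mathbb{P}(\tau_{\{i\}}>t\mid\textnormal{CBNR}_t)$ and a density on $(0,t]$ obtained from Bayes' rule: $\mathbb{P}(\tau_{\{i\}}\in\diff s, \textnormal{CBNR}_t) = \mathbb{P}(\textnormal{CBNR}_t\mid\tau_{\{i\}}=s)\,\mathbb{P}(\tau_{\{i\}}\in\diff s) = I(s,t)\,p_{aa}(0,s,0,\infty)\mu_{ai}(s,s)\diff s$ for $s\le t$, where $\mathbb{P}(\tau_{\{i\}}\in\diff s) = p_{aa}(0,s,0,\infty)\mu_{ai}(s,s)\diff s$ by the standard first-passage formula for the semi-Markov $Y$ started in $a$. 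Dividing by $\mathbb{P}(\textnormal{CBNR}_t)$ gives the weights in the statement. Integrating the time-$s$ conditional reserve against this density and adding $V_a(t,t)\,\mathbb{P}(\tau_{\{i\}}>t\mid\textnormal{CBNR}_t)$ yields the displayed formula for $\mathcal{V}(t)$. The formula for $\mathbb{P}(\tau_{\{i\}}>t\mid\textnormal{CBNR}_t)$ is then just $1$ minus the integral of the density over $(0,t]$; and $\mathbb{P}(\textnormal{CBNR}_t)$ is obtained by integrating $I(s,t)$ against the full law $(\tau_{\{i\}})(\mathbb{P})$ over $(0,\infty)$, splitting at $t$ and using on $(t,\infty)$ that $I(s,t)=\mathbb{P}(\textnormal{CBNR}_t\mid\tau_{\{i\}}=s)$ must be reconciled with the event not yet having been reported — the denominator $1-\mathbb{P}(\tau_{\{i\}}=\infty\mid\tau_{\{i\}}>t,\tau_{\{d\}}>t)$ corrects for the possibility that $\tau_{\{i\}}=\infty$ (never disabled), which should not contribute a CBNR claim; I expect this last normalization bookkeeping to be the most delicate point, requiring care about how $I(s,t)$ is defined when $s=\infty$ and how the reporting-delay distribution is tied to the event $(\textnormal{CBNR}_t)$.

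The main obstacle I anticipate is \emph{not} the mixture algebra but justifying the two reductions of the conditioning $\sigma$-algebra: first, that $\mathcal{P}(t) = P(t)$ on $(\textnormal{CBNR}_t)$ with the correct interpretation of "future payments", and second, that conditioning on $\mathcal{F}^{\mathcal{Z}}_t$ collapses (via Assumption \ref{assumption:Independence}) to conditioning on just $1_{(\tau_{\{d\}}\le t)}$ and $\tau_{\{i\}}1_{(\tau_{\{i\}}\le t)}$ — and then, on $(\textnormal{IBNR}_t)$, that this further collapses via the Markov property of $X$ at the hitting time $\tau_{\{i\}}$ to conditioning on $X_{\tau_{\{i\}}}=(\tau_{\{i\}},i,0)$ together with the retained event $(\tau_{\{d\}}>t)$, which is what produces the $V_i(s,0;(\tau_{\{d\}}>t))$ term. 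Getting the bridging term — the $b_{ai}(s,s)$ contribution and the subtracted active-state payments $\int_{(s,t]}\tfrac{\kappa(t)}{\kappa(v)}B_{a,0}(\diff v)$ — exactly right is a careful but routine cash-flow accounting once the conditioning is pinned down.
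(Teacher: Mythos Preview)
Your overall architecture---split $(\textnormal{CBNR}_t)$ into $(\textnormal{CBNI}_t)$ and $(\textnormal{IBNR}_t)$, reduce the conditioning via Assumption~\ref{assumption:Independence}, invoke the Markov property at $\tau_{\{i\}}$, then integrate against the Bayes-adjusted law of $\tau_{\{i\}}$---matches the paper exactly. However, there is a genuine error at the pivot point.

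You claim $\mathcal{P}(t)=P(t)$ on all of $(\textnormal{CBNR}_t)$. This is true on $(\textnormal{CBNI}_t)$, where $(X^t_s)_{s\le t}=(X_s)_{s\le t}$, but it is \emph{false} on $(\textnormal{IBNR}_t)$: there the disability has occurred but the insurer does not know it, so $\mathcal{B}$ has paid active-state amounts $B_{a,0}$ on $(\tau_{\{i\}},t]$ while $B$ carries the transition payment $b_{ai}$ at $\tau_{\{i\}}$ and disabled-state payments thereafter. The identity from Theorem~5.4 of \citet{Buchardt.etal:2023a} gives $\mathcal{P}(t)=P(t)+\int_{[0,t]}\tfrac{\kappa(t)}{\kappa(s)}(B-\mathcal{B})(\diff s)$, and on $(\textnormal{IBNR}_t)$ this collapses to
\[
\mathcal{P}(t)=\frac{\kappa(t)}{\kappa(\tau_{\{i\}})}P(\tau_{\{i\}})+\frac{\kappa(t)}{\kappa(\tau_{\{i\}})}b_{ai}(\tau_{\{i\}},\tau_{\{i\}})-\int_{(\tau_{\{i\}},t]}\frac{\kappa(t)}{\kappa(v)}B_{a,0}(\diff v).
\]
This is the correct source of the $b_{ai}$ term and the subtracted active-state integral in the theorem: they are \emph{not} a rearrangement of $\mathbb{E}[P(t)\mid\cdots]$ but the backpay correction $\mathcal{P}(t)-P(t)$. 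Your displayed formula for the IBNR conditional expectation is the right expression for $\mathbb{E}[\mathcal{P}(t)\mid\tau_{\{i\}}=s,\tau_{\{d\}}>t,\textnormal{CBNR}_t]$, but it is \emph{not} $\mathbb{E}[P(t)\mid\cdots]$; the latter would involve expected disabled-state (and possibly reactivated-state) payments on $(s,t]$, not $B_{a,0}$. So the formula is right but your derivation of it does not stand.

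A smaller point: your intuition for the denominator $1-\mathbb{P}(\tau_{\{i\}}=\infty\mid\tau_{\{i\}}>t,\tau_{\{d\}}>t)$ is off. The never-disabled case \emph{does} contribute to $(\textnormal{CBNR}_t)$ (an insured who never becomes disabled and is alive at $t$ has $Z^{(1)}_t=1$). The denominator arises because $\mathbb{P}(\textnormal{CBNR}_t,\tau_{\{i\}}=\infty)$ is expressed back in terms of $\mathbb{P}(\textnormal{CBNR}_t)$ via Assumption~\ref{assumption:Independence} and one solves the resulting linear equation for $\mathbb{P}(\textnormal{CBNR}_t)$.
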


\noindent Inserting $(\tau_{\mathcal{I}},Y_{\tau_{\mathcal{I}}})(\mathbb{P})(\diff s, \diff i) = p_{aa}(0,s,0,\infty)\mu_{ai}(s,s) \diff s$ for $s \in (0,\infty)$ and $i \in \mathcal{I}$ as well as substituting
\begin{align*}
    1-\mathbb{P}(\tau_{\mathcal{I}} = \infty \mid \tau_{\mathcal{I}} > t, \tau_{\{d\}} > t) &= \sum_{i \in \mathcal{I}}\int_{(t,\infty)} p_{aa}(t,s,t,\infty) \mu_{ai}(s,s) \diff s,
\end{align*}
and using Remark~\ref{remark:ConditionalSemiMarkov} to calculate $V_i(s,0;(\tau_{\{d\}} > t))$, one sees that the reserve in Theorem~\ref{thm:CBNRReserve} is computable using only the usual valid time hazards and the new model element $I_i(s,t)$ for $s<\infty$. 

\begin{proof}
    Write
\begin{align*}
\begin{split}
    \mathcal{V}(t)   &= \mathbb{E}[ 1_{(\tau_{\mathcal{I}} > t)} \mathcal{P}(t)+1_{(\tau_{\mathcal{I}} \leq t)} \mathcal{P}(t) \mid \mathcal{F}_t^\mathcal{Z}] \\ 
    &= \mathbb{E}[\mathcal{P}(t) \mid \mathcal{F}_t^\mathcal{Z}, \tau_{\mathcal{I}} > t] \mathbb{P}(\tau_{\mathcal{I}} > t \mid \mathcal{F}_t^\mathcal{Z})  + \mathbb{E}[1_{(\tau_{\mathcal{I}} \leq t)}  \mathcal{P}(t) \mid \mathcal{F}_t^\mathcal{Z}].
\end{split}
\end{align*}
The first term is the CBNI reserve and the second term is the IBNR reserve.

We start by treating the CBNI reserve. Note that on $(\textnormal{CBNR}_t, \tau_{\mathcal{I}} > t)$, it holds that $\mathcal{P}(t)=P(t)$ by Theorem 5.4 of~\citet{Buchardt.etal:2023a}, since it then holds that $(X_s)_{0 \leq s \leq t} = (X_s^t)_{0 \leq s \leq t}$. This leads to 
\begin{align*}
1_{(\textnormal{CBNR}_t)} \mathbb{E}[\mathcal{P}(t) \mid \mathcal{F}_t^\mathcal{Z}, \tau_{\mathcal{I}} > t] = 1_{(\textnormal{CBNR}_t)} \mathbb{E}[P(t) \mid \tau_{\{d\}} > t, \tau_{\mathcal{I}} > t ] = 1_{(\textnormal{CBNR}_t)} V_a(t,t),
\end{align*}
by the first part of Assumption~\ref{assumption:Independence} and using that $P(t)$ is $\sigma((X_s)_{s \geq G_t})$-measurable since $G_t \leq t$.

For the IBNR reserve, we find by Theorem 5.4 of~\citet{Buchardt.etal:2023a} that on $(\textnormal{CBNR}_t, \tau_{\mathcal{I}}  \leq t)$ it holds 
\begin{align*}
    \mathcal{P}(t) &= P(t)+\int_{[0,t]} \frac{\kappa(t)}{\kappa(s)} (B-\mathcal{B})(\diff s) \\
    &= \frac{\kappa(t)}{\kappa(\tau_{\mathcal{I}}-)}P(\tau_{\mathcal{I}}-)-\int_{[\tau_{\mathcal{I}},t]} \frac{\kappa(t)}{\kappa(s)} B_{a,0}(\diff s) \\
    &= \frac{\kappa(t)}{\kappa(\tau_{\mathcal{I}})}P(\tau_{\mathcal{I}})+\frac{\kappa(t)}{\kappa(\tau_{\mathcal{I}})}b_{a Y_{\tau_{\mathcal{I}}}}(\tau_{\mathcal{I}},\tau_{\mathcal{I}})-\int_{(\tau_{\mathcal{I}},t]} \frac{\kappa(t)}{\kappa(s)} B_{a,0}(\diff s)
\end{align*}
since the payments are at least equal until $\tau_{\mathcal{I}}$ on this event.
Hence 
\begin{align*}
   &1_{(\textnormal{CBNR}_t)} \mathbb{E}[\mathcal{P}(t) 1_{(\tau_{\mathcal{I}} \leq t)} \mid \mathcal{F}_t^\mathcal{Z}] \\
   &=  1_{(\textnormal{CBNR}_t)} \mathbb{E}\bigg[  1_{(\tau_{\mathcal{I}} \leq t)} \bigg( \frac{\kappa(t)}{\kappa(\tau_{\mathcal{I}})}P(\tau_{\mathcal{I}})+\frac{\kappa(t)}{\kappa(\tau_{\mathcal{I}})}b_{a Y_{\tau_{\mathcal{I}}}}(\tau_{\mathcal{I}},\tau_{\mathcal{I}})-\int_{(\tau_{\mathcal{I}},t]} \frac{\kappa(t)}{\kappa(s)} B_{a,0}(\diff s) \bigg) \: \Big\vert \: \mathcal{F}_t^\mathcal{Z} \bigg] \\
   &=  1_{(\textnormal{CBNR}_t)} \mathbb{E}\bigg[  1_{(\tau_{\mathcal{I}} \leq t)} \bigg( \frac{\kappa(t)}{\kappa(\tau_{\mathcal{I}})} \mathbb{E}\big[P(\tau_{\mathcal{I}}) \mid X_{\tau_{\mathcal{I}}} \vee \mathcal{F}^\mathcal{Z}_t \big] \\
   & \qquad \qquad \qquad \qquad \qquad \; +\frac{\kappa(t)}{\kappa(\tau_{\mathcal{I}})}b_{a Y_{\tau_{\mathcal{I}}}}(\tau_{\mathcal{I}},\tau_{\mathcal{I}})-\int_{(\tau_{\mathcal{I}},t]} \frac{\kappa(t)}{\kappa(s)} B_{a,0}(\diff s) \bigg) \: \Big\vert \: \mathcal{F}_t^\mathcal{Z} \bigg]
\end{align*}
by the tower property. Now note that on $(\textnormal{CBNR}_t, \tau_{\mathcal{I}} \leq t)$, we have
\begin{align*}
    \mathbb{E}\big[P(\tau_{\mathcal{I}}) \mid X_{\tau_{\mathcal{I}}} \vee \mathcal{F}^\mathcal{Z}_t \big] &= \mathbb{E}\big[P(\tau_{\mathcal{I}}) \mid X_{\tau_{\mathcal{I}}}, \tau_{\{d\}}> t \big]\\
    &= \left.\frac{\mathbb{E}\big[P(s) 1_{(\tau_{\{d\}}> t)} \mid X_s = (s,i,0) \big]}{\mathbb{P}(\tau_{\{d\}}> t \mid X_s = (s,i,0) )}\right\vert_{s=\tau_{\mathcal{I}}, i=Y_{\tau_{\mathcal{I}}}}\\
    &= V_{Y_{\tau_{\mathcal{I}}}}(\tau_{\mathcal{I}},0 ; (\tau_{\{d\}}> t))
\end{align*}
by the first part of Assumption~\ref{assumption:Independence} and the usual strong Markov property, see Theorem 7.5.1 of~\citet{Jacobsen:2006}, using $(\tau_{\{d\}}> t) \in \sigma(X_t)$. Hence we obtain
\begin{align*}
   &\mathbb{E}[\mathcal{P}(t) 1_{(\tau_{\mathcal{I}} \leq t)} \mid \mathcal{F}_t^\mathcal{Z}] \\
   & = \int_{(0,t] \times \mathcal{I}} \bigg( \frac{\kappa(t)}{\kappa(s)} V_i(s,0;(\tau_{\{d\}} > t))+\frac{\kappa(t)}{\kappa(s)}b_{ai}(s,s)-\int_{(s,t]} \frac{\kappa(t)}{\kappa(v)} B_{a,0}(\diff v) \bigg) ( \tau_{\mathcal{I}}, Y_{\tau_{\mathcal{I}}} \mid \mathcal{F}^\mathcal{Z}_t)(\mathbb{P})(\diff s, \diff i)
\end{align*}
on $(\textnormal{CBNR}_t)$.
Note that on $(\textnormal{CBNR}_t)$, we have $(\tau_{\mathcal{I}},Y_{\tau_{\mathcal{I}}} \mid \mathcal{F}^\mathcal{Z}_t)(\mathbb{P})(\diff s,\diff i) = (\tau_{\mathcal{I}},Y_{\tau_{\mathcal{I}}} \mid \textnormal{CBNR}_t)(\mathbb{P})(\diff s, \diff i)$. 
Using Bayes' theorem, see for example Theorem 1.31 in~\citet{Schervish:1995}, we can write  
    \begin{align*}
    (\tau_{\mathcal{I}},Y_{\tau_{\mathcal{I}}} \mid \textnormal{CBNR}_t)(\mathbb{P})(\diff s, \diff i)  &= \mathbb{P}(\textnormal{CBNR}_t \mid \tau_{\mathcal{I}}=s,Y_{\tau_{\mathcal{I}}}=i) \frac{(\tau_{\mathcal{I}},Y_{\tau_{\mathcal{I}}})(\mathbb{P})(\diff s, \diff i)}{\mathbb{P}(\textnormal{CBNR}_t)}  \\
    &= I_i(s,t) \mu_{ai}(s,s) \frac{p_{aa}(0,s,0,\infty)}{\mathbb{P}(\textnormal{CBNR}_t)}  \diff s
    \end{align*}
for $s \in [0,\infty)$ and $i \in \mathcal{I}$. This also implies
\begin{align*}
    \mathbb{P}(\tau_{\mathcal{I}} > t \mid \textnormal{CBNR}_t) &= 1-  \sum_{i \in \mathcal{I}} \int_{(0,t]} I_i(s,t) \mu_{ai}(s,s) \frac{p_{aa}(0,s,0,\infty)}{\mathbb{P}(\textnormal{CBNR}_t)} \diff s. 
\end{align*}
For the final part, note
\begin{align*}
    \mathbb{P}(\textnormal{CBNR}_t) = \mathbb{E}[\mathbb{P}(\textnormal{CBNR}_t \mid \tau_{\mathcal{I}},Y_{\tau_{\mathcal{I}}}) ] = \int_{(0,\infty) \times \mathcal{I}} I_i(s,t) \: (\tau_{\mathcal{I}},Y_{\tau_{\mathcal{I}}})(\mathbb{P})(\diff s,\diff i) + \mathbb{P}(\textnormal{CBNR}_t, \tau_{\mathcal{I}} = \infty ).
\end{align*}
We have
\begin{align*}
    \mathbb{P}(\textnormal{CBNR}_t, \tau_{\mathcal{I}} = \infty ) &=   \mathbb{P}(\tau_{\mathcal{I}} = \infty \mid \textnormal{CBNR}_t) \mathbb{P}(\textnormal{CBNR}_t) \\
    &= \mathbb{P}(\tau_{\mathcal{I}} = \infty \mid \tau_{\mathcal{I}} > t, \tau_{\{d\}} > t) \mathbb{P}(\tau_{\mathcal{I}} > t \mid \textnormal{CBNR}_t) \mathbb{P}(\textnormal{CBNR}_t)
\end{align*}
using the first part of Assumption~\ref{assumption:Independence} in the second equality. Inserting these expressions, isolating for $\mathbb{P}(\textnormal{CBNR}_t)$, and simplifying gives
\begin{align*}
    \mathbb{P}(\textnormal{CBNR}_t) = \frac{\int_{(t,\infty) \times \mathcal{I}} I_i(s,t) \: (\tau_{\mathcal{I}},Y_{\tau_{\mathcal{I}}})(\mathbb{P})(\diff s, \diff i) }{1-\mathbb{P}(\tau_{\mathcal{I}} = \infty \mid \tau_{\mathcal{I}} > t, \tau_{\{d\}} > t)}+\int_{(0,t] \times \mathcal{I}} I_i(s,t) \: (\tau_{\mathcal{I}},Y_{\tau_{\mathcal{I}}})(\mathbb{P})(\diff s, \diff i).
\end{align*}
Collecting the results, we obtain the statement of the theorem.
\end{proof}

\begin{remark} (Relation to non-life insurance Poisson models.) \\
    For the IBNR term, the time $s$ disability rate $\mu_{ai}(s,s)$ has to be multiplied by the IBNR-factor $\mathbb{P}(\textnormal{CBNR}_t \mid \tau_{\mathcal{I}} = s, Y_{\tau_{\mathcal{I}}}=i)$ similarly to the Poisson process model in \citet{Norberg:1999}. Heuristically, one has to hold a disability reserve for the expected number of disabilities $\mu_{ai}(s,s) \diff s$ at a prior time $s$ times the proportion of insured that have yet to report their claim by time $t$, which is $\mathbb{P}(\textnormal{CBNR}_t \mid \tau_{\mathcal{I}} = s, Y_{\tau_{\mathcal{I}}}=i)$. The extra factor $p_{aa}(0,s,0,\infty)/\mathbb{P}(\textnormal{CBNR}_t)$ adjusts for the fact that there can be at most one disability occurrence in this model as opposed to a Poisson process model where there can be several occurrences. \demormk
\end{remark}

\begin{remark} (Conditional semi-Markov model.) \label{remark:ConditionalSemiMarkov} \vspace{0.15cm} \\
In Section 6 of \citet{Hoem:1972}, the author obtains an expression for the transition probabilities and hazards of a semi-Markov process conditional on not having entered a specific absorbing part of the state space before a given time. For our purposes, choose the transient states to be $\mathcal{J} \backslash \{d\}$. Calculating the transition probabilities that appear in $V_i(\: \cdot,\cdot \: ;(\tau_{\{d\}}> t))$ can then be done as usual upon switching to the hazards $\mu_{\ell j}(s,u;t)$, where 
$$\mu_{\ell j}(s,u;t)=\mu_{\ell j}(s,u)\frac{ \sum_{k \in \mathcal{J} \backslash \{d\} } p_{jk}(s,t,0,\infty)  }{\sum_{k \in \mathcal{J} \backslash \{d\}} p_{\ell k}(s,t,u,\infty) }$$
for $s < t$ and 
$$\mu_{\ell j}(s,u;t) = \mu_{\ell j}(s,u)$$
for $s \geq t$. 
Consequently, the hazard of jumping to states where there is a higher probability of remaining in $\mathcal{J} \backslash \{d\}$ is increased and the hazard is decreased when there is a lower probability of remaining in $\mathcal{J} \backslash \{d\}$. After time $t$, the conditioning provides no additional information, and the hazards are equal to the hazards in the unconditional model. \demormk
\end{remark}

\begin{remark} (Benefits of modeling reporting delays stochastically.) \label{rmk:RepDelay} \vspace{0.15cm} \\
    One could also have considered a simpler model where reporting delays were deterministic. In the data application, the numerical value of such simple reserves are compared with those obtained with the proposed methods, see also Remark~\ref{rmk:AlternativeCBNR}.  A disadvantage of such an approach is that reporting delays are stochastic in reality, so the validity of the model would be less clear. Similarly, one might overlook the fact that the classic disability reserve is only the relevant "claim size" in the IBNR reserve if independence assumptions like Assumption~\ref{assumption:Independence} and~\ref{assumption:IndependenceG} hold, see also the discussion in Remark~\ref{remark:Violation}.

    There are also some disadvantages related to the size and timing of the reserve that might result from using a model with a deterministic reporting delay. A first-order error is, as discussed at the end of the introduction and in Remark~\ref{rmk:AlternativeCBNR}, if the size of the portfolio increases by $x \%$ then the simple IBNR reserve becomes $x\%$ too high and vice versa. Changes to the size of the portfolio are relatively common for disability insurance due to the short coverage periods, leading the insured to have frequent opportunities to change their insurance provider. Second-order errors arise since the timing of the disability and the covariate dependence are handled slightly more imprecisely in the simple model.
    \demormk
\end{remark}

\subsection{RBNS reserve} \label{subsec:RBNS}
We now consider being on $(\textnormal{RBNS}_t)$. The reserve for the RBNSi case is given in Theorem~\ref{thm:RBNSiReserve}.
\begin{theorem} (RBNSi reserve.) \label{thm:RBNSiReserve} \\
We have
\begin{align*}
    \mathcal{V}(t) &=  \frac{\kappa(t)}{\kappa(G_t)} V_a(G_t,G_t ; (\tau_{\{d\}} > t)) \times (1-\mathbb{P}(X_{G_t} = (G_t,i,0) \mid \mathcal{F}_t^\mathcal{Z})) \\
    & \quad + \bigg( \frac{\kappa(t)}{\kappa(G_t)}V_{i}(G_t,0;(\tau_{\{d\}} > t)) + \frac{\kappa(t)}{\kappa(G_t)}b_{a i}(G_t,G_t) \bigg) \times \mathbb{P}(X_{G_t} = (G_t,i,0) \mid \mathcal{F}_t^\mathcal{Z}) \\
    & \quad- \int_{(G_t,t]} \frac{\kappa(t)}{\kappa(s)} B_{a,0}(\diff s)
\end{align*}
on $(\textnormal{RBNSi}_t)$ with $i=Z^{(3)}_t$.
\end{theorem}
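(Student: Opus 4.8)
The plan is to follow the route of the proof of Theorem~\ref{thm:CBNRReserve}, but using the second half of Assumption~\ref{assumption:Independence} in place of the first and the strong Markov property at $G_t$ from Lemma~\ref{lemma:StrongMarkov} in place of the ordinary one. Two preliminary observations drive everything. First, on $(\textnormal{RBNSi}_t)$ one has $W_t=0$, hence $G_t=Z^{(2)}_t$, and $H^t_s$ contains no events for $s\le t$: the death event is excluded since $Z^{(1)}_t\in\{2,3,4\}$ forces $T_{\{5\}}>t$, and the disability event is excluded precisely because $W_t=0$. Consequently $X^t_s=(s,a,s)$ for $s\le t$, so $\mathcal{B}$ agrees on $[0,t]$ with the active-state sojourn payments generated by $B_{a,0}$. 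Second, one checks from the constraints on $Z^{(2)}$ and $G$ (increasing, increasing only at jumps into state $2$, chronological reporting order) together with the eventual consistency $H^t_s=H_s$ for $t\ge T_{\{5\}}$ that on $(\textnormal{RBNSi}_t)$ one has $X_{G_t}\in\{(G_t,a,G_t),(G_t,i,0)\}$ almost surely: the insured is not dead at $G_t$ since $\tau_{\{d\}}>t\ge G_t$ on $(\textnormal{RBNS}_t)$, and the reported disability time $G_t$ either coincides with the true disability time or, if a disability ever occurs, strictly precedes it.

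Granting these, I would first derive a pathwise identity for $\mathcal{P}(t)$ on $(\textnormal{RBNSi}_t)$. By Theorem~5.4 of~\citet{Buchardt.etal:2023a}, $\mathcal{P}(t)=P(t)+\int_{[0,t]}\frac{\kappa(t)}{\kappa(s)}(B-\mathcal{B})(\diff s)$. In both of the two possibilities for $X_{G_t}$ the valid time process is active on $[0,G_t)$, so $B=\mathcal{B}$ there and the integral reduces to one over $[G_t,t]$; splitting on $X_{G_t}$, rearranging discounted increments and using $\int_{(G_t,\infty)}\frac{\kappa(t)}{\kappa(s)}B(\diff s)=\frac{\kappa(t)}{\kappa(G_t)}P(G_t)$ exactly as in the IBNR part of the proof of Theorem~\ref{thm:CBNRReserve} gives, on $(\textnormal{RBNSi}_t)$,
\begin{align*}
\mathcal{P}(t)=\frac{\kappa(t)}{\kappa(G_t)}P(G_t)+1_{(X_{G_t}=(G_t,i,0))}\frac{\kappa(t)}{\kappa(G_t)}b_{ai}(G_t,G_t)-\int_{(G_t,t]}\frac{\kappa(t)}{\kappa(s)}B_{a,0}(\diff s),
\end{align*}
the middle term being the valid time transition payment at the disability, which $\mathcal{B}$ has not yet paid out; handling jumps of $B_{a,0}$ at $G_t$ and $P(G_t)$ versus $P(G_t-)$ here is routine.

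It then remains to apply $\mathbb{E}[\,\cdot\mid\mathcal{F}^{\mathcal{Z}}_t]$. Since $G_t$ is $\mathcal{F}^{\mathcal{Z}}_t$-measurable, the factors $\kappa(t)/\kappa(G_t)$, $b_{ai}(G_t,G_t)$ and the last integral pull out, and the problem reduces to computing $\mathbb{E}[P(G_t)\mid\mathcal{F}^{\mathcal{Z}}_t]$ and $\mathbb{P}(X_{G_t}=(G_t,i,0)\mid\mathcal{F}^{\mathcal{Z}}_t)$. For the former I would condition additionally on $X_{G_t}$ via the tower property; since $P(G_t)$ is $\sigma((X_s)_{s\ge G_t})$-measurable and $1_{(\tau_{\{d\}}\le t)}$ is $\mathcal{F}^{\mathcal{Z}}_t$-measurable and vanishes on $(\textnormal{RBNS}_t)$, the second part of Assumption~\ref{assumption:Independence} yields $\mathbb{E}[P(G_t)\mid\mathcal{F}^{\mathcal{Z}}_t\vee\sigma(X_{G_t})]=\mathbb{E}[P(G_t)\mid 1_{(\tau_{\{d\}}\le t)},X_{G_t}]$ on $(\textnormal{RBNS}_t)$, and Lemma~\ref{lemma:StrongMarkov} together with the ordinary Markov property (as in the remark following it) identifies this with $V_{Y_{G_t}}(G_t,U_{G_t};(\tau_{\{d\}}>t))$, which on $(\textnormal{RBNSi}_t)$ equals $V_a(G_t,G_t;(\tau_{\{d\}}>t))1_{(X_{G_t}=(G_t,a,G_t))}+V_i(G_t,0;(\tau_{\{d\}}>t))1_{(X_{G_t}=(G_t,i,0))}$. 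Taking one further conditional expectation given $\mathcal{F}^{\mathcal{Z}}_t$ and pulling out the now $\mathcal{F}^{\mathcal{Z}}_t$-measurable state-wise reserves produces the stated decomposition; as in the proof of Theorem~\ref{thm:CBNRReserve}, carrying the indicator $1_{(\textnormal{RBNSi}_t)}$ through every conditional statement makes the ``on the event'' manipulations rigorous.

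The step I expect to be the main obstacle is $\mathbb{E}[P(G_t)\mid 1_{(\tau_{\{d\}}\le t)},X_{G_t}]=V_{Y_{G_t}}(G_t,U_{G_t};(\tau_{\{d\}}>t))$: one conditions on the valid time state at the random time $G_t$, and $G_t$ is not an $\mathcal{F}^X$-stopping time, so—in contrast to the CBNR case, where the analogous reduction at the stopping time $\tau_{\{i\}}$ is simply the ordinary strong Markov property (cf.\ Theorem~7.5.1 of~\citet{Jacobsen:2006})—one genuinely needs the nonstandard strong Markov property at $G_t$ of Lemma~\ref{lemma:StrongMarkov}, itself established by the long argument deferred to Appendix~\ref{sec:StrongMarkovProof}. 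The remaining delicate points are the measure-theoretic bookkeeping for the composite variable $X_{G_t}$ evaluated at the random time $G_t$ and the verification of the two preliminary observations above; the rearrangement of discounted cash flows and the extraction of $\mathcal{F}^{\mathcal{Z}}_t$-measurable factors are routine.
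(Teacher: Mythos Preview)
Your proposal is correct and follows essentially the same route as the paper: derive the pathwise identity $\mathcal{P}(t)=\frac{\kappa(t)}{\kappa(G_t)}P(G_t)+1_{(X_{G_t}=(G_t,i,0))}\frac{\kappa(t)}{\kappa(G_t)}b_{ai}(G_t,G_t)-\int_{(G_t,t]}\frac{\kappa(t)}{\kappa(s)}B_{a,0}(\diff s)$ on $(\textnormal{RBNSi}_t)$, then condition on $X_{G_t}$, invoke the second part of Assumption~\ref{assumption:Independence}, and use Lemma~\ref{lemma:StrongMarkov} to identify $\mathbb{E}[P(G_t)\mid 1_{(\tau_{\{d\}}\le t)},X_{G_t}]$ with the relevant conditional state-wise reserve. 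The paper makes one intermediate step slightly more explicit than you do---namely, that on $(\textnormal{RBNSi}_t)$ the value $X_{G_t}$ determines the full path $(X_s)_{s\le G_t}$, which is what allows the conditioning in Lemma~\ref{lemma:StrongMarkov} (on the path, not just on $X_{G_t}$) to apply---but you have essentially covered this in your preliminary observations.
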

\noindent Utilizing Remark~\ref{remark:ConditionalSemiMarkov} to calculate $V_i(G_t,0;(\tau_{\{d\}} > t))$, and analogously $V_a(G_t,G_t;(\tau_{\{d\}} > t))$, we see that the RBNSi reserve may be calculated using the usual valid time model and the new model element $\mathbb{P}(X_{G_t} = (G_t,Z^{(3)}_t,0) \mid \mathcal{F}_t^\mathcal{Z})$ which we name the \textit{adjudication probability}. This gives the probability that the reported disability claim will ultimately be awarded.

\begin{proof}
    By similar calculations to the IBNR-case, we find
\begin{align*}
    \mathcal{P}(t) &= \frac{\kappa(t)}{\kappa(G_t-)}P(G_t-)- \int_{[G_t,t]} \frac{\kappa(t)}{\kappa(s)} \mathcal{B}(\diff s).
\end{align*}
The latter term is $\mathcal{F}^\mathcal{Z}_t$-measurable and on $(\textnormal{RBNSi}_t)$ satisfies
$$ \int_{[G_t,t]} \frac{\kappa(t)}{\kappa(s)} \mathcal{B}(\diff s) = \int_{[G_t,t]} \frac{\kappa(t)}{\kappa(s)} B_{a,0}(\diff s)$$
so the interesting part is $\kappa(t)/\kappa(G_t-) \times P(G_t-)$. On $(\textnormal{RBNSi}_t)$ we have
$$\frac{\kappa(t)}{\kappa(G_t-)}P(G_t-) = \frac{\kappa(t)}{\kappa(G_t)}P(G_t)+1_{(X_{G_t} = (G_t,Z^{(3)}_t,0))}\frac{\kappa(t)}{\kappa(G_t)}b_{a Z^{(3)}_t}(G_t,G_t)+\frac{\kappa(t)}{\kappa(G_t)}B_{a,0}(\{G_t\}),$$
which implies
\begin{align*}
    \mathcal{V}(t) &= \mathbb{E}\left[ \frac{\kappa(t)}{\kappa(G_t)}P(G_t)+1_{(X_{G_t} = (G_t,Z^{(3)}_t,0))} \frac{\kappa(t)}{\kappa(G_t)}b_{a Z^{(3)}_t}(G_t,G_t) \mid \mathcal{F}^\mathcal{Z}_t \right] - \int_{(G_t,t]} \frac{\kappa(t)}{\kappa(s)} B_{a,0}(\diff s) \\
    &= \frac{\kappa(t)}{\kappa(G_t)}\mathbb{E}[P(G_t) \mid \mathcal{F}_t^\mathcal{Z}, X_{G_t} = (G_t,a,G_t)] \mathbb{P}(X_{G_t} = (G_t,a,G_t) \mid \mathcal{F}_t^\mathcal{Z}) \\
    & \quad + \frac{\kappa(t)}{\kappa(G_t)}\mathbb{E}[P(G_t) \mid \mathcal{F}_t^\mathcal{Z}, X_{G_t} = (G_t,Z^{(3)}_t,0)] \mathbb{P}(X_{G_t} = (G_t,Z^{(3)}_t,0) \mid \mathcal{F}_t^\mathcal{Z}) \\
    & \quad +\frac{\kappa(t)}{\kappa(G_t)}b_{a Z^{(3)}_t}(G_t,G_t)\mathbb{P}(X_{G_t} = (G_t,Z^{(3)}_t,0) \mid \mathcal{F}_t^\mathcal{Z}) - \int_{(G_t,t]} \frac{\kappa(t)}{\kappa(s)} B_{a,0}(\diff s)
\end{align*}
on $(\textnormal{RBNSi}_t)$. By the second part of Assumption~\ref{assumption:Independence} and Lemma~\ref{lemma:StrongMarkov}, we get on $(\textnormal{RBNSi}_t)$ that
\begin{align*}
    \mathbb{E}[P(G_t) \mid \mathcal{F}_t^\mathcal{Z}, X_{G_t}=(G_t,a,G_t)] &= \mathbb{E}[P(G_t) \mid   \tau_{\{d\}} > t, X_{G_t}=(G_t,a,G_t)] \\
    &= V_a(G_t,G_t ; (\tau_{\{d\}} > t))
\end{align*}
and 
\begin{align*}
    \mathbb{E}[P(G_t) \mid \mathcal{F}_t^\mathcal{Z}, X_{G_t}=(G_t,Z^{(3)}_t,0) ] &= \mathbb{E}[P(G_t) \mid \tau_{\{d\}} > t, X_{G_t}=(G_t,Z^{(3)}_t,0)] \\
    &= V_{Z^{(3)}_t}(G_t,0;(\tau_{\{d\}} > t))
\end{align*}
using that all of $(X_s)_{s \leq G_t}$ is known in the conditioning for both cases.
Collecting the results, we arrive at the desired expression.
\end{proof}

\noindent The reserve for the RBNSr case is given in Theorem~\ref{thm:RBNSrReserve}.
\begin{theorem} (RBNSr reserve.) \label{thm:RBNSrReserve} \\
We have
\begin{align*}
    \mathcal{V}(t) &=  V_r(t,t-G_t) \times (1-\mathbb{P}(X_{G_t} = (G_t,i,W_t) \mid \mathcal{F}^\mathcal{Z}_t)) \\
    & \quad +\bigg(\frac{\kappa(t)}{\kappa(G_t)} V_{i}(G_t,W_t;(\tau_{\{d\}} > t)) -1_{(Z^{(1)}_t \neq 4)}\Big(\frac{\kappa(t)}{\kappa(G_t)} b_{i r}(G_t,W_t) + \int_{(G_t,t]} \frac{\kappa(t)}{\kappa(s)} B_{r,G_t}(\diff s) \Big) \bigg) \\
    & \qquad \times \mathbb{P}(X_{G_t} = (G_t,i,W_t) \mid \mathcal{F}^\mathcal{Z}_t)
\end{align*}
on $(\textnormal{RBNSr}_t)$ with $i=Z^{(3)}_t$.
\end{theorem}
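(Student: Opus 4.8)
Here is how I would approach the proof of Theorem~\ref{thm:RBNSrReserve}.

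The plan is to follow the proof of Theorem~\ref{thm:RBNSiReserve} step by step, the only structural novelty being that on the reactivated branch the reserve will have to be carried forward from the random time $G_t$ to the fixed time $t$. The one genuinely new preliminary is to show that, on $(\textnormal{RBNSr}_t)$, the composite variable $X_{G_t}$ almost surely takes only the two values $(G_t,r,0)$ and $(G_t,i,W_t)$. This follows from the construction of the transaction time model: once a nonzero amount of benefits has been awarded $Z^{(2)}_t$ is frozen and, by the requirement $H^t_s = H_s$ for $t \geq T_{\{5\}}$, must coincide with the disability time in the eventual true history, so $\tau_{\{i\}} = Z^{(2)}_t = G_t - W_t$ on $(\textnormal{RBNSr}_t)$; the same consistency requirement applied to the recorded reactivation event forces $G_t \leq \tau_{\{r\}}$ (the insurer can never have settled disability benefits for a period in which the insured had already reactivated), so at time $G_t$ the insured is either still disabled with duration $W_t$ since disablement, giving $X_{G_t} = (G_t,i,W_t)$, or has just reactivated, giving $X_{G_t} = (G_t,r,0)$; and $\tau_{\{d\}} > t$ on $(\textnormal{RBNSr}_t)$ excludes death by $G_t$. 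On the boundary event $\{G_t = t\}$ one moreover has $\mathbb{P}(X_{G_t} = (G_t,r,0) \mid \mathcal{F}^\mathcal{Z}_t) = 0$, because the conditional law of the reactivation time has no atom at $t$.

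With this in hand, I would apply Theorem~5.4 of~\citet{Buchardt.etal:2023a}, as in the RBNSi proof and again using that the valid and transaction time cash flows coincide strictly before $G_t$ because $\tau_{\{i\}} = Z^{(2)}_t$, to get on $(\textnormal{RBNSr}_t)$
\[
    \mathcal{P}(t) = \frac{\kappa(t)}{\kappa(G_t-)} P(G_t-) - \int_{[G_t,t]} \frac{\kappa(t)}{\kappa(s)} \mathcal{B}(\diff s),
\]
and then identify the observed payments from the form of $H^t$: on $(\textnormal{RBNSr}_t)$ the marked point process history records the reactivation event $(G_t,r)$ exactly when $G_t < t$, so the integral equals the last disabled-state atom at $G_t$ plus, on $\{G_t < t\}$, the reactivation lump $b_{ir}(G_t,W_t)$ and the reactivated-state stream $B_{r,G_t}$ over $(G_t,t]$ (all suitably discounted). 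Writing $P(G_t-) = B(\{G_t\}) + P(G_t)$, the atom $B(\{G_t\})$ contains that same last disabled-state atom on both branches — the valid time process is disabled just before $G_t$ in either case — and on $\{X_{G_t} = (G_t,r,0)\}$ it contains in addition the lump $b_{ir}(G_t,W_t)$; these terms cancel against the observed payments, so that $\mathcal{P}(t)$ reduces on the reactivated branch to $P(t)$ (the stream $B_{r,G_t}$ over $(G_t,t]$ obtained by running $P(G_t)$ forward to $t$ being exactly what was observed), with a harmless extra $b_{ir}(G_t,W_t)$ if $G_t = t$, and on the disabled branch to $\frac{\kappa(t)}{\kappa(G_t)} P(G_t)$ minus the $1_{(G_t < t)}$-weighted observed reactivation terms.

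It then remains to evaluate the two conditional present values, which proceeds just as in the RBNSi proof. On each branch I would use the second part of Assumption~\ref{assumption:Independence} to replace $\mathcal{F}^\mathcal{Z}_t$ by $\{\tau_{\{d\}} > t\}$ in the conditioning, then upgrade the conditioning on $X_{G_t}$ to the full path $(X_s)_{s \leq G_t}$ — deterministic given the $\mathcal{F}^\mathcal{Z}_t$-known $Z^{(2)}_t$ and the hierarchical state space — and apply the strong Markov property at $G_t$, Lemma~\ref{lemma:StrongMarkov}. On $\{X_{G_t} = (G_t,i,W_t)\}$ this directly yields $V_i(G_t,W_t;(\tau_{\{d\}} > t))$. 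On $\{X_{G_t} = (G_t,r,0)\}$ the functional is $P(t)$ rather than $P(G_t)$; after substituting $G_t$ by a constant, a final use of the ordinary Markov property at $t$ — legitimate since $\{\tau_{\{d\}} > t\}$ together with the absorbing nature of the reactivated state forces $X_t = (t,r,t-G_t)$ — collapses it to $V_r(t,t-G_t)$, the residual $\{G_t = t\}$-term being killed by the vanishing of $\mathbb{P}(X_{G_t} = (G_t,r,0)\mid\mathcal{F}^\mathcal{Z}_t)$ there. Taking $\mathbb{E}[\,\cdot \mid \mathcal{F}^\mathcal{Z}_t]$ turns the two branch indicators into the adjudication probabilities $\mathbb{P}(X_{G_t} = (G_t,r,0)\mid\mathcal{F}^\mathcal{Z}_t)$ and $\mathbb{P}(X_{G_t} = (G_t,i,W_t)\mid\mathcal{F}^\mathcal{Z}_t)$ and produces the stated formula.

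I expect the main obstacle to be the bookkeeping rather than the probabilistic core. Two places call for care. First, establishing the two-value restriction on $X_{G_t}$: this is the one step where one has to work with the fine structure of the transaction time model (the constraints on the increments $\delta^{2k}$ of $G$ and the consistency requirement $H^t_s = H_s$ for $t \geq T_{\{5\}}$) rather than only its basic bi-temporal structure. Second, pinning down $\int_{[G_t,t]} \frac{\kappa(t)}{\kappa(s)} \mathcal{B}(\diff s)$ precisely from the definition of $\mathcal{B}$, arranging that the atom at $G_t$ cancels on both branches, that the indicator $1_{(G_t < t)}$ appears exactly where the statement has it, and that the $\{G_t = t\}$ edge case — where the reactivated branch must be shown conditionally null — is handled. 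By contrast the one structurally new feature compared with Theorem~\ref{thm:RBNSiReserve}, namely propagating the reserve on the reactivated branch from the random time $G_t$ forward to the fixed time $t$, is comparatively mild, resting only on the absorbing nature of the reactivated state together with $\{\tau_{\{d\}} > t\}$.
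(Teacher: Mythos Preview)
Your proposal is correct and follows essentially the same route as the paper's own proof: the decomposition $\mathcal{P}(t) = \frac{\kappa(t)}{\kappa(G_t-)}P(G_t-) - \int_{[G_t,t]}\frac{\kappa(t)}{\kappa(s)}\mathcal{B}(\diff s)$, the identification of the observed payments on $(\textnormal{RBNSr}_t)$, the split over the two possible values of $X_{G_t}$, and the use of Assumption~\ref{assumption:Independence} followed by Lemma~\ref{lemma:StrongMarkov} are all exactly as in the paper. The only organisational difference is on the reactivated branch: the paper keeps $P(G_t)$, evaluates it as $V_r(G_t,0;(\tau_{\{d\}}>t))$, and then decomposes this forward to time $t$ so that the $B_{r,G_t}$ stream cancels against the observed payments, whereas you perform that cancellation one step earlier at the level of $\mathcal{P}(t)$, reducing it to $P(t)$ and then invoking the Markov property at $t$ directly; the two are equivalent manipulations. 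Your explicit treatment of the two-value restriction on $X_{G_t}$ and of the $\{G_t = t\}$ edge case is a welcome addition --- the paper uses both implicitly (the latter via the post-theorem remark that $Z^{(1)}_t = 4$ forces $\mathbb{P}(X_{G_t}=(G_t,i,W_t)\mid\mathcal{F}^\mathcal{Z}_t)=1$), and the final cancellation of the $b_{ir}$ term does in fact rely on $\mathbb{P}(X_{G_t}=(G_t,r,0)\mid\mathcal{F}^\mathcal{Z}_t)$ vanishing on $\{G_t=t\}$.
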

\noindent Similarly to Theorem~\ref{thm:RBNSiReserve}, one can use Remark~\ref{remark:ConditionalSemiMarkov} and the adjudication probability $\mathbb{P}(X_{G_t} = (G_t,Z^{(3)}_t,W_t) \mid \mathcal{F}_t^\mathcal{Z})$ to calculate the RBNSr transaction time reserve. Note also that if $Z^{(1)}_t = 4$ then $\mathbb{P}(X_{G_t} = (G_t,Z^{(3)}_t,W_t) \mid \mathcal{F}^\mathcal{Z}_t) = 1$ and furthermore $G_t=t$, $Z^{(3)}_t=Y_t$, and $W_t=U_t$. Thus, the expression collapses to $\mathcal{V}(t)=V_{Y_t}(t,U_t)$ which is the classic valid time disability reserve. 

\begin{proof}
As in the RBNSi-case, we find
\begin{align*}
    \mathcal{P}(t) &= \frac{\kappa(t)}{\kappa(G_t-)}P(G_t-)- \int_{[G_t,t]} \frac{\kappa(t)}{\kappa(s)} \mathcal{B}(\diff s).
\end{align*}
The latter term is $\mathcal{F}^\mathcal{Z}_t$-measurable and on $(\textnormal{RBNSr}_t)$ satisfies {\fontsize{10.5}{10.5} $$\int_{[G_t,t]} \frac{\kappa(t)}{\kappa(s)} \mathcal{B}(\diff s) =  \frac{\kappa(t)}{\kappa(G_t)} B_{Z^{(3)}_t,G_t-W_t}(\{G_t\})+1_{(Z^{(1)}_t \neq 4)} \Big( \frac{\kappa(t)}{\kappa(G_t)} b_{Z^{(3)}_t r}(G_t,W_t)+\int_{(G_t,t]} \frac{\kappa(t)}{\kappa(s)} B_{r,G_t}(\diff s)\Big)$$}

\noindent so the interesting part is $\kappa(t)/\kappa(G_t-) \times P(G_t-)$. Proceeding in a similar manner as before, we see on $(\textnormal{RBNSr}_t)$ that
$$\frac{\kappa(t)}{\kappa(G_t-)}P(G_t-) = \frac{\kappa(t)}{\kappa(G_t)}P(G_t)+1_{(X_{G_t}=(G_t,r,0))}\frac{\kappa(t)}{\kappa(G_t)}b_{Z^{(3)}_t r}(G_t,W_t)+\frac{\kappa(t)}{\kappa(G_t)}B_{Z^{(3)}_t,G_t-W_t}(\{G_t\}).$$
Now on $(\textnormal{RBNSr}_t)$ we have
\begin{align*}
    \mathcal{V}(t)  &= \frac{\kappa(t)}{\kappa(G_t)} \mathbb{E}\left[ P(G_t)\mid \mathcal{F}^\mathcal{Z}_t , X_{G_t} = (G_t,Z^{(3)}_t,W_t) \right]\mathbb{P}(X_{G_t} = (G_t,Z^{(3)}_t,W_t) \mid \mathcal{F}^\mathcal{Z}_t) \\
    & \quad +\frac{\kappa(t)}{\kappa(G_t)} \mathbb{E}\left[ P(G_t)\mid \mathcal{F}^\mathcal{Z}_t , X_{G_t}=(G_t,r,0) \right]\mathbb{P}(X_{G_t}=(G_t,r,0) \mid \mathcal{F}^\mathcal{Z}_t)\\
    & \quad -1_{(Z^{(1)}_t \neq 4)}\bigg(\frac{\kappa(t)}{\kappa(G_t)}b_{Z^{(3)}_t r}(G_t,W_t)\mathbb{P}(X_{G_t}=(G_t,Z^{(3)}_t,W_t) \mid \mathcal{F}^\mathcal{Z}_t) + \int_{(G_t,t]} \frac{\kappa(t)}{\kappa(s)} B_{r,G_t}(\diff s)\bigg).
\end{align*}
By the second part of Assumption~\ref{assumption:Independence} and Lemma~\ref{lemma:StrongMarkov}, we see on $(\textnormal{RBNSr}_t)$ that
\begin{align*}
    \mathbb{E}\left[ P(G_t)\mid \mathcal{F}^\mathcal{Z}_t , X_{G_t} = (G_t,Z^{(3)}_t,W_t) \right] &= \mathbb{E}\left[ P(G_t)\mid \tau_{\{d\}}>t, X_{G_t} = (G_t,Z^{(3)}_t,W_t) \right] \\
    &= V_{Z^{(3)}_t}(G_t,W_t;(\tau_{\{d\}} > t))
\end{align*}
 and
 \begin{align*}
     \mathbb{E}\left[ P(G_t)\mid \mathcal{F}^\mathcal{Z}_t , X_{G_t} = (G_t,r,0) \right] &= \mathbb{E}\left[ P(G_t)\mid \tau_{\{d\}}>t, X_{G_t} = (G_t,r,0) \right] \\
     &= V_r(G_t,0;(\tau_{\{d\}} > t)) \\
    &= \frac{\kappa(G_t)}{\kappa(t)}V_r(t,t-G_t) + \int_{(G_t,t]} \frac{\kappa(G_t)}{\kappa(s)} B_{r,G_t}(\diff s)
 \end{align*}
using that $(X_s)_{s \leq G_t}$ is known in the former case, and towering on $(X_s)_{s < G_t}$ in the latter case, observing that the expectation given $(X_s)_{s \leq G_t}$ only depends on $X_{G_t}$ and not $(X_s)_{s < G_t}$. Putting everything together, we arrive at the claimed result.    
\end{proof}

\begin{remark} (Weakening the independence assumptions.) \label{remark:Violation} \\
In the absence of Assumption~\ref{assumption:Independence} and~\ref{assumption:IndependenceG}, the valid time hazards and reserves could be influenced by additional transaction time information such as the disability reporting delay $U_{\mathcal{I}}$. This could be relevant if e.g.\ longer reporting delays were indicative of a more serious disability such that the intensity of $N_{ir}$ with respect to the filtration $t \mapsto \mathcal{F}^X_t \vee \sigma(U_{\mathcal{I}})$ was $\lambda_{ir}(t) = 1_{(Y_{t-} = i)} \mu_{ir}(t,U_{t-},U_{\mathcal{I}})$ with $\mu_{ir}$ being decreasing as a function of the last argument. Imposing an assumption like Assumption~\ref{assumption:Independence} when also conditioning on $U_{\mathcal{I}}$ would then for example lead to the award-term of the RBNSi reserve becoming
\begin{align*}
    \mathbb{E}[P(G_t) \mid \mathcal{F}_t^\mathcal{Z}, X_{G_t}=(G_t,i,0)] =  V_i(G_t,0,U_{\mathcal{I}}; (\tau^d > t))
\end{align*}
with obvious notation. We briefly note that it is not a priori clear whether one would expect long reporting delays to be indicative of more or less severe disabilities. One could hypothesize that people with severe disabilities would find it more demanding to submit insurance claims. On the other hand, they might not need as much time to collect medical evidence if it is self-evident that they will be approved for disability benefits.

An alternative way to weaken Assumption~\ref{assumption:Independence} would be to incorporate more of the transaction time information in the valid time model e.g.\ by using different disabled states for different disability severities instead of a single disabled state. If the reporting delays only affect our estimate of the future trajectory of the valid time process $X$ through the information they give us about the severity of the disability, Assumption~\ref{assumption:Independence} would be satisfied in the larger valid time model that incorporates information about severity of the disability.
\demormk
\end{remark}

\begin{remark} (On the simplifying assumptions.) \label{rmk:SimplifyingAssumptions} \\
 Some notable simplifications that have been made in order to arrive at tractable transaction time reserves are the independence assumptions (which were discussed in Remark~\ref{remark:Violation}), that there can be at most one disability event in the coverage period which reaches the payout stage, that the time and type of the disability event is completely known once benefits have started, and that the insurer cannot retract disability benefits. While it is true that none of these assumptions fully hold in practice, we believe that they are not seriously violated, and they only contribute with second-order effects compared to the main effects that have been included. 
 
 For example, the probability of experiencing $q$ disabilities with independent causes is roughly equal to the disability hazard raised to the $q$'th power, so while more than one disability may occur in reality it is very uncommon. Similarly, there might be situations where the insurer and insured do not agree on the time of disablement leading to the time of disablement being changed to an earlier date after the insurer has started paying benefits. There may also be situations where retraction of disability benefits occurs if the insured willfully withheld information about their reactivation. However, the changes to the event times are probably sufficiently small and infrequent that the effect is negligible compared to other sources of error stemming from modeling, estimation, and forecasting error related to the biometric and financial model constituents. Accommodating transitions between the disabled states could be important depending on what the disability types represent. For example, they could represent different severities. If the severity of a disability changes often and different severities lead to substantially different payouts then this would be important to include in the model. We hence discuss possible remedies in the next paragraph.
    
    If the aforementioned effects are sufficiently large to warrant explicit modeling, our models may still serve as a starting point. To accommodate the possibility of the time of disability changing after benefits have been awarded, one could for example add a term to the award-part of the RBNSi reserve equal to the the probability that the insured will be awarded benefits from an earlier time than $G_t$ multiplied with the average additionally awarded amount for such cases. To accommodate multiple disabilities, a pragmatic approach could be to place the disability periods end-to-end in the estimation phase, such that an additional disability was treated as an annulment of a reactivation. This skews the time value of the cash flow but otherwise results in consistent reserves. It will however likely make the individual predictions less precise since the duration dependence will stem from a more heterogeneous population. For example, a long disability duration could stem from a single disability from which the insured has not reactivated, or it could be that they have reactivated from a long disability but then recently become disabled again. 
    
    If transitions between the disabled states were possible in the valid time state space one could formulate a transaction time model similar to the one specified towards the end of Example 5.8 in~\citet{Buchardt.etal:2023a}. This would however result in a substantially more complicated model. We instead propose to let disability type $i_k$ represent disabilities that start out as type $i_k$ and estimate the reactivation, disabled mortality, and reactivated mortality hazards consistently with this. If the valid time disability payments also depend on the disability type, we propose to model these payments conditional on the initial disability type and the disability duration. By the tower property, this leads to the same reserves but in practice requires that one also estimates these conditional payments which brings the approach closer to the non-life insurance literature where the benefit sizes also have to be modeled. This approach may also be useful in other situations where the benefit size depends on more than the state and duration process; it could for example depend on whether the insured is receiving benefits from the government or other insurance companies. Taking the conditional expectation of the payments given the state and duration brings the problem back into something that can be represented in the usual semi-Markov framework.
    
    As illustrated here, a benefit of having interpretable closed-form expressions for the reserves is that it is possible to reason about how to adjust the model when the underlying assumptions change.
    \demormk
\end{remark}

\section{Estimation} \label{sec:estimation}

To compute the transaction time reserves, one needs to estimate the valid time transition hazards, the IBNR-factor, and the adjudication probabilities. The IBNR-factor and adjudication probabilities are new model elements, and one hence needs to find a suitable way to estimate these. In addition, standard estimation procedures also do not apply for the valid time transition hazards since the data is contaminated by reporting delays and incomplete event adjudication. 

For simplicity, we limit the discussion to the situation where there is at most one reported disability claim in the sense that $Z^{(2)}$ can increase only once. In this case, the delay between the disability time $\tau_{\mathcal{I}}$ and the time of the first reported disability $T_{\{2\}}$ equals the reporting delay of the disability event, the latter being the difference between $\tau_{\mathcal{I}}$ and the last time where a disability is reported. This is sufficient for the application in Section~\ref{sec:application} and makes the statistical problem a special case of the one studied in~\citet{Buchardt.etal:2023b}. The methods however easily generalize to the case where several distinct disabilities may be reported, in which case both the IBNR-factor and disability reporting delay distribution would need to be estimated in order to compute the IBNR reserve and estimate the disability hazard, respectively. 

The data structure in~\citet{Buchardt.etal:2023b} consists of events that are reported with a delay and which may be confirmed or annulled upon adjudication. Their proposed estimation algorithm is a two-step procedure, where the first step is to estimate the adjudication probabilities and the reporting delay distribution, and the second step uses these to estimate the valid time hazards while correcting for contamination. Due to the assumed model for $\mathcal{Z}$, the adjudication probability $\mathbb{P}(X_{G_t} = (G_t,Z^{(3)}_t,W_t) \mid \mathcal{F}_t^\mathcal{Z})$ can be calculated as an absorption probability for a suitable multistate model as in~\citet{Buchardt.etal:2023b}. We henceforth refer to the transition hazards in the adjudication multistate model as adjudication hazards. Furthermore, estimating the IBNR-factor is equivalent to estimating the disability reporting delay distribution since
\begin{align*}
    I_i(s,t) = \mathbb{P}(\textnormal{CBNR}_t \mid \tau_{\mathcal{I}} = s, Y_{\tau_{\mathcal{I}}} = i) = \mathbb{P}( T_{\{2\}}-\tau_{\mathcal{I}} > t-s \mid \tau_{\mathcal{I}} = s, Y_{\tau_{\mathcal{I}}} = i).
\end{align*}
Imposing a parametric model for all three model elements hence results in an estimation problem that can be handled using~\citet{Buchardt.etal:2023b}. More details are given in Appendix~\ref{sec:EstimationAppendix}.

The estimator of the valid time hazards described in Appendix~\ref{sec:EstimationAppendix} and employed in the data application outlined in Section~\ref{sec:application} corresponds to the \textit{Poisson approximation} from~\citet{Buchardt.etal:2023b} rather than the full estimator described in Section 3.3 of that paper.  The approximation has the advantage of being simpler to implement and allowing one to estimate the different valid time hazards separately. Because the data employed in Section~\ref{sec:application} only contains deaths recorded during the adjudication period, it is not possible to run the full estimation procedure of~\citet{Buchardt.etal:2023b}. The mortality rates are nevertheless needed in order to calculate the reserves in the data application, so in Section~\ref{sec:application} we employ the hazards given in Table~\ref{table:deathHazards} which are inspired by those published by the Danish Financial Supervisory Authority (FSA).

\begin{remark} (Restricting the information used for adjudications.) \label{remark:AdjHazardInfo} \\
     Note that~\citet{Buchardt.etal:2023b} allows one to reduce the information that the adjudication hazards depend on e.g.\ such that the reactivation adjudication hazards are not conditional on the disability reporting delay. In this paper, the adjudication probabilities are however defined conditional on the full transaction time filtration $\mathcal{F}^\mathcal{Z}$, so to compute these, the adjudication hazards must also depend on all of $\mathcal{F}^\mathcal{Z}$. One can of course still impose structural assumptions on the adjudication hazards such that they only depend on parts of $\mathcal{F}^\mathcal{Z}$. \demormk
\end{remark}

\begin{remark}(Estimation and independence assumptions.) \\
    The independence imposed via Assumption~\ref{assumption:Independence} and~\ref{assumption:IndependenceG} is not used in the estimation procedure of~\citet{Buchardt.etal:2023b}, and one hence still obtains consistent and asymptotically normal estimators if these assumptions do not hold. We conjecture that one could derive estimators that are more efficient than the ones suggested here by exploiting these independence assumptions. \demormk
\end{remark}

\FloatBarrier

\section{Data application} \label{sec:application}

\noindent To illustrate our methods, we calculate transaction time reserves at time $\eta$ for a subset of the LEC-DK19 (Loss of Earning Capacity -- Denmark 2019) data set which was introduced in~\citet{Buchardt.etal:2023b}. The data includes information on disability exposure and occurrences, reactivation exposure and occurrences, reporting delays for disabilities, and adjudication exposure and occurrences related to both disabilities and reactivations. The data window $[0,\eta]$ is $[\textnormal{31/01/2015},\textnormal{01/09/2019}]$. Available covariates are gender and age. 

We note that the biometric data conforms with the valid time state space from Figure~\ref{fig:Ystatespace} with a single disabled state $i_1$ which we henceforth refer to as $i$ for notational convenience. Furthermore, the adjudication data conforms with the adjudication multistate model described in Appendix~\ref{sec:EstimationAppendix}. In the following we estimate the relevant model elements using the method proposed in Section~\ref{sec:estimation} and subsequently calculate the reserves using the results from Section~\ref{sec:reserving}. The implementation is written in \texttt{R}~\citep{R:2023} and is available on GitHub (\url{https://github.com/oliversandqvist/Web-appendix-disability-reserving}).

For estimation of the hazards, we let all the covariates enter in a linear predictor with log link and assume that all hazards are variationally independent such that there is no overlap in parameters between different hazards. The disability hazard is regressed on age, gender, and calendar time, while the reactivation hazard is regressed on the same covariates but also the duration as disabled. As noted in Section~\ref{subsec:Asymp}, the data does not permit a reasonable estimate of the death hazards, and we thus simply employ the death hazards from Table~\ref{table:deathHazards} when calculating reserves. 

The adjudication hazards for an $\textnormal{RBNSi}$ claim are regressed on age, gender, duration since the disability event, duration since the disability event was reported, and whether or not the claim has been (temporarily) rejected previously. The adjudication hazards for an $\textnormal{RBNSr}$ claim are regressed on age, gender, duration since the disability event, and duration since the reactivation event. As noted in Remark~\ref{remark:AdjHazardInfo}, these hazards should depend on all of $\mathcal{F}^\mathcal{Z}$, so the fact that we for example do not regress the reactivation adjudication hazards on the disability reporting delay should be understood as the implicit assumption that the true value of that regressor is known to be zero. 

For the disability reporting delay, we impose a Weibull proportional reverse time hazard distribution which has the distribution function $t \mapsto (1-\exp(-(\lambda t)^k))^{\exp(W^T \beta)}$ for covariates $W$ and parameters $(\lambda,k,\beta)$. For covariates, we use age at disability onset and gender. The data does not contain observations of reactivation reporting delays, but we take this to be an artifact of the data rather than a violation of Assumption~\ref{assumption:IndependenceG} and hence proceed as if this had not been the case, recalling that the reactivation reporting delay distribution is only needed for estimation and not for reserving. 

With these specifications, we note that the estimation procedure becomes identical to the one in Section 6 of~\citet{Buchardt.etal:2023b} and we may hence use their estimates; confer with Section 6 and Section G of the supplementary material in~\citet{Buchardt.etal:2023b} for the parameter values. We however keep the calendar time effect fixed at its value at $\eta$ so as to not overextrapolate the observed calendar time trend when calculating the prospective reserves.

\begin{table}[ht!]
\centering
{%
\begin{tabular}{lcccc}
 &  Age  & Male & Female & $\min\{\textnormal{Duration},5\}$ \\[3pt]
$\mu_{ad}$, $\mu_{rd}$ & 0.09  & -9.50 & -9.80 & -     \\
$\mu_{id}$ & 0.09 & -6.40 & -6.80 & -0.25
\end{tabular}}
\caption{Death hazards based on estimates published by the Danish FSA.}
\label{table:deathHazards}
\end{table}

\noindent For reserving, we sample 100 random insured at time $\eta$ and compute reserves for each of the categories CBNR, RBNSi, and RBNSr. The data contains around 250,000 insured that are in the portfolio at time $\eta$, but it would take a long time to compute the reserve for everyone using our proof-of-concept \texttt{R} implementation. The terms that take the longest to compute are valid time active reserves and IBNR reserves which for a single insured take a couple of seconds to evaluate on a regular laptop. Insurance companies have access to optimized calculation kernels and greater computing power with some presently relying on semi-Markov models for their reserves, so this is not a limitation of our proposed approach but rather of this specific implementation. Furthermore, it is possible to speed up computations by finding suitable approximations of, for example, the IBNR reserve given in Theorem~\ref{thm:CBNRReserve}. For CBNR and RBNSr, we sample the 100 insured without replacement, but since there are only 59 insured in the RBNSi category at time $\eta$, we sample these with replacement. 

The transaction time reserves are compared with a naive approach where $X^\eta_\eta$ is plugged into the valid time reserves. The naive approach thus leads to reserves that are sufficient to cover disabilities that have an ongoing payout at time $\eta$ and disabilities that occur after time $\eta$ but ignores IBNR-claims, claims that are under adjudication, and possible reapplications. 

The transaction time reserves are further compared with a simple approach where ad-hoc adjustments of the valid time reserves are made to adjust for IBNR and incomplete adjudication. The simple approach takes the CBNR reserve to be a valid time active reserve where the coverage period is extended with the average disability reporting delay. The average delay $d$ is found to be around 0.53 years when the average is based on disabilities that occurred at least two years before time $\eta$ to limit the effect of right-truncation. The heuristic is that the insurance company should cover disabilities that arrive up to $d$ years after the end of the coverage period since these disabilities occurred within the coverage period if the reporting delay was deterministically equal to $d$. The RBNSi reserve is chosen to be a valid time disability reserve with duration $0$ since most claims get awarded. For the RBNSr reserve, a valid time disability reserve is used if there are ongoing disability payments and a valid time active reserve is used otherwise to accommodate those who will apply for additional benefits in the future. 

The same hazards are used in all of the approaches so the differences between the results only reflect the reserving methodologies and not the estimation. Since the data does not contain information about benefit type or size we set $B(\diff t) = 1_{(Y_{t-}=i)} 1_{(t-U_{t-} \leq \eta+3)} 1_{(a + t \leq 67)} \diff t$, where $a$ is the age at time $0$, corresponding to a unit disability annuity until retirement at age $67$ with a coverage period of 3 years. Note that this specification of the cash flow also implies that the insured are covered for disabilities occurring before time $\eta$. We finally assume a constant force of interest $r \equiv 0.02$. The reserves are calculated by plugging in the estimated model elements into Theorem~\ref{thm:CBNRReserve},~\ref{thm:RBNSiReserve}, and~\ref{thm:RBNSrReserve}. The state-wise valid time reserves entering into these expressions are calculated by solving Thiele's differential equation iteratively over the states by exploiting the hierarchical structure of $\mathcal{J}$.

\begin{remark} (Alternative simple CBNR reserves.) \label{rmk:AlternativeCBNR} \\
   An alternative but similar simple approach for the CBNR category would be to reserve $V_a(\eta,\eta) + V_i(\eta-d,0) \times  \mu_{ai}(\eta-d,\eta-d) \times d$. Heuristically, with a constant reporting delay $d$, the probability of having an unreported disability is the probability of having a disability occur in $(\eta-d,\eta]$ which is roughly $\mu_{ai}(\eta-d,\eta-d) \times d$, and for each of these one reserves $V_i(\eta-d,0)$. This can also be motivated by applying relevant approximations to the expression in Theorem~\ref{thm:CBNRReserve}. The resulting reserve is $9.22$, which brings the difference between the proposed and simple method down by a factor of $2/3$. Another alternative simple reserve could be $\exp(r \times d) \times V_a(\eta-d,\eta-d)$ which results in a reserve of $9.21$. 
   
   The remaining difference can generally be attributed to the timing of the disability and the covariate dependence being incorporated slightly more imprecisely in the simple model. Since most disability claims are short, the backpay may constitute a considerable part of the payment, making it important to correctly assess the timing of the disability. Note that the performance of the simple methods would likely deteriorate in more complicated situations with inflow/outflow of insured, non-constant interest, and calendar time effects, confer with Remark~\ref{rmk:RepDelay}.
   \demormk
\end{remark}

\noindent The results are given in Table~\ref{table:reserves}. The settled category is not depicted since the reserve is zero in all cases. The largest relative difference arises for the RBNSi reserve. The naive method under-reserves since it ignores the reported disability and reserves as if the insured was active because no disability benefits have been awarded yet. Surprisingly, the proposed method also leads to larger reserves than the simple method which always reserves the valid time disability reserve. This happens because the adjudication probabilities are close to one (they have an average of around 0.9) and there are many older insured where the effect of conditioning on not having died and reserving from time $G_t$ instead of $t$ leads to moderately larger reserves. 

The second largest relative difference is seen for the CBNR reserve. Here, the probability weighting of the active reserve is observed to be very close to one, so the main difference between the proposed and naive method is the IBNR contribution. Thus, the naive method under-reserves as it neglects this term. The simple method much closer to the proposed method as expected, but there is still a sizeable relative difference. 

The smallest relative difference for the non-settled cases is seen for the RBNSr reserve. This is because, in the data set, there are considerably more insured receiving running benefits than are reactivated. In fact, only four reactivated insured were sampled and none of them had applied for additional benefits. As the population mix shifts toward a higher proportion of reactivated subjects, the difference between the proposed and naive method is likely to grow.

\begin{table}[ht!]
\centering
{%
\begin{tabular}{lccc}
 &  $(\textnormal{CBNR}_\eta)$  & $(\textnormal{RBNSi}_\eta)$ & $(\textnormal{RBNSr}_\eta)$   \\[3pt]
Proposed method & 9.29 & 502.44 & 961.68   \\
Simple method & 8.98  & 481.20 & 945.66   \\
Naive method & 7.73  & 6.23 & 945.35   \\
Simple method difference &  0.32 (3.53\%) &  21.23 (4.41\%)  & 16.02 (1.69\%)  \\
Naive method difference &  1.56 (20.19\%)  &  496.21 (7961.48\%)  & 16.33 (1.73\%)  
\end{tabular}}
\caption{Reserves for 100 randomly sampled insured from each of the claims settlement categories except the settled category where the reserve is identically zero.}
\label{table:reserves}
\end{table}

To explore the practical implications of our results for a full insurance portfolio, we approximate the average reserve in each of the categories for the wider insurance portfolio by the average reserve of the 100 sampled insured. Multiplying the average reserve in a given category with the total number of insured in that category, we obtain the results depicted in Figure~\ref{fig:Barchart}. Comparing with Table~\ref{table:reserves}, we see that despite the RBNSi category having the largest absolute and relative differences for the 100 sampled insured, the difference on the portfolio level is comparable with that of the RBNSr category, which had the smallest relative difference for the 100 sampled insured. This is because there are considerably more insured in the RBNSr category at time $\eta$. Similarly, the CBNR category, which showed the smallest absolute difference in Table~\ref{table:reserves}, leads to the largest difference on the portfolio level due to this category being by far the largest. In total, the naive and simple method leads to portfolio reserves that are around 11.1\% and 2.7\% smaller than the proposed method, respectively. 

To gain further insight into the financial implications, consider that in 2019, the annual reports of two large Danish insurers showed portfolio reserves for health and disability insurance obligations amounting to 9,351 and 17,606 million DKK respectively. Therefore, a 2.7\% increase in these reserves would equate to approximately 250 million DKK for the former and 470 million DKK for the latter.
\begin{figure}[H]
    \centering
    \includegraphics[scale=0.75]{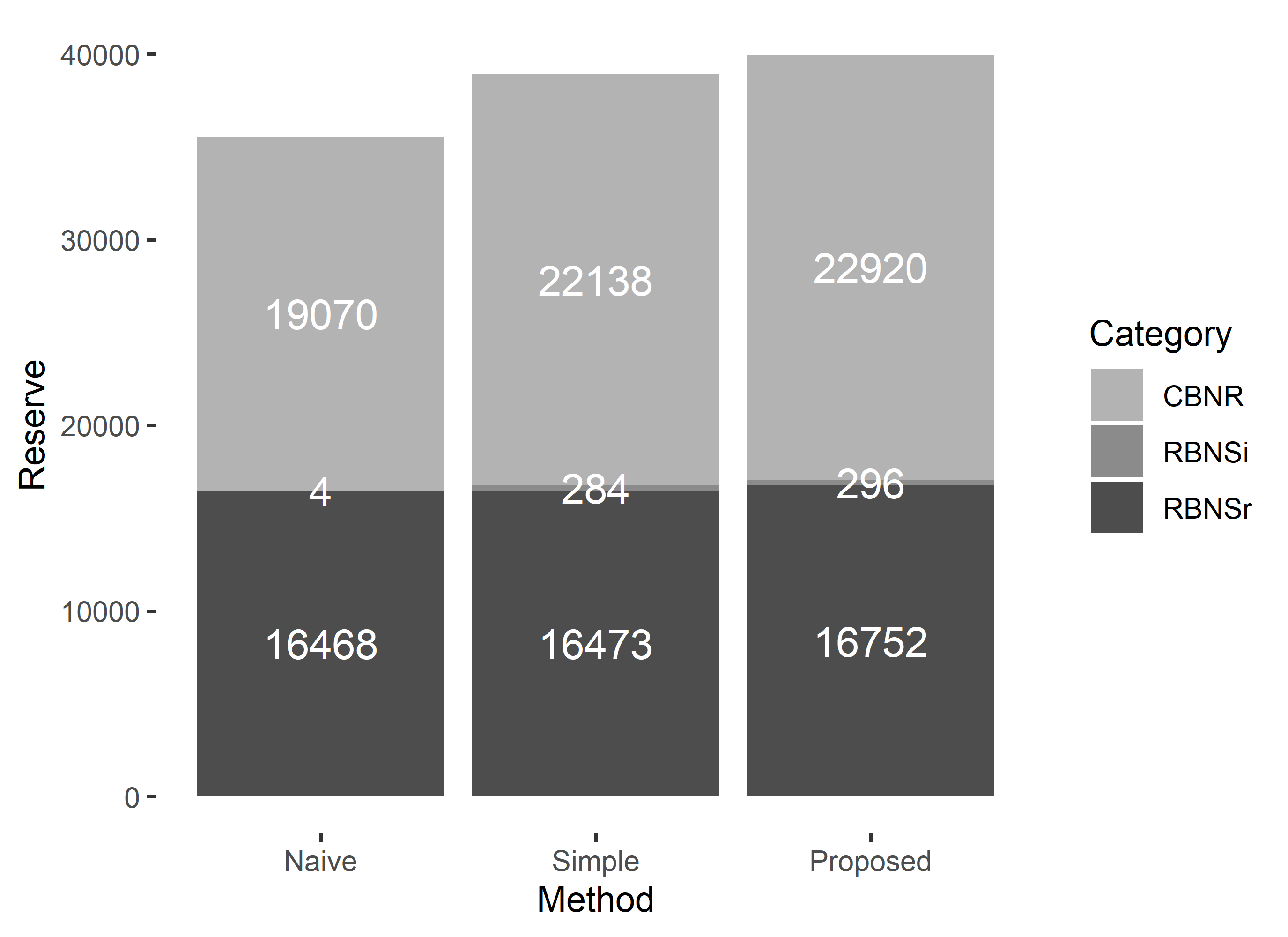}
    \caption{The approximated portfolio level reserve decomposed by category.}
    \label{fig:Barchart}
\end{figure}

\noindent It would be highly relevant to compare the different reserves with observed claim developments to see which best describe the data. This is however not possible with the current data since the disability and reactivation occurrences and exposures are available for a single valuation date only, and their values at different valuation dates also cannot be inferred from the adjudication data since the same id does not refer to the same insured across the individual data tables. This is therefore left as a topic for future research.

\section{Conclusion} \label{sec:conclusion}

This paper develops an individual reserving model for disability insurance in the presence of information delays caused by reporting delays and adjudication processes. We have introduced suitable conditional independence assumptions that lead to tractable and interpretable reserves, which may be calculated using the usual valid time hazards, the IBNR-factor, and the adjudication probabilities. The reserves are tailored to the features of disability insurance schemes by accommodating reporting delays and adjudication processing while preserving the advantages offered by valid time multistate models, namely that contractual payments are an a priori known function of a multistate process whose intertemporal distribution is well-understood. It is argued that the estimation procedure from~\citet{Buchardt.etal:2023b} may be used to estimate the model constituents of the reserves. Finally, the practical potential of our models is illustrated through an application to a real insurance data set.

\section*{Acknowledgments and declarations of interest}

This research has partly been funded by the Innovation Fund Denmark (IFD) under File No.\ 1044-00144B. The author declares no conflicts of interest. I would like to thank Kristian Buchardt and my supervisor Christian Furrer for many fruitful discussions. I would also like to thank them both for their helpful comments on an earlier version of the manuscript. I am also very grateful to an anonymous referee for comments and suggestions that improved the paper significantly.

\newpage

\begin{appendices}
\section{Proof of Lemma~\ref{lemma:StrongMarkov}} \label{sec:StrongMarkovProof}

To state and prove Proposition~\ref{prop:strongMarkov}, which immediately implies Lemma~\ref{lemma:StrongMarkov}, we first introduce some marked point process notation. Let $(\mathcal{M},\mathcal{H},(\mathcal{H}_t)_{t \geq 0})$ be the canonical space of counting measures, see Section 2 and 4.2 in~\citet{Jacobsen:2006}. Let $\mu \in \mathcal{M}$ be the underlying random counting measure for $X$. Write $(T_n,X_n)$ for the jump times and marks of $\mu$. For a given random counting measure $m$, we let $\tau_n(m)$ and $\eta_n(m)$ be the $n$'th jump time and jump mark respectively implying $T_n=\tau_n(\mu)$ and $X_n = \eta_n(\mu)$. For any measurable random time $R$, define the truncated measure $\theta_R \mu : (R < \infty) \rightarrow \mathcal{M}$ by
$$\theta_R \mu (\omega) = \sum_{\substack{ n : R(\omega) < T_n(\omega) < \infty }} \varepsilon_{(T_n(\omega),X_n(\omega))} $$
where $\varepsilon_{(t,x)}$ is the Dirac measure in $(t,x)$. We write $T_{R,n} = \tau_n(\theta_R \mu)$ and $X_{R,n} = \eta_n(\theta_R \mu)$ for the jump times and jump marks determining $\theta_R \mu$. Thus, if $R(\omega) < \infty$,
$$\theta_R \mu (\omega) = \sum_{\substack{ n : T_{R,n}(\omega) < \infty }} \varepsilon_{(T_{R,n}(\omega),X_{R,n}(\omega))} $$
with $R(\omega) < T_{R,1}(\omega) \leq T_{R,2}(\omega)  \leq \dots$.  Also let $T_{R,0}=R$ and $X_{R,0}=X_R$. We denote by $Q^x$ the distribution of $\theta_{x_1} \mu$ given $X_{x_1} = x$ constructed as the time-inhomogeneous case in~\citet{Jacobsen:2006} p.\ 157-158. Note that this actually corresponds to $Q^{x_1,x}$ in the notation of~\citet{Jacobsen:2006}. Even though $X$ is time-homogeneous, this is usually not the case when we condition on $1_{(\tau_{\{d\}} \leq t)}$ which is why we employ the time-inhomogeneous construction. The function governing the behavior of $X$ between jumps is denoted $\phi_{v s}$, meaning that if $v \leq s$ and no jumps occurred in $(v,s]$, one has $X_s = \phi_{vs}(X_v)$. The jump time and jump mark Markov-kernels for $Q^{x}$ are denoted $F_{t_n,y_n}(v)$ and $r_{t}(\phi_{t_n t}(y_n),C)$ respectively. The interpretation is that $F_{t_n,y_n}(v)$ gives the probability that the next jump has occurred by time $v$ given that the previous jump happened at time $t_n$ with mark $y_n$, while $r_{t}(y,C)$ gives the probability that an event occurring at time $t$ from state $y$ ends up in the set $C$.  Note $F_{\infty,\nabla}(v) = 0$ for any $v \in [0,\infty)$ and 
$r_{\infty}(y,C) = 1_{(\nabla \in C)}$.

We let $t$ be fixed but arbitrary. Let $\tilde{Q}^x$ be constructed as $Q^x$ but according to the modified Markov kernels where one additionally conditions on $1_{(\tau_{\{d\}} \leq t)}$, hence now being stochastic. One could, of course, have removed this additional stochasticity by replacing the indicator with the event $(\tau_{\{d\}} > t)$ in the conditioning on the relevant event $(\textnormal{RBNS}_t)$. We however stick with this construction since it more easily generalizes to other cases where one wants to keep additional information that is not deterministically known on the relevant event. It holds that $\tilde{Q}^x(H) = \mathbb{P}(\theta_{x_1} \mu \in H \mid X_{x_1}=x, 1_{(\tau_{\{d\}} \leq t)})$ almost surely for any $H \in \mathcal{H}$. Note that for every possible outcome of the conditioning information, the transition kernels of $\tilde{Q}^x$ stay on the Markov form:\ either the conditioning is superfluous because it relates to an event that occurred before the previous jump time, or it is future-measurable and one can hence use the formula for conditional probabilities in the jump time and jump mark kernels, use the Markov property, and then use the formula for conditional probabilities in reverse. This shows that the transition kernels indeed only depend on $(t_n,y_n)$ and not on $t_1,...,t_{n-1}$ and $y_1,...,y_{n-1}$. They will be denoted $\Tilde{r}_{t}(y,C)$ and $\tilde{F}_{t_n,y_n}(v)$. For notational convenience, we write $\mathds{1}(t) = 1_{(\textnormal{RBNS}_t)}$. We now state and prove Proposition~\ref{prop:strongMarkov}, which immediately implies the statement in Lemma~\ref{lemma:StrongMarkov}.

\begin{proposition} (Strong Markov type property at $G_t$.) \label{prop:strongMarkov} \\
Under Assumption~\ref{assumption:IndependenceG}, one has
\begin{align} \label{eq:strongMarkovResult}
    \mathds{1}(t)\mathbb{P}(\theta_{G_t} \mu \in \cdot \mid (X_s)_{s \leq G_t}, 1_{(\tau_{\{d\}} \leq t)} ) = \mathds{1}(t)\Tilde{Q}^{X_{G_t}}(\cdot)
\end{align}
$\mathbb{P}$-a.s.
\end{proposition}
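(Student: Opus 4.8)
The plan is to prove the strong Markov property at $G_t$ by decomposing over the finitely many configurations of $H_{G_t}$ (equivalently, over the value of the jump-count $n$ with $T_n \le G_t < T_{n+1}$ in the marked point process sense) and then, on each such piece, reducing the statement to the ordinary Markov property of $X$ together with Assumption~\ref{assumption:IndependenceG}. Since the model is hierarchical and $G_t \le t$, on $(\textnormal{RBNS}_t)$ the process $(X_s)_{s \le G_t}$ is piecewise deterministic with a bounded number of jumps, and $G_t$ is a random time that is not an $\mathcal{F}^X$-stopping time but is $\mathcal{F}^\mathcal{Z}_t$-measurable; the key structural fact I would exploit is that $X_{G_t}$ together with the knowledge that the last jump before $G_t$ happened at a particular time determines the conditional law of $\theta_{G_t}\mu$ up to the ``$\tau_{\{d\}} \le t$'' conditioning.

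First I would fix $t$ and work on the event $(\textnormal{RBNS}_t)$, where $G_t = \tau_{\{i\}}$ or $G_t \in \{$some time after $\tau_{\{i\}}\}$ depending on whether we are in the RBNSi or RBNSr sub-case; in all cases $0 < W_t$ and $G_t \le t < \tau_{\{d\}}$. I would then condition on the $\sigma$-algebra generated by $(X_s)_{s \le G_t}$ and $1_{(\tau_{\{d\}} \le t)}$, and use a monotone class / $\pi$-$\lambda$ argument to reduce to showing equality of the two sides evaluated on generating sets $H \in \mathcal{H}$ of the form determined by finitely many jump times and marks. Second, on the event where $(X_s)_{s\le G_t}$ has its last jump at time $v_0$ with mark $x$, I would invoke Assumption~\ref{assumption:IndependenceG} with this $v = G_t$: it gives that $\mathcal{F}^{G_t, X} \indep \mathcal{F}^X_{G_t} \vee \sigma(X_{G_t}) \mid 1_{(\tau_{\{d\}} \le t)}, X_{G_t}$, but more to the point, applied at deterministic times $v$ and then specialized, it lets me discard the full past $(X_s)_{s \le G_t}$ in favor of $X_{G_t}$ alone (retaining $1_{(\tau_{\{d\}}\le t)}$). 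Third, having reduced the conditioning to $X_{G_t}$ and $1_{(\tau_{\{d\}}\le t)}$, I would identify the resulting conditional law with $\tilde{Q}^{X_{G_t}}$ by the construction of $\tilde{Q}^x$ from the modified Markov kernels $\tilde{r}_t$, $\tilde{F}_{t_n, y_n}$; this is essentially the definition, once one checks that conditioning the canonical $Q^x$-kernels on $1_{(\tau_{\{d\}}\le t)}$ produces kernels depending only on the current $(t_n, y_n)$, which was argued in the paragraph preceding the proposition.

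The main obstacle I anticipate is making the ``$G_t$ is not a stopping time'' reduction rigorous: one cannot simply apply the ordinary strong Markov theorem (Theorem 7.5.1 of~\citet{Jacobsen:2006}) because $G_t$ is measurable with respect to the larger filtration $\mathcal{F}^\mathcal{Z}$ and not with respect to $\mathcal{F}^X$. The way around this is that $G_t$, while not an $\mathcal{F}^X$-stopping time, satisfies $G_t \le t$ and, on each piece of the decomposition, equals an $\mathcal{F}^X_{G_t}\vee\sigma(X_{G_t})$-measurable (indeed often deterministic-given-the-past) functional; Assumption~\ref{assumption:IndependenceG} is precisely the hypothesis engineered to substitute for the strong Markov property in this non-stopping-time situation, playing the role that the filtration enlargement / left-truncation arguments (as in Section III.3 of~\citet{Andersen.etal:1993}) play in that literature. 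Concretely, I would condition first on $G_t = g$ (a deterministic value), use Assumption~\ref{assumption:IndependenceG} at $v = g$ to get the Markov reduction, and then integrate $g$ out against the conditional law of $G_t$; the measurability bookkeeping needed to justify this disintegration, together with verifying that the null sets align across the uncountably many values of $g$ so that the identity holds $\mathbb{P}$-a.s.\ rather than merely for a.e.\ $g$, is the delicate part. A secondary technical point is handling the boundary behavior at the jump into $i$ (the mark $(Z^{(2)}_t, i)$ entering $H^t_s$ exactly when $0 < W_t$), which requires care in the RBNSi case where $X_{G_t} = (G_t, i, 0)$ sits right at a jump of $X$.
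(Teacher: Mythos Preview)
Your proposal takes a genuinely different route from the paper. The paper follows the classical strong-Markov pattern: it discretizes $G_t$ dyadically to $G_t(M)$, first establishes a companion lemma (the analogous statement at genuine $\mathcal{F}^X$-stopping times $T$ on $(G_t \le T)$, itself proved by dyadic discretization of $T$ and a direct appeal to Assumption~\ref{assumption:IndependenceG} at the dyadic levels), then applies that lemma at the stopping times $T_{G_t(M),n-1}$, passes to the limit $M\to\infty$ by dominated convergence, and closes with an induction over the jump index $n$ of $\theta_{G_t}\mu$. You instead propose to disintegrate directly over $\{G_t = g\}$ and invoke Assumption~\ref{assumption:IndependenceG} at $v = g$.

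The gap is exactly the one you flag but do not resolve: Assumption~\ref{assumption:IndependenceG} is an almost-sure conditional independence for each fixed $v$, and the exceptional null set may vary with $v$. Conditioning on the typically null event $\{G_t = g\}$ and integrating over uncountably many $g$ does not, without an additional argument (e.g., a jointly measurable regular version of the relevant conditional laws in $(v,\omega)$), yield an almost-sure identity. Discretization is precisely the standard device for this difficulty: it reduces to countably many values of $v$, where the union of exceptional null sets remains null, and the passage to the limit is handled pathwise. Your decomposition over the finitely many configurations of $H_{G_t}$ does not circumvent this, since within each configuration $G_t$ still ranges over a continuum. Two minor side points: on $(\textnormal{RBNSi}_t)$ one has $W_t = 0$ by definition, so ``in all cases $0 < W_t$'' is incorrect (and there $G_t = Z^{(2)}_t$, which need not equal $\tau_{\{i\}}$); also, the paper's induction over $n$ is what propagates the one-step identity to the full law of $\theta_{G_t}\mu$, a step your sketch leaves implicit.
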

\noindent This is an almost sure equality between probability measures on $(\mathcal{M},\mathcal{H})$. We note that the proof does not use many properties of our model for $X$ and $G_t$, and similar arguments may thus be used to show strong Markov properties for other 
Markov processes and random times provided that an independence assumption similar to Assumption~\ref{assumption:IndependenceG} is imposed.

\begin{proof}
The proof is inspired by the proof of Theorem 7.5.1 in~\citet{Jacobsen:2006} with the necessary changes to adjust for the fact that Assumption~\ref{assumption:IndependenceG} is different than the usual Markov independence assumption. 

As in~\citet{Jacobsen:2006}, we note that showing Equation~\eqref{eq:strongMarkovResult} is equivalent to showing for $n \geq 1$ and all measurable and bounded function $f_i : (0,\infty] \times \overline{E}  \rightarrow \mathbb{R}$ that the following holds:
\begin{align} \label{eq:InductionDesired}
   \mathds{1}(t)\mathbb{E} \left[ \prod_{i=1}^n f_i(T_{G_t,i},X_{G_t,i})  \mid (X_s)_{s \leq G_t}, 1_{(\tau_{\{d\}} \leq t)} \right] = \mathds{1}(t)\Tilde{E}^{X_{G_t}}\left[ \prod_{i=1}^n f_i(\tau_i,\eta_i) \right]
\end{align}
which we will prove by induction on $n$. The proof consists of four steps:
\begin{enumerate}[label=(\roman*)]
    \item For a discretization of $G_t$ to $G_t(M)$ with $M \in \mathbb{N}$, show the result (\ref{eq:strongMarkovResult}) for $X$ stopped at $T_{G_t(M),n-1}$ for any $n \geq 1$:
    \begin{align*}
        &1_{(T_{G_t(M),n-1} < \infty)}  \mathds{1}(t)\mathbb{P}(\theta_{T_{ G_t(M),n-1}}\mu \in \cdot \mid (X_s)_{s \leq T_{G_t(M),n-1} }, X_{G_t},1_{(\tau_{\{d\}} \leq t)})  \\
        & = 1_{(T_{G_t(M),n-1} < \infty)}  \mathds{1}(t)\Tilde{Q}^{X_{T_{G_t(M),n-1}}}(\cdot).
    \end{align*}
    \item Show convergence for $M \rightarrow \infty$:
    \begin{align*}
       & \lim_{M \rightarrow \infty} 1_{(T_{G_t(M),n-1} < \infty)} \mathds{1}(t) \Tilde{E}^{X_{T_{G_t(M),n-1}}}[f_n(\tau_1,\eta_1)] \\
       & = 1_{(T_{G_t,n-1} < \infty)} \mathds{1}(t) \Tilde{E}^{X_{T_{G_t,n-1}}}[f_n(\tau_1,\eta_1)].
    \end{align*}
    \item Discretize $G_t$ to $G_t(M)$ and use dominated convergence with (i) and (ii) to conclude: 
    $$ \mathds{1}(t)\mathbb{E}[f_n(T_{G_t,n},X_{G_t,n}) \mid (X_s)_{s \leq T_{G_t,n-1}},X_{G_t}, 1_{(\tau_{\{d\}} \leq t)}] = \mathds{1}(t)\Tilde{E}^{X_{T_{G_t,n-1}}}[f_n(\tau_1,\eta_1)].$$
    \item Use (iii) and induction over $n$ to finish the proof.
\end{enumerate}

\bigskip \noindent (i) : Let $H \in \mathcal{H}$ be given. Define
\begin{align*}
G_t(M) &= \sum_{m=1}^\infty t_{Mm} 1_{(t_{M(m-1)}\leq G_t < t_{Mm})}
\end{align*}
for $t_{Mm} = m2^{-M}$. Had we not known that $G_t < \infty$ almost surely, one would have to add $\infty 1_{(G_t = \infty)}$ here and show the result on $(G_t < \infty)$. 

Take an $F$ on the form 
$$F=( (X_s)_{s \leq T_{G_t(M),n-1}} \in B ) \cap (X_{G_t} \in C) \cap (1_{(\tau_{\{d\}} \leq t)} \in D )$$
and note that the collection of such sets constitutes an intersection-stable generator for 
$$\sigma((X_s)_{s \leq T_{G_t(M),n-1}}) \vee \sigma(X_{G_t}) \vee \sigma(1_{(\tau_{\{d\}} \leq t)})$$ 
containing $\Omega$. Write $F_{M,m} = F \cap ( G_t(M) = t_{Mm} )$ and note that 
\begin{align*}
    F_{M,m} \in \sigma((X_s)_{s \leq T_{t_{Mm},n-1}} ) \vee \sigma(X_{G_t}) \vee \sigma(1_{(\tau_{\{d\}} \leq t)}).
\end{align*}
Here we used that $G_t(M)$ is $G_t$-measurable and that $X_{G_t}$ contains $G_t$ as a coordinate. Now we get
\begin{align*}
    &\int_{F_{M,m}}  1_{(T_{G_t(M),n-1} < \infty)} \mathds{1}(t) \Tilde{Q}^{X_{T_{G_t(M),n-1}}}(H) \: d \mathbb{P} \\
    &= \int_{F_{M,m}} 1_{(T_{t_{Mm},n-1} < \infty)} \mathds{1}(t) \Tilde{Q}^{X_{T_{t_{Mm},n-1}}}(H) \: d \mathbb{P} \\
    &= \int_{F_{M,m}} 1_{(T_{t_{Mm},n-1} < \infty)} \mathds{1}(t) \mathbb{P}( \theta_{T_{t_{Mm},n-1}} \mu \in H \mid (X_s)_{s \leq T_{t_{Mm},n-1}}, X_{G_t}, 1_{(\tau_{\{d\}} \leq t)} )  \: d \mathbb{P} \\
     & =\int_{F_{M,m}} 1_{(T_{t_{Mm},n-1} < \infty)}  \mathds{1}(t) 1_{(\theta_{T_{t_{Mm},n-1}} \mu \in H)}  \: d \mathbb{P} \\     &= \mathbb{P} \left( F_{M,m} \cap (T_{t_{Mm},n-1} < \infty) \cap (\textnormal{RBNS}_t) \cap (\theta_{T_{t_{Mm},n-1}} \mu \in H)  \right) \\
     &= \mathbb{P} \left( F_{M,m} \cap (T_{G_t(M),n-1} < \infty) \cap (\textnormal{RBNS}_t) \cap (\theta_{T_{G_t(M),n-1}} \mu \in H)  \right),
\end{align*}
where the second equality is Lemma~\ref{lemmaStoppingtime} with $T=T_{t_{Mm},n-1}$. The third equality uses $F_{M,m} \cap (T_{t_{Mm},n-1}<\infty) \cap (\textnormal{RBNS}_t) \in \sigma((X_s)_{s \leq T_{t_{Mm},n-1}}) \vee \sigma(X_{G_t}) \vee \sigma(1_{(\tau_{\{d\}} \leq t)})$; recall in particular that $(\textnormal{RBNS}_t)=(T_{\{2\}} \leq t, T_{\{5\}} > t)$ and $(T_{\{2\}} \leq t)$ is known from $\sigma(X_{G_t})$ since $(T_{\{2\}} \leq t) = (G_t > 0)$. Summing over $m \geq 1$ and using $(G_t < \infty)$ almost surely gives
\begin{align*}
    &\int_F  1_{(T_{G_t(M),n-1} < \infty)} \mathds{1}(t) \Tilde{Q}^{X_{T_{G_t(M),n-1}}}(H) \: d \mathbb{P} \\
    &= \mathbb{P} \left( F \cap (T_{G_t(M),n-1} < \infty) \cap  (\textnormal{RBNS}_t) \cap (\theta_{T_{G_t(M),n-1}} \mu \in H)  \right).
\end{align*}
As the left- and right-hand side are finite measures, uniqueness of finite measures on intersection stable classes containing $\Omega$ gives that the equation holds for any $F \in \sigma( (X_s)_{s \leq T_{G_t(M),n-1}}) \vee \sigma(X_{G_t}) \vee \sigma(1_{(\tau_{\{d\}} \leq t)})$. As $1_{(T_{G_t(M),n-1} < \infty)} \mathds{1}(t) \Tilde{Q}^{X_{T_{G_t(M),n-1}}}(H)$ is also $\sigma((X_s)_{s \leq T_{G_t(M),n-1}}) \vee \sigma(X_{G_t}) \vee \sigma(1_{(\tau_{\{d\}} \leq t)})$-measurable, it satisfies the defining properties of
\begin{align*}
    &\mathbb{E}[ 1_{(T_{G_t(M),n-1} < \infty)} 1_{(\textnormal{RBNS}_t)} 1_{(\theta_{T_{G_t(M),n-1}} \mu \in H)} \mid (X_s)_{s \leq T_{G_t(M),n-1}}, X_{G_t}, 1_{(\tau_{\{d\}} \leq t)}],
\end{align*}
so we conclude 
\begin{align*}
 &1_{(T_{G_t(M),n-1} < \infty)} \mathds{1}(t) \mathbb{P}(\theta_{T_{G_t(M),n-1}} \mu \in H \mid (X_s)_{s \leq T_{G_t(M),n-1}}, X_{G_t},1_{(\tau_{\{d\}} \leq t)}) \\
 &= \mathbb{E}[ 1_{(T_{G_t(M),n-1} < \infty)} 1_{(\textnormal{RBNS}_t)} 1_{(\theta_{T_{G_t(M),n-1}} \mu \in H)} \mid (X_s)_{s \leq T_{G_t(M),n-1}}, X_{G_t}, 1_{(\tau_{\{d\}} \leq t)}] \\
 &= 1_{(T_{(G_t)(M),n-1} < \infty)} \mathds{1}(t) \Tilde{Q}^{X_{T_{(G_t)(M),n-1}}}(H).
\end{align*}

\bigskip \noindent (ii) : Note that we can write 
\begin{align*}
    &1_{(T_{G_t,n-1} < \infty)} \Tilde{E}^{X_{T_{G_t,n-1}}}[f_n(\tau_1,\eta_1)] \\
    &= 1_{(T_{G_t,n-1} < \infty)} \int_{( T_{G_t,n-1} , \infty ]} \int_{\overline{E}} f_n(v,y) \: \Tilde{r}_v(\phi_{T_{G_t,n-1} v}(X_{T_{G_t,n-1}}), dy) \Tilde{F}_{X_{T_{G_t},n-1}}(dv)
\end{align*}
with a similar expression for $1_{(T_{G_t(M),n-1} < \infty)} \Tilde{E}^{X_{T_{G_t(M),n-1}}}[f_n(\tau_1,\eta_1)]$.

On $(\overline{N}_{T_{G_t,n-1}} = \overline{N}_{T_{G_t(M),n-1}})$ and $(T_{G_t(M),n-1} < \infty)$, it holds for $v \geq T_{G_t(M),n-1}$, since $T_{G_t(M),n-1} \geq T_{G_t,n-1}$, that
\begin{align*}
    \phi_{T_{G_t,n-1} v}(X_{T_{G_t,n-1}}) &= \phi_{T_{G_t(M),n-1} v}(\phi_{T_{G_t,n-1} T_{G_t(M),n-1} }(X_{T_{G_t,n-1}})) \\
    &= \phi_{T_{G_t(M),n-1} v}(X_{T_{G_t(M),n-1}})
\end{align*}
and
\begin{align*}
    \overline{\Tilde{F}}_{X_{T_{G_t,n-1}}}(v) &= \overline{\Tilde{F}}_{X_{T_{G_t,n-1}}}( T_{G_t(M),n-1} ) \overline{\Tilde{F}}_{X_{T_{G_t(M),n-1}}}( v ).
\end{align*}
Therefore, it holds on $(\overline{N}_{T_{G_t,n-1}} = \overline{N}_{T_{G_t(M),n-1}})$ that
\begin{align*}
   &1_{(T_{G_t(M),n-1} < \infty)} \Tilde{E}^{X_{T_{G_t(M),n-1}}}[f_n(\tau_1,\eta_1)] \\
   &= 1_{(T_{G_t(M),n-1} < \infty)} \times \\
   & \qquad \int_{ (T_{(G_t(M),n-1},\infty] } \int_{\overline{E}} f_n(v,y) \: \Tilde{r}_v( \phi_{ T_{G_t(M),n-1} v } ( X_{ T_{G_t(M),n-1}  }  ), dy   )  \Tilde{F}_{T_{G_t(M),n-1} }(dv) \\
   &= 1_{(T_{G_t(M),n-1} < \infty)} \times \frac{1}{\overline{\Tilde{F}}_{X_{T_{G_t,n-1} }} (T_{G_t(M),n-1} ) } \times \\
   & \qquad  \int_{ (T_{G_t(M),n-1},\infty] } \int_{\overline{E}} f_n(v,y) \: \Tilde{r}_v( \phi_{ T_{G_t,n-1} v } ( X_{ T_{G_t,n-1}  }  ), dy   )  \Tilde{F}_{T_{G_t,n-1} }(dv) \\
   & \stackrel{M \rightarrow \infty}{\rightarrow}  1_{(T_{G_t,n-1} < \infty)} \Tilde{E}^{X_{T_{G_t,n-1}}}[f_n(\tau_1,\eta_1)].
\end{align*}
Note that
$$1_{\left(\overline{N}_{T_{G_t,n-1}} = \overline{N}_{T_{G_t(M),n-1}}\right)} \stackrel{M \rightarrow \infty}{\rightarrow} 1,$$
since $\overline{N}$ is right-continuous and $T_{G_t(M), n-1} \downarrow T_{G_t,n-1}$. We thus get
\begin{align*}
    &\lim_{M \rightarrow \infty} 1_{(T_{G_t(M),n-1} < \infty)} \Tilde{E}^{X_{T_{G_t(M),n-1}}}[f_n(\tau_1,\eta_1)] \\
    &= \lim_{M \rightarrow \infty} 1_{\left(\overline{N}_{T_{G_t,n-1}} = \overline{N}_{T_{G_t(M),n-1}}\right)} 1_{(T_{G_t(M),n-1} < \infty)} \Tilde{E}^{X_{T_{G_t(M),n-1}}}[f_n(\tau_1,\eta_1)]  \\
    &= \lim_{M \rightarrow \infty} 1_{\left(\overline{N}_{T_{G_t,n-1}} = \overline{N}_{T_{G_t(M),n-1}}\right)} 1_{(T_{G_t,n-1} < \infty)} \Tilde{E}^{X_{T_{G_t,n-1}}}[f_n(\tau_1,\eta_1)] \\
    &= 1_{(T_{G_t,n-1} < \infty)} \Tilde{E}^{X_{T_{G_t,n-1}}}[f_n(\tau_1,\eta_1)].
\end{align*}
We are now in a position to show the result using dominated convergence.

\bigskip\noindent (iii) :  Take $F$ on the form 
$$F=((X_s)_{s \leq T_{G_t,n-1}} \in B) \cap (X_{G_t} \in C) \cap ( 1_{(\tau_{\{d\}} \leq t)} \in D).$$ 
Such sets constitute an intersection stable generator of $\sigma((X_s)_{s \leq T_{G_t,n-1}}) \vee \sigma(X_{G_t}) \vee \sigma(1_{(\tau_{\{d\}} \leq t)})$ containing $\Omega$. Now let 
$$F_M = ( (X_s)_{ s \leq T_{G_t(M),n-1}} \in B) \cap (X_{G_t} \in C) \cap ( 1_{(\tau_{\{d\}} \leq t)} \in D)$$
as then 
$$F_M \in \sigma((X_s)_{s \leq T_{G_t(M),n-1}}) \vee \sigma(X_{G_t}) \vee \sigma(1_{(\tau_{\{d\}} \leq t)})$$ 
and $1_{F_M} \stackrel{M \rightarrow \infty}{\rightarrow} 1_F$ by using that $X$ is right-continuous, that $T_{G_t(M),n-1} \geq T_{G_t,n-1}$, and that $\lim_{M \rightarrow \infty} T_{G_t(M),n-1} = T_{G_t,n-1}$. We now see using dominated convergence and the results of (i) and (ii):
\begin{align*}
    &\int_F 1_{(T_{G_t,n-1} < \infty)} \mathds{1}(t) \Tilde{E}^{X_{T_{G_t,n-1}}}[f_n(\tau_1,\eta_1)] d\mathbb{P} \\
    &\stackrel{\textnormal{(ii)}}{=} \lim_{M \rightarrow \infty} \int_{F_M} 1_{(T_{G_t(M),n-1} < \infty)} \mathds{1}(t) \Tilde{E}^{X_{T_{G_t(M),n-1}}}[f_n(\tau_1,\eta_1)] d\mathbb{P} \\
    &\stackrel{\textnormal{(i)}}{=} \lim_{M \rightarrow \infty} \int_{F_M} 1_{(T_{G_t(M),n-1} < \infty)} \mathds{1}(t) \\
    & \qquad \qquad \qquad \times \mathbb{E}[f_n(T_{G_t(M),n},X_{G_t(M),n}) \mid (X_s)_{s \leq T_{G_t(M),n-1}}, X_{G_t}, 1_{( \tau_{\{d\}} \leq t)}] d\mathbb{P} \\
    &= \lim_{M \rightarrow \infty} \int_{F_M} 1_{(T_{G_t(M),n-1} < \infty)} \mathds{1}(t) f_n(T_{G_t(M),n},X_{G_t(M),n})  d\mathbb{P} \\
    &= \int_{F} 1_{(T_{G_t,n-1} < \infty)} \mathds{1}(t) f_n(T_{G_t,n},X_{G_t,n})  d\mathbb{P}.
\end{align*}
In the last equality, we used that $(T_{G_t(M),n},X_{G_t(M),n})(\omega) = (T_{G_t,n},X_{G_t,n})(\omega) $ for $M=M(\omega)$ sufficiently large.
 Now since the left-hand side and right-hand side are finite measures, when seen as a function of $F$, that are equal on an intersection stable generator including $\Omega$, we can conclude that they are equal on all of $\sigma((X_s)_{s \leq T_{G_t,n-1}}) \vee \sigma(X_{G_t}) \vee \sigma(1_{( \tau_{\{d\}} \leq t)})$. Since it also holds that $1_{(T_{G_t,n-1} < \infty)} \mathds{1}(t) \Tilde{E}^{X_{T_{G_t,n-1}}}[f_n(\tau_1,\eta_1)]$ is $\sigma((X_s)_{s \leq T_{G_t,n-1}}) \vee \sigma(X_{G_t}) \vee \sigma(1_{( \tau_{\{d\}} \leq t)})$-measurable, we conclude 
\begin{align*} 
    &1_{(T_{G_t,n-1} < \infty)} \mathds{1}(t) \mathbb{E}[ f_n(T_{G_t,n},X_{G_t,n}) \mid 
 (X_s)_{s \leq T_{G_t,n-1}}, X_{G_t}, 1_{( \tau_{\{d\}} \leq t)}] \\
 &=  \mathbb{E}[ 1_{(T_{G_t,n-1} < \infty)} \mathds{1}(t) f_n(T_{G_t,n},X_{G_t,n}) \mid 
 (X_s)_{s \leq T_{G_t,n-1}}, X_{G_t}, 1_{( \tau_{\{d\}} \leq t)}] \\ 
 &= 1_{(T_{G_t,n-1} < \infty)} \mathds{1}(t) \Tilde{E}^{X_{T_{G_t,n-1}}}[f_n(\tau_1,\eta_1)].
\end{align*}
We can further strengthen this to
\begin{align} \label{eq:InductionResult}
\begin{split}
    \mathds{1}(t) \mathbb{E}[ f_n(T_{G_t,n},X_{G_t,n}) \mid  (X_s)_{s \leq T_{G_t,n-1}}, X_{G_t},1_{(\tau_{\{d\}} \leq t)}] =  \mathds{1}(t) \Tilde{E}^{X_{T_{G_t,n-1}}}[f_n(\tau_1,\eta_1)]
\end{split}    
\end{align}
since also 
\begin{align*}
    &1_{(T_{G_t,n-1} = \infty)} \mathbb{E}[ f_n(T_{G_t,n},X_{G_t,n}) \mid (X_s)_{s \leq T_{G_t,n-1}}, X_{G_t}, 1_{(\tau_{\{d\}} \leq t)}] \\
    &= 1_{(T_{G_t,n-1} = \infty)} f_n(\infty,\nabla) \\
    &=  1_{(T_{G_t,n-1} = \infty)} \Tilde{E}^{X_{T_{G_t,n-1}}}[f_n(\tau_1,\eta_1)].
\end{align*}

Now we are in a position to use induction over $n$ in Equation~\eqref{eq:InductionDesired}.

\bigskip \noindent (iv) : Using Equation~\eqref{eq:InductionResult} with $n=1$ gives the result from Equation~\eqref{eq:InductionDesired} for $n=1$. For $n \geq 2$, assume the result holds for $n-1$ and observe
\begin{align*} 
       & \mathds{1}(t) \mathbb{E} \left[ \prod_{i=1}^n f_i(T_{G_t,i},X_{G_t,i})  \: \Big\vert \: (X_s)_{s \leq G_t}, 1_{( \tau_{\{d\}} \leq t)} \right] \\
       &= \mathds{1}(t) \mathbb{E} \Bigg[ \mathbb{E}[f_n(T_{G_t,n},X_{G_t,n}) \mid (X_s)_{s \leq T_{G_t,n-1}},X_{G_t},1_{( \tau_{\{d\}} \leq t)} ]  \\ 
       & \qquad \qquad \times \prod_{i=1}^{n-1} f_i(T_{G_t,i},X_{G_t,i}) \: \Big\vert \: (X_s)_{s \leq G_t}, 1_{( \tau_{\{d\}} \leq t)} \Bigg] \\
        &\leftstackrel{\textnormal{(iii)}}{=} \mathds{1}(t) \mathbb{E} \left[ \Tilde{E}^{X_{T_{G_t,n-1}}}[f_n(\tau_1,\eta_1)] \times \prod_{i=1}^{n-1} f_i(T_{G_t,i},X_{G_t,i})   \: \Big\vert \: (X_s)_{s \leq G_t}, 1_{( \tau_{\{d\}} \leq t)} \right] \\
        &= \mathds{1}(t) \Tilde{E}^{X_{G_t}} \left[ \Tilde{E}^{\eta_{n-1}}[f_n(\tau_1,\eta_1)] \times \prod_{i=1}^{n-1} f_i(\tau_i,\eta_i)  \right]
\end{align*}
where the last equality follows by the induction hypothesis. Continuing the calculations, we see
\begin{align*} 
       \mathds{1}(t) \Tilde{E}^{X_{G_t}} \left[ \prod_{i=1}^{n-1} f_i(\tau_i,\eta_i) \times \Tilde{E}^{\eta_{n-1}}[f_n(\tau_1,\eta_1)]  \right] = \mathds{1}(t) \Tilde{E}^{X_{G_t}} \left[ \prod_{i=1}^{n} f_i(\tau_i,\eta_i)  \right],
\end{align*}
by writing out the expectations using the Markov kernels. Hence we have shown
$$ \mathds{1}(t)\mathbb{E} \left[ \prod_{i=1}^n f_i(T_{G_t,i},X_{G_t,i})  \mid (X_s)_{s \leq G_t}, 1_{( \tau_{\{d\}} \leq t)}  \right] = \mathds{1}(t) \Tilde{E}^{X_{G_t}} \left[ \prod_{i=1}^{n} f_i(\tau_i,\eta_i)  \right]$$
so Equation~\eqref{eq:InductionDesired} hold for all $n \in \mathbb{N}$ by induction, which proves the desired result.
\end{proof}

\noindent For the proof of Proposition~\ref{prop:strongMarkov}, we needed Lemma~\ref{lemmaStoppingtime}, which corresponds to Proposition~\ref{prop:strongMarkov} if the random time $G_t$ was replaced by a $\mathcal{F}^X$-stopping time in some of the terms.
\begin{lemma} (Strong Markov type property at stopping time.) \label{lemmaStoppingtime} \\
Under Assumption~\ref{assumption:IndependenceG} it holds that
\begin{align} \label{eq:StrongMarkovLemma}
    1_{(T < \infty)}1_{(G_t \leq T)} \mathds{1}(t)  \mathbb{P}(\theta_{T} \mu \in \cdot \mid (X_s)_{s \leq T}, X_{G_t}, 1_{( \tau_{\{d\}} \leq t)}) = 1_{(T < \infty)}1_{(G_t \leq T)} \mathds{1}(t) \Tilde{Q}^{X_T}(\cdot)
\end{align}
for any $\mathcal{F}^X$-stopping time $T$.
\end{lemma}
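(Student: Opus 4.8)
The plan is to establish the statement first at deterministic times, where it is almost immediate from Assumption~\ref{assumption:IndependenceG} together with the construction of $\Tilde{Q}$, and then to lift it to the $\mathcal{F}^X$-stopping time $T$ by discretising $T$ from above, in the same spirit as the discretisation of $G_t$ used in the proof of Proposition~\ref{prop:strongMarkov}. For the deterministic step, fix $v\geq0$ and $H\in\mathcal{H}$, and note that $1_{(\theta_v\mu\in H)}$ is $\mathcal{F}^{v,X}$-measurable while $(G_t\leq v,\textnormal{RBNS}_t)\in\sigma(X_{G_t})\vee\sigma(1_{(\tau_{\{d\}}\leq t)})$ (recall $(\textnormal{RBNS}_t)$ is determined by $\sigma(X_{G_t})\vee\sigma(1_{(\tau_{\{d\}}\leq t)})$, as noted in the proof of Proposition~\ref{prop:strongMarkov}). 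Hence Assumption~\ref{assumption:IndependenceG} gives
\begin{align*}
 &1_{(G_t\leq v)}\mathds{1}(t)\,\mathbb{P}\big(\theta_v\mu\in H\mid\mathcal{F}^X_v,X_{G_t},1_{(\tau_{\{d\}}\leq t)}\big) \\
 &= 1_{(G_t\leq v)}\mathds{1}(t)\,\mathbb{P}\big(\theta_v\mu\in H\mid X_v,1_{(\tau_{\{d\}}\leq t)}\big) = 1_{(G_t\leq v)}\mathds{1}(t)\,\Tilde{Q}^{X_v}(H),
\end{align*}
the last equality being the defining property $\Tilde{Q}^x(H)=\mathbb{P}(\theta_{x_1}\mu\in H\mid X_{x_1}=x,1_{(\tau_{\{d\}}\leq t)})$ recorded above.

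Following~\citet{Jacobsen:2006}, as in the proof of Proposition~\ref{prop:strongMarkov}, identity~\eqref{eq:StrongMarkovLemma} is equivalent to showing, for every $n\geq1$ and all bounded measurable $f_1,\dots,f_n:(0,\infty]\times\overline{E}\to\mathbb{R}$, that
\begin{align*}
 &1_{(T<\infty,G_t\leq T)}\mathds{1}(t)\,\mathbb{E}\Big[\textstyle\prod_{i=1}^n f_i(T_{T,i},X_{T,i})\,\Big|\,(X_s)_{s\leq T},X_{G_t},1_{(\tau_{\{d\}}\leq t)}\Big] \\
 &= 1_{(T<\infty,G_t\leq T)}\mathds{1}(t)\,\Tilde{E}^{X_T}\Big[\textstyle\prod_{i=1}^n f_i(\tau_i,\eta_i)\Big].
\end{align*}
Fix $n$ and set $T(M)=\sum_{k\geq1}t_{Mk}1_{(t_{M(k-1)}\leq T<t_{Mk})}$ with $t_{Mk}=k2^{-M}$; this is a countably valued $\mathcal{F}^X$-stopping time with $T\leq T(M)$, $T(M)\downarrow T$ pointwise, and $\{T(M)=t_{Mk}\}\in\mathcal{F}^X_{t_{Mk}}$. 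On $\{T(M)=t_{Mk}\}\cap(G_t\leq T)$ one has $G_t\leq T<t_{Mk}$ and $\theta_{T(M)}\mu=\theta_{t_{Mk}}\mu$, so applying the deterministic step at $v=t_{Mk}$ — additionally conditioning on the event $\{T(M)=t_{Mk}\}$, which is permissible because it lies inside the $\mathcal{F}^X_{t_{Mk}}$-side of the conditional independence in Assumption~\ref{assumption:IndependenceG} — and summing over $k$ gives
\begin{align*}
 &1_{(T<\infty,G_t\leq T)}\mathds{1}(t)\,\mathbb{E}\Big[\textstyle\prod_{i=1}^n f_i(T_{T(M),i},X_{T(M),i})\,\Big|\,\mathcal{D}_M\Big] \\
 &= 1_{(T<\infty,G_t\leq T)}\mathds{1}(t)\,\Tilde{E}^{X_{T(M)}}\Big[\textstyle\prod_{i=1}^n f_i(\tau_i,\eta_i)\Big],
\end{align*}
where $\mathcal{D}_M=\mathcal{F}^X_{T(M)}\vee\sigma(X_{G_t})\vee\sigma(1_{(\tau_{\{d\}}\leq t)})\supseteq\sigma((X_s)_{s\leq T})\vee\sigma(X_{G_t})\vee\sigma(1_{(\tau_{\{d\}}\leq t)})$.

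To conclude, integrate the last display over sets $F$ in an intersection-stable generator of $\sigma((X_s)_{s\leq T})\vee\sigma(X_{G_t})\vee\sigma(1_{(\tau_{\{d\}}\leq t)})$ containing $\Omega$, and let $M\to\infty$. Since the jumps of $X$ are isolated, $\theta_{T(M)}\mu=\theta_T\mu$ for all $M$ large enough (depending on $\omega$), so the left-hand integrand converges to $1_{(T<\infty,G_t\leq T)}\mathds{1}(t)\prod_i f_i(T_{T,i},X_{T,i})$ and dominated convergence applies. The hard part is the right-hand side: one must show $\Tilde{E}^{X_{T(M)}}[\prod_i f_i(\tau_i,\eta_i)]\to\Tilde{E}^{X_T}[\prod_i f_i(\tau_i,\eta_i)]$, and this cannot rely on continuity of $x\mapsto\Tilde{Q}^x$ (the transition hazards are only assumed locally integrable); it must instead be extracted from the multiplicativity $\overline{\Tilde{F}}_{X_T}(w)=\overline{\Tilde{F}}_{X_T}(T(M))\,\overline{\Tilde{F}}_{X_{T(M)}}(w)$ of the survival kernels across $T(M)$ together with the flow identity $\phi_{T\,w}=\phi_{T(M)\,w}\circ\phi_{T\,T(M)}$, which is the same kernel manipulation as in step (ii) of the proof of Proposition~\ref{prop:strongMarkov}. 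After both sides are passed to the limit, uniqueness of finite measures on the generator and the fact that $1_{(T<\infty,G_t\leq T)}\mathds{1}(t)\,\Tilde{E}^{X_T}[\prod_i f_i(\tau_i,\eta_i)]$ is measurable with respect to $\sigma((X_s)_{s\leq T})\vee\sigma(X_{G_t})\vee\sigma(1_{(\tau_{\{d\}}\leq t)})$ identify it as the conditional expectation above; as the contribution of $(T=\infty)$ is trivial, this yields the finite-dimensional identity and hence~\eqref{eq:StrongMarkovLemma}.
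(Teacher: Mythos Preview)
Your proposal is correct and follows the same strategy as the paper's proof: establish the identity at deterministic times directly from Assumption~\ref{assumption:IndependenceG}, lift to the dyadic discretisation $T(M)$ by partitioning on $\{T(M)=t_{Mk}\}\in\mathcal{F}^X_{t_{Mk}}$, and pass to the limit using the survival--kernel multiplicativity on the event $(\overline{N}_T=\overline{N}_{T(M)})$ exactly as in step (ii). The only organisational difference is that the paper carries out step (ii)--(iii) for a \emph{single} $f$ and then recovers general $n$ by induction (step (iv), applying the already-proved lemma at the $\mathcal{F}^X$-stopping time $T_{T,n-1}$), whereas you treat the full product $\prod_{i=1}^n f_i$ in one shot; your shortcut is valid because the Markov structure of $\Tilde{Q}$ lets you write $\Tilde{E}^{x}\big[\prod_{i=1}^n f_i(\tau_i,\eta_i)\big]=\Tilde{E}^{x}[g(\tau_1,\eta_1)]$ for the bounded measurable $g(v,y)=f_1(v,y)\,\Tilde{E}^{y}\big[\prod_{i=2}^n f_i(\tau_{i-1},\eta_{i-1})\big]$, reducing the convergence to the $n=1$ case you cite.
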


\noindent The proof is very similar to the proof of Proposition~\ref{prop:strongMarkov}.

\begin{proof}
This is equivalent to showing for all $n \geq 1$ and all measurable bounded functions $f_i$, $1 \leq i \leq n$ that
\begin{align*}
    &1_{(T < \infty)}1_{(G_t \leq T)} \mathds{1}(t) \mathbb{E}\left[ \prod_{i=1}^n f_i(T_{T,i},X_{T,i}) \: \Big\vert \: (X_s)_{s \leq T}, X_{G_t},1_{( \tau_{\{d\}} \leq t)}  \right] \\
    &= 1_{(T < \infty)} 1_{(G_t \leq T)} \mathds{1}(t) \Tilde{E}^{X_T} \left[ \prod_{i=1}^n f_i(\tau_i, \eta_i) \right].
\end{align*}
Define
\begin{align*}
T(M) &= \sum_{m=1}^\infty t_{Mm} 1_{(t_{M(m-1)}\leq T < t_{Mm})} + \infty 1_{(T=\infty)}
\end{align*}
with $t_{Mm} = m2^{-M}$. Note that $(T(M) < \infty) = (T < \infty)$. We partition the proof into four parts corresponding to the four parts of the proof of Proposition~\ref{prop:strongMarkov}.

\bigskip \noindent (i) : First we show the result~(\ref{eq:StrongMarkovLemma}) with the discrete random time $T(M)$ in place of $T$, i.e.
\begin{align} \label{eq:StoppingTimeResultTM}
\begin{split}
        &1_{(T < \infty)}1_{(G_t \leq T(M))} \mathds{1}(t)  \mathbb{P}(\theta_{T(M)} \mu \in \cdot \mid  (X_s)_{s \leq T(M)}, X_{G_t},1_{( \tau_{\{d\}} \leq t)}  ) \\
    &= 1_{(T < \infty)}1_{(G_t \leq T(M))}  \mathds{1}(t) \Tilde{Q}^{X_{T(M)}}(\cdot).
\end{split}
\end{align}
Take $H \in \mathcal{H}$ and $F$ on the form 
$$F = ( (X_s)_{s \leq T(M)} \in B) \cap (X_{G_t} \in C) \cap (1_{(\tau_{\{d\}} \leq t)} \in D).$$ 
Note that the collection of such sets constitutes an intersection-stable generator for $\sigma((X_s)_{s \leq T(M)}) \\ \vee \sigma(X_{G_t}) \vee \sigma(1_{(\tau_{\{d\}} \leq t)})$ containing $\Omega$. Write $F_{M,m} = F \cap (T(M) = t_{Mm})$, and note that $F_{M,m} \in \mathcal{F}^X_{t_{Mm}} \vee \sigma(X_{G_t}) \vee \sigma(1_{(\tau_{\{d\}} \leq t)})$ and also $(\textnormal{RBNS}_t) \cap (G_t \leq t_{Mm}) \in \mathcal{F}^X_{t_{Mm}} \vee \sigma(X_{G_t}) \vee \sigma(1_{(\tau_{\{d\}} \leq t)})$. Then
\begin{align*}
    & \int_{F_{M,m}} 1_{(G_t \leq T(M))} \mathds{1}(t) \Tilde{Q}^{X_{T(M)}}(H) d\mathbb{P} \\
    &= \int_{F_{M,m}} 1_{(G_t \leq t_{Mm})} \mathds{1}(t) \Tilde{Q}^{X_{t_{Mm}}}(H) d\mathbb{P} \\
    & = \int_{F_{M,m}} 1_{(G_t \leq t_{Mm})} \mathds{1}(t) \mathbb{P}(\theta_{t_{Mm}} \mu \in H \mid X_{t_{Mm}}, 1_{(\tau_{\{d\}} \leq t)}) d\mathbb{P} \\
    & =  \int_{F_{M,m}} 1_{(G_t \leq t_{Mm})} \mathds{1}(t) \mathbb{P}(\theta_{t_{Mm}} \mu \in H \mid \mathcal{F}^X_{t_{Mm}} \vee \sigma(X_{G_t}) \vee \sigma(1_{(\tau_{\{d\}} \leq t)}))  d\mathbb{P} \\
    &= \int_{F_{M,m}}  1_{(G_t \leq t_{Mm})} \mathds{1}(t) 1_{(\theta_{t_{Mm}} \mu \in H)} d\mathbb{P} \\
    &= \mathbb{P}(F_{M,m}  \cap (G_t \leq T(M)) \cap (\textnormal{RBNS}_t) \cap (\theta_{T(M)} \mu \in H)).
\end{align*}
The third equality is Assumption~\ref{assumption:IndependenceG} with $v=t_{Mm}$. Summing over $m \geq 1$ and using uniqueness of finite measures on intersection-stable generators gives the result for $T(M)$ on $(T(M) < \infty)=(T < \infty)$. This now gives Equation~\eqref{eq:StoppingTimeResultTM}.

\bigskip \noindent (ii) : Note that for any bounded measurable function $f$, we have
\begin{align*}
    \Tilde{E}^{X_T}[f(\tau_1,\eta_1)] &= \int_{(T,\infty]} \int_{\overline{E}} f(v,y) \: \Tilde{r}_v(\phi_{Tv}(X_T),\diff y) \: \Tilde{F}_{X_T}(\diff v)
\end{align*}
with a similar expression for $\Tilde{E}^{X_{T(M)}}[f(\tau_1,\eta_1)]$. On $(\overline{N}_T = \overline{N}_{T(M)})$, it holds for $v \geq T(M)$, since also $T(M) \geq T$, that
\begin{align*}
    \phi_{T v} (X_T) &= \phi_{T(M) v}(\phi_{T T(M)}(X_T)) \\
    &= \phi_{T(M) v}(X_{T(M)})
\end{align*}
and 
$$\overline{\Tilde{F}}_{X_T}(v) = \overline{\Tilde{F}}_{X_T}(T(M)) \overline{\Tilde{F}}_{X_{T(M)}}(v).$$
Hence, we have on $(\overline{N}_T = \overline{N}_{T(M)}) \cap (T < \infty)$
\begin{align*}
    \Tilde{E}^{X_{T(M)}}[f(\tau_1,\eta_1)] &= \int_{(T(M),\infty]} \int_{\overline{E}} f(v,y) \: \Tilde{r}_v( \phi_{T(M) v}(X_{T(M)}),\diff y) \: \Tilde{F}_{X_{T(M)}} (\diff v) \\
    &= \frac{1}{\overline{\tilde{F}}_{X_T}(T(M))} \int_{(T(M),\infty]} \int_{\overline{E}} f(v,y) \: \Tilde{r}_v(\phi_{Tv}(X_T),\diff y) \: \Tilde{F}_{X_T}(\diff v) \\
    & \stackrel{M \rightarrow \infty}{\rightarrow} \int_{(T,\infty]} \int_{\overline{E}} f(v,y) \: \Tilde{r}_v(\phi_{Tv}(X_T),\diff y) \: \Tilde{F}_{X_T}(\diff v) \\
    &= \Tilde{E}^{X_T}[f(\tau_1,\eta_1)].
\end{align*}
Since also $1_{(\overline{N}_{T(M)} = \overline{N}_T)} \rightarrow 1$ for $M \rightarrow \infty$, we have
\begin{align*}
    \lim_{M \rightarrow \infty} 1_{(T < \infty)} \Tilde{E}^{X_{T(M)}}[f(\tau_1,\eta_1)] &= \lim_{M \rightarrow \infty} 1_{(\overline{N}_{T(M)} = \overline{N}_T)} 1_{(T < \infty)} \Tilde{E}^{X_{T(M)}}[f(\tau_1,\eta_1)] \\
    &= \lim_{M \rightarrow \infty} 1_{(\overline{N}_{T(M)} = \overline{N}_T)} 1_{(T < \infty)} \Tilde{E}^{X_T}[f(\tau_1,\eta_1)] \\
    &= 1_{(T < \infty)} \Tilde{E}^{X_T}[f(\tau_1,\eta_1)].
\end{align*}
We are now in a position to use dominated convergence.

\bigskip \noindent (iii) : Take $F$ on the form $F = ( (X_s)_{s \leq T} \in B) \cap (X_{G_t} \in C) \cap ( 1_{(\tau_{\{d\}} \leq t)} \in D)$ and write 
$$F_M =  ( (X_s)_{s \leq T(M)} \in B) \cap (X_{G_t} \in C) \cap ( 1_{(\tau_{\{d\}} \leq t)} \in D).$$
Note that $F_M \in \sigma((X_s)_{s \leq T(M)}) \vee \sigma(X_{G_t}) \vee \sigma(1_{(\tau_{\{d\}} \leq t)})$ and $\lim_{M \rightarrow \infty} 1_{F_M} = 1_F$ since $X$ is right-continuous and $T(M) \downarrow T$. Similarly, $\lim_{M \rightarrow \infty} 1_{(G_t \leq T(M))} = 1_{(G_t \leq T)}$ since the indicator $s \mapsto 1_{(G_t \leq s)}$ is right-continuous and $T(M) \downarrow T$.
By dominated convergence and the results from (i) and (ii), we obtain:
\begin{align*}
    &\int_F 1_{(T < \infty)}1_{(G_t \leq T)} \mathds{1}(t) \Tilde{E}^{X_T}[f(\tau_1,\eta_1)] d\mathbb{P} \\
    & \leftstackrel{\textnormal{(ii)}}{=} \lim_{M \rightarrow \infty} \int_{F_M}  1_{(T < \infty)}1_{(G_t \leq T(M))}\mathds{1}(t) \Tilde{E}^{X_{T(M)}}[f(\tau_1,\eta_1)] d\mathbb{P} \\
    &\leftstackrel{\textnormal{(i)}}{=} \lim_{M \rightarrow \infty} \int_{F_M} 1_{(T < \infty)}1_{(G_t \leq T(M))} \mathds{1}(t) \\
    & \qquad \qquad \qquad \times \mathbb{E}[f( T_{T(M),1}, X_{T(M),1}) \mid (X_s)_{s \leq T(M)}, X_{G_t},1_{(\tau_{\{d\}} \leq t)}  ] d\mathbb{P} \\
    &= \lim_{M \rightarrow \infty} \int_{F_M} 1_{(T < \infty)}1_{(G_t \leq T(M))} \mathds{1}(t) f(T_{T(M),1}, X_{T(M),1}) d\mathbb{P} \\
    &= \int_{F} 1_{(T < \infty)}1_{(G_t \leq T)} \mathds{1}(t) f(T_{T,1}, X_{T,1}) d\mathbb{P},
\end{align*}
where the third equality follows by 
$$F_M \cap (T < \infty) \cap (G_t \leq T(M)) \cap (\textnormal{RBNS}_t) \in \sigma((X_s)_{s \leq T(M)} \vee \sigma(X_{G_t}) \vee \sigma(1_{(\tau_{\{d\}} \leq t)})$$ 
 and the last equality follows since $\lim_{M \rightarrow \infty} 1_{F_M} = 1_F$.
By uniqueness of finite measures on intersection stable generators, this shows that
\begin{align*}
    &1_{(T < \infty)} 1_{(G_t \leq T)} \mathds{1}(t) \Tilde{E}^{X_T}[f(\tau_1,\eta_1)] \\
    &=  \mathbb{E}[1_{(T < \infty)} 1_{(G_t \leq T)}  \mathds{1}(t) f(T_{T,1},X_{T,1}) \mid (X_s)_{s \leq T},X_{G_t},1_{(\tau_{\{d\}} \leq t)} ] \\
    &= 1_{(T < \infty)} 1_{(G_t \leq T)}  \mathds{1}(t) \mathbb{E}[f(T_{T,1},X_{T,1}) \mid (X_s)_{s \leq T},X_{G_t},1_{(\tau_{\{d\}} \leq t)} ].
\end{align*}
The result also holds when removing the indicator $1_{(T < \infty)}$ since 
\begin{align*}
    &1_{(T = \infty)} 1_{(G_t \leq T)} \mathds{1}(t) \Tilde{E}^{X_T}[f(\tau_1,\eta_1)] \\
    &= 1_{(T = \infty)} 1_{(G_t \leq T)} \mathds{1}(t) f(\infty,\nabla) \\
    &= 1_{(T = \infty)} 1_{(G_t \leq T)}  \mathds{1}(t) \mathbb{E}[f(T_{T,1},X_{T,1}) \mid (X_s)_{s \leq T},X_{G_t},1_{(\tau_{\{d\}} \leq t)} ].
\end{align*}

\bigskip \noindent (iv) : Using the result of (iii) with $f=f_1$ gives the base case of the induction. For $n \geq 2$, assume that the result hold for $n-1$. Observe then
\begin{align*}
    &1_{(T < \infty )} 1_{(G_t \leq T)} \mathds{1}(t) \mathbb{E}\left[ \prod_{i=1}^n f_i(T_{T,i},X_{T,i}) \mid (X_s)_{s \leq T}, X_{G_t},1_{(\tau_{\{d\}} \leq t)} \right] \\
    &=\mathbb{E}\Bigg[  1_{(T < \infty )} 1_{(G_t \leq T)} \mathds{1}(t) \mathbb{E}[f_n(T_{T,n},X_{T,n}) \mid (X_s)_{s \leq T_{T,n-1}}, X_{G_t},1_{(\tau_{\{d \}} \leq t)}] \times  \\
    & \qquad \quad \prod_{i=1}^{n-1} f_i(T_{T,i},X_{T,i}) \mid (X_s)_{s \leq T}, X_{G_t},1_{(\tau_{\{d\}} \leq t)} \Bigg] \\
    &\leftstackrel{\textnormal{(iii)}}{=}  \mathbb{E}\bigg[ 1_{(T < \infty )} 1_{(G_t \leq T)} \mathds{1}(t)  \Tilde{E}^{X_{T_{T,n-1}}}[f_n(\tau_1,\eta_1)] \prod_{i=1}^{n-1} f_i(T_{T,i},X_{T,i}) \mid (X_s)_{s \leq T},X_{G_t},1_{(\tau_{\{d\}} \leq t)} \bigg] \\
    &= 1_{(T < \infty )} 1_{(G_t \leq T)} \mathds{1}(t) \mathbb{E}\left[ \Tilde{E}^{X_{T_{T,n-1}}}[f_n(\tau_1,\eta_1)] \prod_{i=1}^{n-1} f_i(T_{T,i},X_{T,i})  \mid (X_s)_{s \leq T},X_{G_t},1_{(\tau_{\{d\}} \leq t)} \right]  \\
    &= 1_{(T < \infty )} 1_{(G_t \leq T)} \mathds{1}(t) \Tilde{E}^{X_T}\left[ \prod_{i=1}^{n-1} f_i(\tau_i,\eta_i) \Tilde{E}^{\eta_{n-1}}[f_n(\tau_1,\eta_1) ] \right] 
\end{align*}
where the first equality follows by the tower property and the second follows by (iii) with $f=f_n$ and the $\mathcal{F}^X$-stopping time $T_{T,n-1}$ using also that $(G_t \leq T) \subseteq (G_t \leq T_{T,n-1})$. The last equality follows by the induction hypothesis. We finally see, 
\begin{align*}
    &1_{(T < \infty )} 1_{(G_t \leq T)} \mathds{1}(t) \Tilde{E}^{X_T}\left[ \prod_{i=1}^{n-1} f_i(\tau_i,\eta_i)  \Tilde{E}^{\eta_{n-1}}[f_n(\tau_1,\eta_1) ] \right] \\
    &= 1_{(T < \infty )} 1_{(G_t \leq T)} \mathds{1}(t) \Tilde{E}^{X_T}\left[ \prod_{i=1}^{n} f_i(\tau_i, \eta_i)   \right]  
\end{align*}
by writing out the expectations using the Markov kernels. This concludes the induction and the proof.

\end{proof}

\section{Stochastic interest rate} \label{sec:stochasticInterest}

We consider the extension of the results to models with stochastic interest rates. Assume that $\kappa(t)=\exp\left(\int_{(0,t]} r(v) \diff v\right)$ with $r : \Omega \times \mathbb{R}_+ \mapsto \mathbb{R}$ being a stochastic process. We assume that there exists an equivalent martingale measure for the financial market which we denote $\mathbb{Q}$. Define the time-$t$ \textit{forward interest rate} as 
$$f(t,s) = -\frac{\partial \log P(t,s)}{\partial s}$$ 
where for $0 \leq t \leq s$ we have $P(t,s) = \left. \mathbb{E}^{\mathbb{Q} } \left[ \kappa(t)/\kappa(s) \; \right\vert \mathcal{F}^r_t \right]$ is the price at time $t$ of a zero-coupon bond paying one unit at time $s$. Then $P(t,s) = \exp\left( \int_{(t,s]} f(t,v) \diff v \right)$. We introduce
$$\kappa_u(t) = \exp\left(\int_{(0,t]} r(u,v) \diff v\right)$$
for \[
    r(u,v) := \left\{\begin{array}{lr}
        r(v) , & v \leq u \\
        \vspace{-0.3cm} \\
        f(u,v), & v > u
        \end{array} \right.
  \]
being the realized interest rate before time $u$ and the forward interest rate after time $u$. Note for $t \leq u \leq s$ that 
\begin{align*}
    \mathbb{E}^{\mathbb{Q} }\left[ \frac{\kappa(s)}{\kappa(t)} \: \Big\vert \: \mathcal{F}^r_u\right] &= \frac{\kappa(u)}{\kappa(t)} \mathbb{E}^{\mathbb{Q} }\left[ \frac{\kappa(s)}{\kappa(u)} \: \Big\vert \: \mathcal{F}^r_u\right] \\
    &= \exp\left(\int_{(t,u]} r(v) \diff v\right) \exp\left(\int_{(u,s]} f(u,v) \diff v\right) \\
    &= \frac{\kappa_u(s)}{\kappa_u(t)}.
\end{align*}
The extension of our results to a stochastic interest rate is simple if there is no dependence or conditional dependence between the filtrations $\mathcal{F}^r$ and $\mathcal{F}^\mathcal{Z}$. Redefine the transaction time reserve as 
\begin{align*}
 \mathcal{V}(t) &= \mathbb{E}[\mathcal{P}(t) \mid \mathcal{F}^\mathcal{Z}_t \vee \mathcal{F}^r_t]
\end{align*}
where now $\mathbb{E} = \mathbb{E}^{\mathbb{P} \otimes \mathbb{Q}}$.
By Theorem 5.4 of~\citet{Buchardt.etal:2023a}, we see
\begin{align*}
    \mathcal{V}(t) &= \mathbb{E}\Big[ \int_{[0,\infty)} \frac{\kappa(t)}{\kappa(s)} B(\diff s) \: \big\vert \: \mathcal{F}^\mathcal{Z}_t \vee \mathcal{F}^r_t\Big] \\
    & \quad-\mathbb{E}\bigg[ \sum_{j=1}^J \int_{[0,t]} \frac{\kappa(t)}{\kappa(s)} 1_{(Y^t_{s-}=j)} B_{j,s-U^t_{s-}}(\diff s) +  \sum_{\substack{j,k=1 \\ j \neq k} }^J \int_{[0,t]} \frac{\kappa(t)}{\kappa(s)} b_{jk}(s,U^t_{s-}) N^t_{jk}(\diff s)  \: \big\vert \: \mathcal{F}^\mathcal{Z}_t \vee \mathcal{F}^r_t \bigg].
\end{align*}
Using the independence between $\mathcal{F}^r$ and $\mathcal{F}^\mathcal{Z}$, standard arguments now give 
\begin{align*}
    \mathcal{V}(t) &= \mathbb{E}\Big[ \int_{[0,\infty)} \frac{\kappa_t(t)}{\kappa_t(s)} B(\diff s) \: \big\vert \: \mathcal{F}^\mathcal{Z}_t \vee \mathcal{F}^r_t\Big] \\
    & \quad-\mathbb{E}\bigg[ \sum_{j=1}^J \int_{[0,t]} \frac{\kappa_t(t)}{\kappa_t(s)} 1_{(Y^t_{s-}=j)} B_{j,s-U^t_{s-}}(\diff s) +  \sum_{\substack{j,k=1 \\ j \neq k} }^J \int_{[0,t]} \frac{\kappa_t(t)}{\kappa_t(s)} b_{jk}(s,U^t_{s-}) N^t_{jk}(\diff s)  \: \big\vert \: \mathcal{F}^\mathcal{Z}_t \vee \mathcal{F}^r_t \bigg].
\end{align*}
This expression is identical to the expression for $\mathcal{V}$ in the case of a deterministic interest rate, but where $\kappa$ is replaced with $\kappa_t$. Now since $\mathcal{F}^r$ is completely exogeneous to $\mathcal{F}^\mathcal{Z}$, all the calculations in Section~\ref{sec:reserving} remain valid with $\kappa$ replaced by $\kappa_t$, and hence we conclude that the results presented in Theorem~\ref{thm:CBNRReserve},~\ref{thm:RBNSiReserve}, and~\ref{thm:RBNSrReserve} still hold true if one substitutes $\kappa_t$ for $\kappa$ in all terms.

\begin{remark} (Run-off plots.) \label{rmk:Runoff} \\
Run-off plots are a common graphical tool used to validate the reserves. The intuition is that if the reserves are correctly specified, the reserve should be converted to payments such that the sum of the two stays constant over time, see Figure~\ref{fig:runoff} for an illustration. We show that if interest is handled appropriately, this approach is justified.
\tikzset{every picture/.style={line width=0.75pt}}       
\begin{figure}[H] \label{fig:runoff}
    \centering
    \scalebox{1}{
    \begin{tikzpicture}
    \begin{axis}[
        stack plots=y,
        area style,
        enlarge x limits=false,
        xticklabels={,,},
        yticklabels={,,},
        xlabel={Time},
        ylabel={Capital},
        width=11cm,
        height=5cm,
        tick style={semithick,color=black},
        xtick pos=left,
        ytick pos=left,
        axis lines = middle,
        x label style={at={(axis description cs:1,-0.05)},anchor=north},
        y label style={at={(axis description cs:-0.1,0.85)},anchor=south},
        legend pos=outer north east,
        legend cell align=left,
        axis line style={shorten >=-10pt, shorten <=-10pt}
    ]
        \addplot [fill=black!20] coordinates
            {(0,0) (1,1) (2,2) (3,3) (4,4) (5,5) (6,6)}
                \closedcycle;
        \addplot [fill=black!40] coordinates
            {(0,6) (1,8) (2,8.5) (3,8.25) (4,7.625) (5,6.8125) (6,5.90625)}
                \closedcycle;
        \addplot [fill=black!70] coordinates
            {(0,6) (1,3) (2,1.5) (3,0.75) (4,0.375) (5,0.1875) (6,0.09375)}
                \closedcycle;
        \legend{Payments,RBNS,IBNR};
        \end{axis}
    \end{tikzpicture}}
\caption{Illustration of a run-off plot with origin at the end of the coverage period, where reserves consist of only IBNR and RBNS contributions.}
\end{figure}
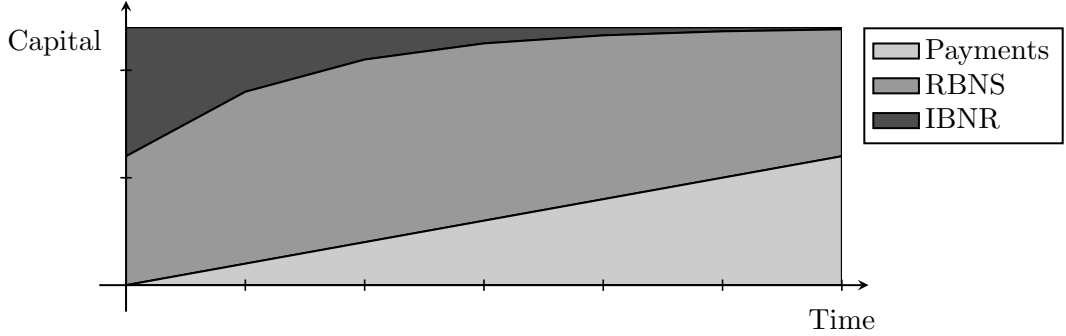

\noindent Assume that reserves are calculated with a fixed frequency (e.g.\ monthly) which we take to be the unit of the time scale. Let $\mathcal{V}^{\textnormal{pc}}(t)= \mathbb{E}[\mathcal{P}(t) \mid \mathcal{F}^\mathcal{Z}_t \vee \mathcal{F}^r_{t-1}]$ be the reserve calculated with the previous interest rate curve. Since increasing the number of policies does not diversify the market risk, we focus on validating the non-financial model elements. For one policy, straightforward calculations give
\begin{align} \label{eq:increment}
    \frac{\kappa_{t-1}(t-1)}{\kappa_{t-1}(t)} \mathcal{V}^{\textnormal{pc}}(t) -    \mathcal{V}(t-1) =-\int_{(t-1,t]} \frac{\kappa_{t-1}(t-1)}{\kappa_{t-1}(s)} \mathcal{B}(\diff s)+\mathbb{E}[\mathcal{P}(t-1) \mid \mathcal{F}^\mathcal{Z}_t \vee \mathcal{F}^r_{t-1}] - \mathcal{V}(t-1).
\end{align}
Hence, the change in the reserves where the interest rate curve is kept fixed at the previous interest rate curve and discounted one time unit is equal to the realized payments discounted to the previous time unit plus a term which has mean zero conditional on $\mathcal{F}^\mathcal{Z}_{t-1} \vee \mathcal{F}^r_{t-1}$. In other words, the latter term has mean zero if the non-financial part of the model is correctly specified no matter the prior financial and non-financial developments. Since this non-financial risk can be diversified away by increasing the size of the portfolio, it becomes negligible when summing Equation~\eqref{eq:increment} over many policies. Hence, the height of the stacked curves of $t \mapsto \mathcal{V}(0)+\sum_{m=1}^t  \kappa_{m-1}(m-1)/\kappa_{m-1}(m) \times \mathcal{V}^{\textnormal{pc}}(m) -    \mathcal{V}(m-1)$ and $t \mapsto \mathcal{B}(0) + \sum_{m=1}^t \int_{(m-1,m]} \kappa_{m-1}(m-1) /\kappa_{m-1}(s) \: \mathcal{B}(\diff s)$ summed over the policies is excepted to stay constant throughout when the number of policies used in the sample is sufficiently large.
\demormk{}
\end{remark}

\section{Estimation} \label{sec:EstimationAppendix}

This section details how to embed the statistical problem of this paper into the one from~\citet{Buchardt.etal:2023b} and provides a more accessible exposition of their estimation procedure applied to the current setting.

\subsection{Statistical model}
 
The disability and reactivation events may both be affected by reporting delays, but the death event is not. The disability reporting delay is $U_{\mathcal{I}} : \Omega \mapsto \mathbb{R}_+$ with $U_{\mathcal{I}} = 1_{(\tau_{\mathcal{I}} < \infty)}(T_{\{2\}}-\tau_{\mathcal{I}})$. Estimating the IBNR-factor $I_i(s,t)$ when $s < \infty$ is equivalent to estimating the disability reporting delay distribution since
\begin{align*}
    I_i(s,t) = \mathbb{P}(\textnormal{CBNR}_t \mid \tau_{\mathcal{I}} = s, Y_{\tau_{\mathcal{I}}} = i) = \mathbb{P}(U_{\mathcal{I}} > t-s \mid \tau_{\mathcal{I}} = s, Y_{\tau_{\mathcal{I}}} = i).
\end{align*}
Let $R : \Omega \mapsto \mathbb{R}_+$ with $R = \inf \{ s \geq 0 : G_s = G_\infty\}$ be the final time where payments are stopped. The reactivation reporting delay $U_r : \Omega \mapsto \mathbb{R}_+$ is $U_r = 1_{(\tau_{\{r\}} < \infty)}(R-\tau_{\{r\}})$. Note that there is no reporting delay when the insurer terminates the running payments, but a jump of $Z^{(1)}$ from state $2$ to state $3$ or state $5$ which triggers backpay may lead to a reactivation reporting delay. Estimation of the reactivation reporting delay distribution is not needed for reserve calculation but is needed for our proposed estimator of the valid time hazards.

Both the disability and reactivation events also have non-trivial adjudications while death events do not. One can calculate the adjudication probability $\mathbb{P}(X_{G_t} = (G_t,Z^{(3)}_t,W_t) \mid \mathcal{F}_t^\mathcal{Z})$
as an absorption probability on the state space $\mathcal{J}^\omega=\{1,2,3,4,5\}$ depicted in Figure~\ref{fig:Adjudication} with a set of $\mathcal{F}^\mathcal{Z}$-predictable transition intensities $\omega_{jk} : \Omega \times \mathbb{R}_+  \mapsto \mathbb{R}_+$ $(j,k \in \mathcal{J}^\omega,j \neq k)$. Disability benefits are awarded if and only if the process is absorbed in state 3 or 5. The multistate model of Figure~\ref{fig:Adjudication} starts each time a disability or reactivation event is reported. A disability adjudication starts in state 1, while a reactivation adjudication starts in state 2. Note that since the adjudication hazards are $\mathcal{F}^\mathcal{Z}$-predictable, they can and will be different when the adjudication pertains to a disability or a reactivation event. The shared notation for the adjudication hazards and the state space is chosen for parsimony.

Let $\sigma(t) = \inf\{ v \geq 0 : W_v > W_t \}$ be the next time where disability benefits are awarded after time $t$. The multistate model of Figure~\ref{fig:Adjudication} corresponds to modeling $s \mapsto \mathcal{Z}_s$ on the interval $\left(t,\sigma(t)\right]$ using the self-exciting filtration $\mathcal{F}^\mathcal{Z}$ except that the mark at time $\sigma(t)$ is the reduced mark $(Z_{\sigma(t)},1_{(W_{\sigma(t)} > W_t)})$ as opposed to the full mark $\mathcal{Z}_{\sigma(t)}$. When the object of interest is the adjudication probability, there is no need to model the full mark, which would entail modeling how much backpay was awarded, and as a consequence also whether the insured receives running benefits after the award or not.

\begin{figure}[H] \label{fig:Adjudication}
\centering
\scalebox{0.8}{
   \begin{tikzpicture}[node distance=8em, auto]
	\node[punkt] (g) {$3$};
	\node[punkt, right=2cm of g] (i1) {$1$};
        \node[punkt, right=2cm of i1] (i2) {$2$};
        \node[, right=0.8cm of i1] (a) {};
        \node[punkt, below=1.5cm of a] (a2) {$4$};
        \node[punkt, left=1.5cm of a2] (a3) {$5$};
         \path (i1) edge [pil] node [above=0.155cm]  {$\omega_{13}$} (g)
	;
        \path (i1) edge [pil, bend right=20] node [below=0.15cm]  {$\omega_{12}$} (i2)
	;
         \path (i2) edge [pil, bend right=20] node [above=0.1cm]  {$\omega_{21}$} (i1)
	;
         \path (i1) edge [pil] node [left=0.15cm]  {$\omega_{14}$} (a2)
	;
         \path (i2) edge [pil] node [right=0.15cm]  {$\omega_{24}$} (a2)
	;
          \path (i1) edge [pil] node [right=0.15cm]  {$\omega_{15}$} (a3)
	;
    \end{tikzpicture}
}
\caption{Multistate model for adjudications. Active report is 1, inactive report is 2, awarded is 3, dead without award is 4, and dead with award is 5.}
\end{figure}
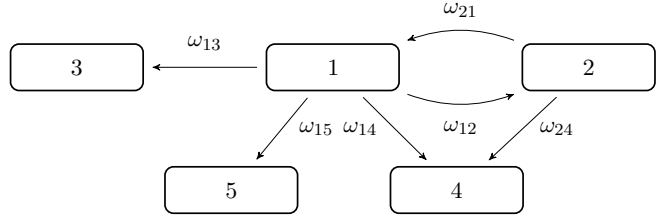

\noindent Let the observation window be $[0,\eta]$ and let the valid time process $X$ be subject to independent left-truncation and right-censoring, hence being observed on a random interval $(V,C] \subseteq [0,\eta]$ even in the absence of reporting delays and adjudication processes.  Events that occurred in $(V,C]$ can be reported and adjudicated up until time $\eta$. Baseline covariates, meaning covariates that are known at time $0$ in the valid time and transaction time filtrations, can easily be incorporated here and in the rest of the paper by conditioning on them throughout. Define the parameter spaces $\mathbb{G}$, $\mathbb{F}$, and $\Theta$ being subsets of Euclidean spaces. The statistical model is denoted $\mathcal{P} = \{ P_{(g,f,\theta)} : (g,f,\theta) \in \mathbb{G} \times \mathbb{F} \times \Theta\}$. The adjudication intensities under $P_{(g,f,\theta)}$ only depend on $(g,f,\theta)$ through $g$ and are denoted $t \mapsto \omega_{jk}(t;g)$. The reporting delay distributions under $P_{(g,f,\theta)}$ only depend on $(g,f,\theta)$ through $f$ and are denoted $t \mapsto \mathbb{P}(U_{\mathcal{I}} \leq t \mid \tau_{\mathcal{I}}, Y_{\tau_{\mathcal{I}}}, f)$ and $t \mapsto \mathbb{P}(U_r \leq t \mid \tau_{\mathcal{I}}, Y_{\tau_{\mathcal{I}}},  \tau_{\{r\}}, f)$. The valid time hazards under $P_{(g,f,\theta)}$ only depend on $(g,f,\theta)$ through $\theta$ and are denoted $(t,u) \mapsto \mu_{jk}(t,u;\theta)$. We now describe the proposed estimators.

\subsection{Adjudication probabilities}

 Let $N^{\omega}_{jk} : \Omega \times \mathbb{R}_+ \mapsto \mathbb{N}_0$ denote the counting processes on $\mathcal{J}^\omega$ where jumps due to starting a new adjudication are excluded. The objective function is the log-likelihood, which for one insured is 
$$\ell_\omega(g) = \sum_{\substack{j,k \in \mathcal{J}^\omega \\ j \neq k}} \int_{(0,\eta]} \log \{\omega_{jk}(t;g)\} \: N^\omega_{jk}(\diff t)-\int_{(0,\eta]} \omega_{jk}(t;g) \diff t.$$ 
For $n$ i.i.d.\ insured, the log-likelihood which we denote $\ell_\omega^{(n)}(g)$ is a sum over $n$ such terms.  The estimator is the maximum likelihood estimator $\hat{g}_n = \argmax_{g \in \mathbb{G}} \ell_\omega^{(n)}(g)$. The argmax can be found using existing \texttt{glm} software packages, see Section D in the supplementary material of~\citet{Buchardt.etal:2023b} for details.

\subsection{Reporting delay distribution}

The reporting delays are right-truncated since only jumps reported before $\eta$ are part of the sample. We first discuss disability reporting delays and subsequently reactivation reporting delays. Since we for disability reporting delays condition on $\tau_{\mathcal{I}}$, only disability claims that will be awarded should be included in the estimation of the reporting delay distribution for disability events. We accommodate this by weighting the relevant objective function with the adjudication probability. Assume that $U_{\mathcal{I}}$ given $(\tau_{\mathcal{I}},Y_{\tau_{\mathcal{I}}})$ has density with respect to a common reference measure across $\mathcal{P}$. We informally denote this density evaluated at a point $u$ by $\frac{\diff }{\diff u}\mathbb{P}(U_{\mathcal{I}} \leq u \mid \tau_{\mathcal{I}}, Y_{\tau_{\mathcal{I}}}, f)$. For one subject, the objective function for an observed disability reporting delay $u$ is 
\begin{align*}
    \ell_{U_\mathcal{I}}(f;\hat{g}_n) &= \mathbb{P}(X_{Z^{(2)}_\eta} = (Z^{(2)}_\eta,Z^{(3)}_\eta,0) \mid \mathcal{F}^\mathcal{Z}_\eta, \hat{g}_n) \times \log \bigg\{ \frac{ \frac{\diff }{\diff u}\mathbb{P}(U_\mathcal{I} \leq u \mid \tau_{\mathcal{I}}=Z^{(2)}_\eta, Y_{\tau_{\mathcal{I}}}=Z^{(3)}_\eta, f)}{\mathbb{P}(U_\mathcal{I} \leq \eta-\tau_{\mathcal{I}} \mid \tau_{\mathcal{I}}=Z^{(2)}_\eta, Y_{\tau_{\mathcal{I}}}=Z^{(3)}_\eta, f)}  \bigg\}.
\end{align*} 
Here the first term is the adjudication probability $\mathbb{P}(X_{G_\eta} = (G_\eta,Z^{(3)}_\eta,0) \mid \mathcal{F}^\mathcal{Z}_\eta, \hat{g}_n)$ on $(\textnormal{RBNSi}_\eta)$ and is $1$ if the disability has been awarded. The objective function for $n$ i.i.d.\ subjects is a sum of $n$ such terms and we denote this by $\ell_{U_{\mathcal{I}}}^{(n)}(f;\hat{g}_n)$. 

 Similarly, assume $U_r$ given $(\tau_{\mathcal{I}},Y_{\tau_{\mathcal{I}}},\tau_{\{r\}})$ has density with respect to a common reference measure across $\mathcal{P}$ which we informally denote $\frac{\diff}{\diff w} \mathbb{P}(U_r \leq w \mid \tau_{\mathcal{I}}, Y_{\tau_{\mathcal{I}}}, \tau_{\{r\}}, f)$. For one subject, the objective function for an observed reactivation reporting delay $w$ is 
\begin{align*}
    \ell_{U_r}(f;\hat{g}_n)&= (1-\mathbb{P}(X_{G_\eta} = (G_\eta,Z^{(3)}_\eta,W_\eta) \mid \mathcal{F}_\eta^\mathcal{Z},\hat{g}_n)) \\
    & \quad \times \log \bigg\{ \frac{ \frac{\diff}{\diff w} \mathbb{P}(U_r \leq w \mid \tau_{\mathcal{I}}=Z^{(2)}_\eta, Y_{\tau_{\mathcal{I}}}=Z^{(3)}_\eta, \tau_{\{r\}}=G_\eta, f)}{\mathbb{P}(U_r \leq \eta - \tau_{\{r\}} \mid  \tau_{\mathcal{I}}=Z^{(2)}_\eta, Y_{\tau_{\mathcal{I}}}=Z^{(3)}_\eta, \tau_{\{r\}}=G_\eta, f)} \bigg\}.
\end{align*}
Let the objective function for $n$ i.i.d.\ subjects be denoted $\ell_{U_r}^{(n)}(f;\hat{g}_n)$. The estimator is then $\hat{f}_n = \argmax_{f \in \mathbb{F}} \ell_{U_{\mathcal{I}}}^{(n)}(f;\hat{g}_n)+\ell_{U_r}^{(n)}(f;\hat{g}_n)$. 

\begin{remark} (Chain ladder estimator.) \\
    The reporting delay distribution can alternatively be estimated using chain ladder, see for example Section 5 in~\citet{Bucher:Rosenstock:2024}, but the large-sample properties are then not a special case of~\citet{Buchardt.etal:2023b} since they consider parametric estimators. Note that chain ladder in this case does not estimate a reserve but rather an element of the individual reserve, namely the IBNR-factor. As seen in Section~\ref{sec:reserving}, an assumption like Assumption~\ref{assumption:Independence} is needed for the IBNR reserve to approximately decompose into an IBNR-factor-adjusted frequency multiplied with a classic valid time disability reserve as the associated claim size.
    \demormk
\end{remark}

\subsection{Valid time hazards} \label{sec:estHaz}
  
Make a partition of the observation window $0=t_0<t_1<\dots<t_L=\eta$ and let $O_{jk}(\ell) = N^\eta_{jk}(t_{\ell+1})-N^\eta_{jk}(t_{\ell})$ and $E_j(\ell) = \int_{(t_\ell, t_{\ell + 1}]} 1_{(Y^\eta_s = j)} \diff s$ $(\ell=0,\dots,L-1)$ be the occurrences and exposures based on $H^\eta$ for a single insured. To describe how the duration process is affected by accepting or rejecting a disability claim, we introduce the auxiliary durations $U_j(\ell) = U^\eta_{t_\ell}$ for $j \in \mathcal{J}$. Note that there is initially no dependence on $j$.

If there is an unadjudicated disability at time $\eta$, the occurrences, exposures, and durations are modified:
\begin{align*}
    O_{ai}(\ell) &\leftarrow \mathbb{P}(X_{G_\eta} = (G_\eta,i,0) \mid \mathcal{F}^\mathcal{Z}_\eta, \hat{g}_n),& G_\eta \in (t_\ell,t_{\ell + 1}], \\  
    E_{a}(\ell) &\leftarrow (t_{\ell+1}-t_\ell) \times (1-\mathbb{P}(X_{G_\eta} = (G_\eta,i,0) \mid \mathcal{F}^\mathcal{Z}_\eta, \hat{g}_n)), & G_\eta \in (0,t_\ell], \\
    E_{i}(\ell) &\leftarrow (t_{\ell+1}-t_\ell) \times \mathbb{P}(X_{G_\eta} = (G_\eta,i,0) \mid \mathcal{F}^\mathcal{Z}_\eta, \hat{g}_n), & G_\eta \in (0,t_\ell], \\
    U_i(\ell) &\leftarrow t_\ell - G_\eta, & G_\eta \in (0,t_\ell],
\end{align*}
for $i = Z^{(3)}_\eta$. If there is an unadjudicated reactivation at time $\eta$, the occurrences, exposures, and durations are modified:
\begin{align*}
    O_{ir}(\ell) &\leftarrow (1-\mathbb{P}(X_{G_\eta} = (G_\eta,i,W_\eta) \mid \mathcal{F}_\eta^\mathcal{Z},\hat{g}_n)),& G_\eta \in (t_\ell,t_{\ell + 1}],\\ 
    E_{i}(\ell) &\leftarrow (t_{\ell+1}-t_\ell) \times \mathbb{P}(X_{G_\eta} = (G_\eta,i,W_\eta) \mid \mathcal{F}_\eta^\mathcal{Z},\hat{g}_n), & G_\eta \in (0,t_\ell],\\  
    E_{r}(\ell) &\leftarrow (t_{\ell+1}-t_\ell) \times (1-\mathbb{P}(X_{G_\eta} = (G_\eta,i,W_\eta) \mid \mathcal{F}_\eta^\mathcal{Z},\hat{g}_n)), & G_\eta \in (0,t_\ell], \\
    U_i(\ell) &\leftarrow t_\ell - Z^{(2)}_\eta, & G_\eta \in (0,t_\ell],
\end{align*}
for $i = Z^{(3)}_\eta$. Finally, in all cases, the exposures are modified with the reporting delay distribution:
\begin{align*}
    E_{ai}(\ell) &\leftarrow E_a(\ell) \times \mathbb{P}(U_\mathcal{I} \leq \eta-\tau_{\mathcal{I}} \mid \tau_{\mathcal{I}}=t_\ell, Y_{\tau_{\mathcal{I}}}=i, \hat{f}_n), \\
    E_{ir}(\ell) &\leftarrow E_i(\ell) \times \mathbb{P}(U_r \leq \eta-\tau_{\{r\}} \mid \tau_{\mathcal{I}}=Z^{(2)}_\eta, Y_{\tau_{\mathcal{I}}}=i, \tau_{\{r\}} = t_\ell, \hat{f}_n),
\end{align*}
for all $i \in \mathcal{I}$. For the remaining transitions set $E_{jk}(\ell) \leftarrow E_j(\ell)$.

The objective function $\ell(\theta;\hat{g}_n,\hat{f}_n)$ is the log-likelihood resulting from assuming that $(O_{jk}(\ell))_{j,k,\ell}$ are independent Poisson distributed random variables with mean values $(\mu_{jk}(t_\ell, U_{j}(\ell);\theta)E_{jk}(\ell))_{j,k,\ell}$. Let the objective function for $n$ i.i.d.\ subjects be denoted $\ell^{(n)}(\theta;\hat{g}_n,\hat{f}_n)$. The estimator is $\hat{\theta}_n = \argmax_{\theta \in \Theta} \ell^{(n)}(\theta;\hat{g}_n,\hat{f}_n)$, which corresponds to the Poisson approximation from~\citet{Buchardt.etal:2023b}. The argmax can be found using existing \texttt{glm} software packages when the modified occurrences, exposures, and durations have been constructed. Note that if two occurrences have the same mean value, these occurrences and their corresponding exposures can be summed without changing the objective function. This aggregation may save memory and speed up computations.

\subsection{Asymptotic properties} \label{subsec:Asymp}

The Poisson approximation introduces bias that does not vanish asymptotically, but which is small when the hazards and reporting delays are small, see Section 3.3 of~\citet{Buchardt.etal:2023b} or Section B.2 of their supplementary material. The approximation error for the disability hazard is expected to be negligible since the hazard appears to be smaller than $10^{-2}$ by some margin. There is also a small approximation bias coming from implementing the Poisson approximation via occurrences and exposures, see Section D in the supplementary material of~\citet{Buchardt.etal:2023b} for details. Denote by $B_n = \hat{\theta}_n-\hat{\theta}^{\textnormal{full}}_n$ the approximation bias, where $\hat{\theta}^{\textnormal{full}}_n$ is the estimator based on the full procedure. We use $\overset{\textnormal{a}}{\sim}$ to denote asymptotic distribution. The asymptotic properties of the estimators are given in Proposition~\ref{prop:asymp}.

\begin{proposition} \label{prop:asymp}
     Under Assumptions 1-3 and 5-8 from~\citet{Buchardt.etal:2023b}, which are standard integrability and smoothness conditions,
     \begin{align*}
         (\hat{g}_n,\hat{f}_n,\hat{\theta}_n) \overset{\textnormal{a}}{\sim} \mathcal{N}( (0,0,B_n), \Sigma ) 
     \end{align*}
     for a non-singular covariance matrix $\Sigma$.
\end{proposition}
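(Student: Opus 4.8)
The plan is to recognize the estimation problem of this section as a special case of the framework of~\citet{Buchardt.etal:2023b} and to read the result off from their Theorem~1. The first and most laborious step is to set up an explicit dictionary between the two setups. The restriction imposed at the start of Section~\ref{sec:estimation}, namely that $Z^{(2)}$ may increase only once, is exactly what makes the hierarchical valid time model of Figure~\ref{fig:Ystatespace} fit their template: there is a single disability time $\tau_{\{i\}}$ and a single reactivation time $\tau_{\{r\}}$, each carrying its own reporting delay ($U_i$, respectively $U_r$) and its own adjudication governed by the multistate model of Figure~\ref{fig:Adjudication}, whereas death transitions are reported without delay and are not adjudicated. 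Independent left-truncation at $V$ and right-censoring at $C$, with reports and adjudications still arriving up to the horizon $\eta$, is precisely the observation scheme assumed in~\citet{Buchardt.etal:2023b}. One then checks that $\ell^{(n)}_\omega$, the weighted right-truncated likelihoods $\ell^{(n)}_{U_i}+\ell^{(n)}_{U_r}$, and the reporting-delay-adjusted Poisson log-likelihood $\ell^{(n)}$ coincide with the objective functions underlying their three-step estimator, with $\hat{\theta}_n$ matching their Poisson approximation rather than the full estimator.

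With the correspondence established, the second step is to translate Assumptions 1--3 and 5--8 of~\citet{Buchardt.etal:2023b} into the stated integrability and smoothness requirements on the maps $g \mapsto \omega_{jk}(\,\cdot\,;g)$, $f \mapsto \mathbb{P}(U_i \leq \,\cdot\, \mid \tau_{\{i\}}=s,f)$ and $f \mapsto \mathbb{P}(U_r \leq \,\cdot\, \mid \tau_{\{i\}}=s,\tau_{\{r\}}=v,f)$, and $\theta \mapsto \mu_{jk}(\,\cdot\,,\cdot\,;\theta)$, together with compactness and interior-point conditions on $\mathbb{G}$, $\mathbb{F}$, $\Theta$; these are the usual conditions ensuring consistency, a locally quadratic expansion of each objective function around the relevant maximizer, and a central limit theorem for the associated score. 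Theorem~1 of~\citet{Buchardt.etal:2023b} then delivers a joint asymptotically linear representation of $(\hat{g}_n,\hat{f}_n,\hat{\theta}_n)$. Since $\hat{f}_n$ uses the plug-in $\hat{g}_n$ and $\hat{\theta}_n$ uses the plug-ins $(\hat{g}_n,\hat{f}_n)$, the stacked influence function is block lower triangular: the $\hat{g}_n$-block is its own score-based influence function, the $\hat{f}_n$-block acquires a correction term linear in the $\hat{g}_n$-influence function arising from differentiating $\ell^{(n)}_{U_i}+\ell^{(n)}_{U_r}$ in $g$ through the adjudication-probability weights, and the $\hat{\theta}_n$-block acquires corrections linear in both. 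Averaging the i.i.d.\ influence contributions and applying the multivariate central limit theorem yields the asserted joint normality, with $\Sigma$ the covariance of the stacked influence function.

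The third step settles the centering and the non-degeneracy. The population objective functions for $\hat{g}_n$ and $\hat{f}_n$ are maximized at the true parameters, so these two components are asymptotically centered at the truth, which after the re-centering in the statement accounts for the first two zero coordinates of the mean vector. The estimator $\hat{\theta}_n$ instead maximizes a Poisson-type surrogate whose population maximizer differs from that of the full procedure; this is exactly the non-vanishing approximation bias $B_n = \hat{\theta}_n - \hat{\theta}^{\textnormal{full}}_n$ appearing in the third coordinate, whose order of magnitude is controlled as in Section~3.3 of~\citet{Buchardt.etal:2023b}. Non-singularity of $\Sigma$ comes from the identifiability and non-degeneracy content of Assumptions 1--3 and 5--8, which guarantee that each score has a non-singular asymptotic variance and that the plug-in corrections do not introduce a linear dependence among the three blocks.

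The main obstacle is the bookkeeping of the first step: checking in detail that our disability model is genuinely an instance of their general construction, in particular that weighting the reporting-delay likelihoods by the estimated adjudication probabilities $\mathbb{P}(X_s=(s,i,0)\mid\mathcal{F}^\mathcal{Z}_\eta,\hat{g}_n)$ and $\mathbb{P}(X_v=(v,r,0)\mid\mathcal{F}^\mathcal{Z}_\eta,\hat{g}_n)$ reproduces their estimator for the delay distribution restricted to ultimately confirmed events, and that the mark reduction at $\sigma(t)$ used to define the adjudication model discards no information entering the scores. Once this has been carried out, the proposition is a direct corollary of~\citet{Buchardt.etal:2023b}.
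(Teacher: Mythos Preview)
Your proposal is correct and follows the same approach as the paper, which simply states that the result follows directly from Theorem~1 of \citet{Buchardt.etal:2023b}. You have elaborated the verification that the present setup is a special case of their framework, but the core argument is identical.
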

\begin{proof}
    This follows directly from Theorem 1 of~\citet{Buchardt.etal:2023b}.
\end{proof}

\end{appendices}

\end{document}